\newtheorem{proposition}{Proposition}
\newtheorem{definition}{Definition}
\newcommand{\mc}{\mathcal}
\newcommand{\mbb}{\mathbb}
\newcommand{\mr}{\mathrm}
\newcommand{\pma}[1]{\begin{pmatrix} #1 \end{pmatrix}}
\newcommand{\psm}[1]{\begin{psmallmatrix} #1 \end{psmallmatrix}}
\newcommand{\tr}{\mathrm{Tr}}
\newcommand{\blue}[1]{{\color{blue} #1}}
\newcommand{\chicago}{Pritzker School of Molecular Engineering, The University of Chicago, Illinois 60637, USA}
\let\oldacl\addcontentsline
\renewcommand{\addcontentsline}[3]{}
\begin{document}

\title{Error-structure-tailored early fault-tolerant quantum computing}

\begin{abstract}
Fault tolerance is widely regarded as indispensable for achieving scalable and reliable quantum information processing. 
However, the spacetime overhead required for fault-tolerant quantum computation remains prohibitively large. 
A critical challenge arises in many quantum algorithms with Clifford + $\varphi$ compiling, where logical rotation gates $R_{Z_L}(\varphi)$ serve as essential components. 
The Eastin-Knill theorem fundamentally limits these operations, preventing their transversal implementation in quantum error correction codes and necessitating resource-intensive workarounds through $T$-gate compilation combined with magic state distillation and injection. 
In this work, we consider error-structure-tailored fault tolerance, where fault-tolerance conditions are analyzed by combining perturbative analysis of realistic dissipative noise processes with the structural properties of stabilizer codes.
Based on this framework, we design a 1-fault-tolerant continuous-angle rotation gate $R_{Z_L}(\varphi)$ in stabilizer codes, implemented via dispersive-coupling Hamiltonians.
Our approach could circumvent the need for $T$-gate compilation and distillation, offering a hardware-efficient solution that maintains simplicity, minimizes physical footprint, and requires only nearest-neighbor interactions. 
Integrating with recent small-angle-state preparation techniques, we can suppress the gate error to $91|\varphi| \cdot p^2$ for small rotation angles $|\varphi|<0.2$ rad (where $p$ denotes the physical error rate). 
For current achievable hardware parameters ($p=1\cdot 10^{-3}$), this enables reliable execution of over $10^7$ small-angle rotations when $|\varphi|\approx 10^{-3}$, meeting the requirements of many near-term quantum applications.
We estimate that, compared to the 15-to-1 magic state distillation and magic state cultivation approaches, our method reduces spacetime resource costs by factors of $1337.5\times$ and $43.6\times$, respectively, for a Heisenberg Hamiltonian simulation task under realistic hardware assumptions.
\end{abstract}
\date{\today}
            

\author{Pei Zeng}  
\email{qubitpei@gmail.com}
\affiliation{\chicago}

\author{Guo Zheng}
\affiliation{\chicago}

\author{Qian Xu}
\affiliation{\chicago}
\affiliation{Institute for Quantum Information and Matter, Caltech, Pasadena, California, USA}
\affiliation{Walter Burke Institute for Theoretical Physics, Caltech, Pasadena, California, USA}

\author{Liang Jiang}
\email{liang.jiang@uchicago.edu}
\affiliation{\chicago}

\maketitle

\section{Introduction} \label{sec:intro}

The goal of building quantum computer is to provide unique and useful applications compared to classical counterparts.
While recent experimental advances~\cite{hugginsUnbiasingFermionicQuantum2022,morvanPhaseTransitionsRandom2024,liu2025robustquantumcomputationaladvantage} have demonstrated that certain computational tasks can be executed on noisy intermediate-size quantum (NISQ) devices without quantum error correction (QEC), theoretical studies~\cite{aharonov2023polynomial,schuster2024polynomial} suggest that the scalability of the noisy devices face fundamental limitation. 
In contrast, fully fault-tolerant (FT) quantum computers, which are expected to provide unique advantages in various applications such as Hamiltonian simulation and factoring, require the gate resources which are prohibitively high~\cite{lee2021evenmore,regev2025efficient}, making them unlikely to be available in the near future.
As a transition stage from NISQ to fully FT quantum computers, early FT devices with limited error suppression are attracting increasing attention~\cite{karabarwa2024early,lin2022heisenberg}. 
Well-designed error correction techniques and quantum algorithms may enable such early fault-tolerant devices to surpass classical simulators and become both practical and impactful in the near future.

In many quantum algorithms designed for the early-stage FT quantum computers~\cite{lin2022heisenberg,dong2022ground,zeng2025simple,ding2023simultaneous,yu2025lindbladian}, logical-$Z$ rotation gates $R_{Z_L}(\varphi):= e^{i\varphi Z_L}$ play a crucial role, as they enable the efficient incorporation of classical analog information into quantum circuits. 
When considering the fully fault-tolerant implementation of $R_{Z_L}(\varphi)$, however, the rotation gate becomes notoriously resource-intensive. 
Due to Eastin-Knill's theorem~\cite{eastin2009restrictions}, the continuous group of the rotation gates $R_{Z_L}(\varphi)$ cannot be implemented transversally on quantum error correction codes.
As a result, a common practice to implement $R_{Z_L}(\varphi)$ is to first compile it into a sequence of tens of $T$ gates and then implement them by magic state distillation~\cite{Litinski2019magic} and injection. 
We remark that, even with state-of-the-art ancillary-based compilation techniques~\cite{bocharov2015efficient}, achieving a compilation infidelity of $10^{-8}$ requires nearly 40 $T$ gates. Meanwhile, the spacetime resource cost of magic state distillation scales with tens to hundreds of $d^3$, where $d$ is the code distance~\cite{litinski2019gameofsurfacecodes}.

In a recent study on the early FT architectures, Akahoshi et al.~\cite{akahoshi2024partially} proposed an interesting partial FT design. In this approach, Clifford gates are implemented in a fully FT manner, while rotation gates are performed non-fault-tolerantly, introducing a first-order error of $\frac{2}{15}p$, where $p$ is the circuit noise level. When the circuit noise $p$ is suppressed to a low value of $10^{-4}$ where higher-order error becomes negligible, this design enables the realization of quantum circuits surpassing the capabilities of classical simulators. However, the presence of first-order errors imposes a significant limitation on the scalability of this partial FT architecture.

Here, we consider error-structure-tailored fault tolerance, where the FT is defined by carefully examining the noisy dissipative process during the gate implementation of the stabilizer codes. Similar ideas were formerly explored in the gate design of bosonic quantum error correction codes~\cite{Rosenblum2018fault,ma2020path,Xu2024FaultTolerant}. 
Here, we show how to leverage the error structure in qubits and the error detectability of the stabilizer codes \emph{at the same time} to implement the continuous group of $Z$-rotation gates $R_{Z_L}(\varphi)$. 
As a concrete example, we design 1-fault-tolerant $Z$-rotation gates on the $[[4,1,1,2]]$ subsystem code with 4 physical qubits, 1 logical qubit, 1 gauge qubit and distance of 2, based on the dispersive coupling between two data qubits. As an alternative, the 1-fault-tolerant $Z$-rotation gate can also be implemented by a dispersive coupling to a 3-level $g$-$f$ qubit~\cite{ma2020path,kubica2023erasure} or a dual-rail qubit~\cite{teohDualrailEncodingSuperconducting2023b,kubica2023erasure}. 
We then utilize the 1-FT error-structure-tailored fault-tolerant rotation gate $R_{Z_L}(\varphi)$ to design two schemes: the expansion scheme and the projection scheme, to fault-tolerantly prepare the analog rotation state $\ket{r_\varphi}_L := R_{Z_L}(\varphi) \ket{+}_L$ on a generic stabilizer code in a resource-efficient way. Together with the rotation-state injection method~\cite{litinski2019gameofsurfacecodes}, this allows us to perform analog rotation gate $R_{Z_L}$ without the $T$-gate compiling, thus remarkably save the spacetime cost for the early fault-tolerant quantum computers. The expansion scheme is suitable for arbitrary rotation angle $\varphi$, which can prepare the rotation state $\ket{r_\varphi}_L$ with the error of about $70\cdot p^2$ with a high successful probability over $70\%$ where $p$ is the physical error rate of each gate; the projection scheme is suitable for the preparation of $\ket{r_\varphi}_L$ with small $\varphi$, with the trace distance of about $15|\varphi|\cdot p^2$. 
Specifically, considering the current hardware error rate of $1\cdot 10^{-3}$ and the angle size $|\varphi|=1\cdot 10^{-3}$, the successful probability of the projection scheme is $18.7\%$ and $5.4\%$ on a surface code with distance $d=12$ and $d=18$, respectively.

\begin{figure}[htbp]
    \centering
    \includegraphics[width=0.48\textwidth]{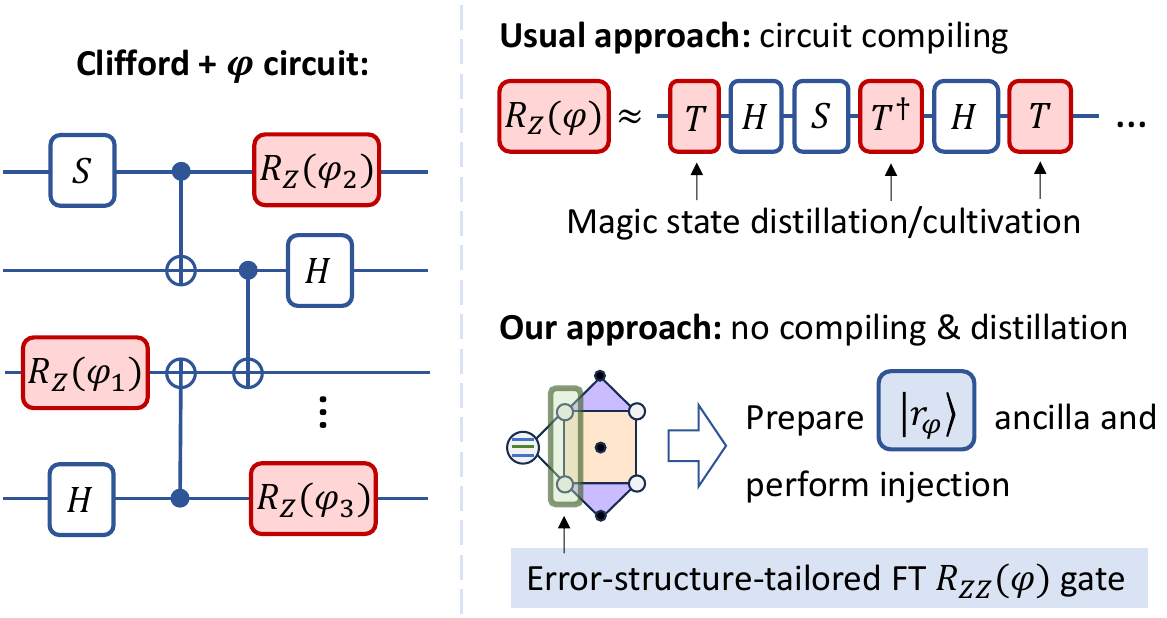}
    \caption{Illustration of the main idea. For the Clifford+$\varphi$ circuits, unlike the usual approach to compile the $R_Z(\varphi)$ gates to many $T$ gates and then perform magic state distillation or cultivation, we directly prepare the $\ket{r_\varphi}:=R_{Z}(\varphi)\ket{+}$ ancilla based on the error-structure-tailored FT analysis and perform injection.}
    \label{fig:idea}
\end{figure}

By carefully taking the probabilistic rotation state injection and coherent error cancellation into consideration, we have shown that the rotation gate based on the projection scheme can be performed with a dephasing error below $91 |\varphi|\cdot p^2$.
Combined with the canonical error mitigation techniques~\cite{temme2017error,endo2018practical}, this allows us to perform about $1.1 \cdot 10^7$ small rotation gate with the current achievable physical error rate $p=1\cdot 10^{-3}$ and the rotation angle $|\varphi|\approx 10^{-3}$. We remark that, this rotation angle value is common in many early FT algorithms especially the ones related to Trotter-based Hamiltonian simulation~\cite{childs2021theory} and quantum phase estimation~\cite{lin2022heisenberg}. More concretely, we have shown that for a $N$-site Heisenberg model with normalized disordered magnetic field~\cite{childs2021theory}, our rotation gate design enables a simulation time of $T_{\mr{max}}\simeq 31.4$ with $N=100$ and $p=1\cdot 10^{-3}$, which is much longer than what can be demonstrated on the NISQ devices.
To demonstrate the advantage of our Clifford+$\varphi$ approach, we have shown a reduction in the spacetime resource requirements by factors of 1337.5$\times$ and 43.6$\times$, respectively, to the 15-to-1 magic-state-distillation approach~\cite{Litinski2019magic} and the recent magic-state-cultivation approach~\cite{gidney2024cultivation}, for the $50$-spin Trotter-based Heisenberg Hamiltonian simulation tasks.

\section{Error-structure-tailored fault tolerance} \label{sec:FTdef}

We first review the canonical definition of the FT gadgets~\cite{aliferis2005quantum,gottesman2009intro}. 
The FT gadgets are associated to a QEC code, which is usually a $[[n,k,d]]$ stabilizer code with $n$ physical qubits and $k$ logical qubits. In this work, we focus on the case with $k=1$. A stabilizer code is defined by a set of stabilizer generators $\{S_i\}_{i=0}^{n-k-1}$ which span an abliean subgroup $\mc{S}$ of the Pauli group $\mbb{P}^{n}$ without $-I$. The distance $d$ is defined to be the smallest weight of a non-identity logical operator, i.e., $d=\min\{ |l| \,|\, l\in \mc{C}(\mc{S})\backslash \mc{S}\}$ where $\mc{C}(\mc{S})$ is the centralizer of the stabilizer group $\mc{S}$. 

To begin with, we first introduce the concept of error filter and ideal decoder. Denote the ideal code space projector as
\begin{equation}
\Pi_0 = \prod_{i=0}^{n-k-1} \frac{1}{2} ( I + S_i ).
\end{equation}
Now, we extend the projector definition by considering the set of Pauli operators with weight less or equal to $r$, $\mbb{P}_r := \{P\in\mbb{P}^n | |P|\leq r \}$. Here, $r$ satisfy $r\leq t:= \lceil (d-1)/2 \rceil$. 
The $r$-filter with $r\leq t:= \lceil (d-1)/2 \rceil$ is an extended code space projector with at most $r$ errors,
\begin{equation} \label{eq:Pir}
\Pi_{(r)} = \sum_{P\in \mbb{P}_{r,\mc{S}} } P\, \Pi_0 P.
\end{equation}
Here, $\mbb{P}_{r,\mc{S}} = \mbb{P}_r/\sim$ is a quotient set of $\mbb{P}_r$ by the equivalence relationship of $P_1 \sim P_2: P_1 P_2 \in \mc{C}(\mc{S})$. We choose $P$ in $\mbb{P}_{r,\mc{S}}$ to avoid unwanted double counting since $P_1 \Pi_0 P_1$ and $P_2 \Pi_0 P_2$ will correspond to the same space if $P_1 P_2 \in \mc{C}(\mc{S})$.

\begin{figure}[htbp]
    \centering
    \includegraphics[width=0.4\textwidth]{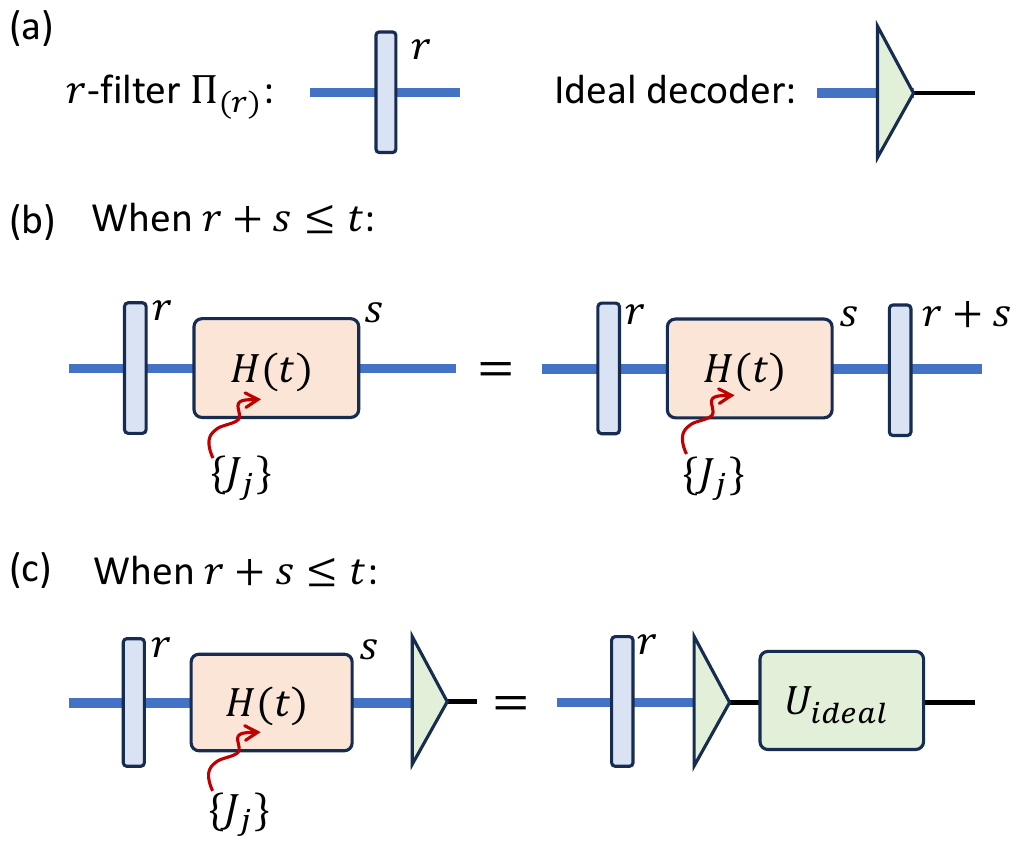}
    \caption{(a) $r$-filter is a projector $\Pi_{(r)}$ defined in \autoref{eq:Pir} illustrated as a blue rectangle. Ideal decoder is illustrated as a green triangle followed by decoded logical block (black line). (b,c) The error-structure-tailored FT gadget requirements. We consider the noisy gate implementation as a dissipative process described by Lindblad master equation in \autoref{eq:Lindblad}. We consider the Dyson expansion of the dynamics, and truncate to the $s$ order to consider the fault tolerance.}
    \label{fig:FTdef}
\end{figure}

The ideal decoder is a noiseless gadget where we first perform a syndrome measurement by the projectors,
\begin{equation}
\Pi_{\vec{s}} = \prod_{i=0}^{n-k-1} \frac{1}{2} \left( I + (-1)^{s_i} S_i \right),
\end{equation}
where $\vec{s} = [s_0, ..., s_{n-k-1}]$ is the $(n-k)$-bit syndrome measurement outcome. Based on $\vec{s}$, we then perform an error recovery Pauli operator $P(\vec{s})\in\mbb{P}^n$ on the system. The Knill-Laflemme condition ensure that, if a weight-$r$ occurs on an ideal code state with $r\leq t$, then we can recovery the code state after the ideal decoding.

Without loss of generality, we focus on the definiton of the FT single-code-block gate gadget. The FT state preparation, measurement, QEC and multi-code-block gate can be defined in a similar manner~\cite{gottesman2009intro}.

\begin{definition}[$t$-FT gate~\cite{gottesman2009intro}] \label{def:FTgate_canonical}
The single-code-block gate is $t$-FT if it satisfies,
\begin{enumerate}
\item If the input state passes the $r$-filter and $s$ faults occur during the gate implementation, then the output state can pass the $s+r$ filter when $s+r\leq t$.
\item If the input state passes the $r$-filter and $s$ faults occur during the gate implementation, then the output state after the ideal decoder is equivalent to first implement an ideal decoder after the $r$-filter then apply an ideal gate. 
\end{enumerate}
\end{definition}
Here, the first condition ensure that the error propagation is limited and the second condition guarantee that the function of the noisy gate is close to the idea gate.

The canonical FT definition in Def.~\ref{def:FTgate_canonical} counts the discrete faults during the gate implementation. A common practice to perform the FT analysis is to consider a local stochastic Pauli error model~\cite{gottesman2014fault}, where a random Pauli error may occur after the implementation of each physical gate or idling operation. While convenient for the theoretical analysis, this model and the FT definition in Def.~\ref{def:FTgate_canonical} fail to capture the practical noise structure and the properties of the dynamics. 

In practice, the gate is performed by a Hamiltonian control sequence. The (Markovian) noise can be modeled by the dissipative interaction to the environment. The overall process is described by a Lindblad master equation,
\begin{equation} \label{eq:Lindblad}
\frac{d \rho}{dt} = \mc{L}(t) \rho = -i [ H(t), \rho ] + \left( \sum_j D[\sqrt{\gamma_j} J_j]\right) \rho,
\end{equation}
where $H(t)$ is the system Hamiltonian and $D[O]=O\bullet O^\dagger - \frac{1}{2}\{ O^\dagger O, \bullet \}$ is a Lindblad superoperator associated with a jump operator $O$, and $\gamma_j$ is the dissipation rate for the normalized error $J_j$.

Suppose the spectral norm of the system Hamiltonian $\|H\|$ is much larger than the dissipation rates $\{\gamma_j\}$. In this case, we can expand the Lindblad propagator $\mc{G}(t,0) = e^{\mc{L}t}$ by the Dyson series with respect to $\mc{D}:= \sum_{j} D[\sqrt{\gamma_j}J_j]$,
\begin{equation} \label{eq:DysonExpansion}
\rho(t) = \mc{G}(t,0)\rho(0) = \sum_{q=0}^{\infty} \mc{G}_{q}(t,0) \rho(0),
\end{equation}
where $\mc{G}_0(t,0) = \mc{U}(t,0) = U(t,0)\bullet U^\dagger(t,0)$ with $U(t,0):= \mc{T} \exp[ -i\int_0^t H(t') dt' ] $ is the noiseless evolution and 
\begin{equation}
\begin{aligned}
\mc{G}_q(t) = \left[ \int_{t_h = 0}^t d t_h \right] \mc{T}\Big( & \mc{U}(t,t_q) \mc{D}... \\
 & \mc{D}\mc{U}(t_2,t_1) \mc{D} \mc{U}(t_1,0) \Big),
\end{aligned}
\end{equation}
is the $q$-th order noise expansion. $q\geq 1$ indicates the expansion order related to the dissipator $\mc{D}$. Here, $\mc{G}_q(t)$ describes the $q$-th order noisy dynamics of the whole procedure. We can define the error-structure-tailored $t$-FT gate based on the noise expansion.

\begin{definition}[Error-structure-tailored $t$-FT gate] \label{def:FTgate_error_tailored}
Consider a noisy gate described by a Lindbladian master equation in \autoref{eq:Lindblad}. Consider the Dyson series expansion of its propagator $\mc{G}(t,0)$ with respect to the noise superoperator $\mc{D}$ in \autoref{eq:DysonExpansion}. 
The single-code-block gate is $t$-FT if it satisfies,
\begin{enumerate}
\item If the input state passes the $r$-filter, then the output state with the truncated dynamics $\mc{G}^{[s]}$ can pass the $s+r$ filter when $s+r\leq t$.
\item If the input state passes the $r$-filter, then the output state after the ideal decoder with the truncated dynamics $\mc{G}^{[s]}$ is equivalent to first implement an ideal decoder after the $r$-filter then apply an ideal gate. 
\end{enumerate}
\end{definition}

We can also define FT gadgets for the quantum error detection (QED) code in Def.~\ref{def:FTgate_canonical} and \ref{def:FTgate_error_tailored} by modifying the definition of the $r$-filter and the ideal decoder. The $r$-filter with $r\leq d-1$ for the QED code can be defined similarly as \autoref{eq:Pir}. The ideal decoder for QED code is simply an ideal projection to the code space $\Pi_0$: if the projection fails, we then perform post-selection. In Appendix~\ref{sec:AppFTQED}, we show that when focusing on the post-selected quantum state which passes the QED, we can generalize the FT analysis for QEC code in Ref.~\cite{gottesman2009intro}: the noisy circuit, under the post-selection condition, can be used to simulate the ideal circuit with an error of $O(p^{d})$.

\section{1-Fault-tolerant logical rotation gate on [[4,1,1,2]] code} \label{sec:RZZgate4112}

As an illustrative example, we examine the implementation of $R_{Z_L}$ gate on a $[[4,1,1,2]]$ subsystem QED code. 
As shown in \autoref{fig:4112code}, we align the $4$ data qubits $D_0, D_1, D_2, D_3$ for the $[[4,1,1,2]]$ code as a 2-by-2 grid $\psm{D_0, D_1 \\ D_2, D_3}$. The stabilizers of the code are
\begin{equation} \label{eq:4112SZSX}
S_Z = \pma{ Z, & Z \\ Z, & Z }, \quad  S_X = \pma{ X, &X \\ X, & X }.
\end{equation}
The gauge qubit is defined by the following gauge operators,
\begin{equation} \label{eq:4112gZgX}
g_Z = \pma{ Z, & Z \\ I, & I}, \quad g_X = \pma{ X, & I \\ X, & I}.
\end{equation}
The logical operators are
\begin{equation} \label{eq:4112LZLX}
L_Z = \pma{ Z, & I \\ Z, & I}, \quad L_X = \pma{ X, & X \\ I, & I}.
\end{equation}

The introduction of the redundant gauge qubit is beneficial, as it allows us to perform weight-2 checks, such as $\psm{X,X \\ I, I}$, $\psm{I, I \\ X, X}$, $\psm{Z, I \\ Z, I}$, and $\psm{I, Z \\ I, Z}$, instead of directly performing the weight-4 checks on $S_X$ and $S_Z$. This only impacts the state of the gauge qubit, reducing the overhead associated with second-order error rates. When the gauge qubit is fixed to $g_Z$ (resp. $g_X$), we can treat the code as the smallest $[[4,1,2]]$ rotated surface code with the stabilizer generators of $S_Z, S_X$ and $g_Z$ (resp. $g_X$).

\begin{figure}[htbp]
    \centering
    \includegraphics[width=0.4\textwidth]{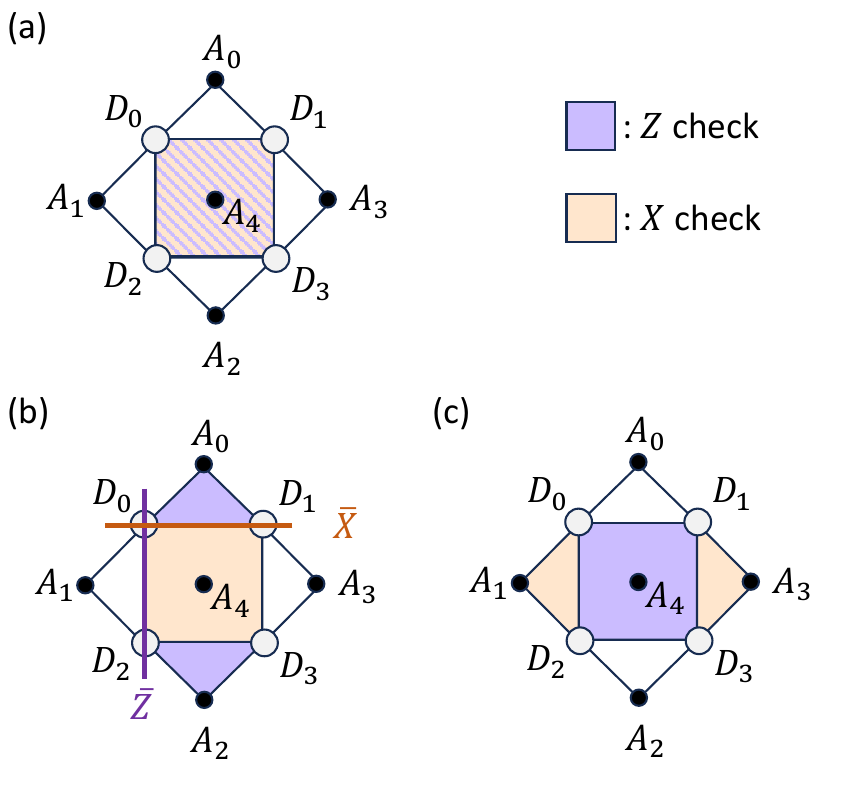}
    \caption{(a) The $[[4,2,2]]$ code. The stabilizers are weight-4 $X$- and $Z$-operators on the data qubits. Treating one logical qubit as the gauge qubit, we obtain the $[[4,1,1,2]]$ subsystem code. (b,c) When the gauge qubit is fixed on the $Z$-basis (resp., $X$-basis), the $Z$-check (resp., $X$-check) can be written as two weight-2 gauge operators (shaded triangles), the code becomes the $[[4,1,2]]$ stabilizer code. 
    }
    \label{fig:4112code}
\end{figure}

The $Z$-rotation gate on the code is $R_{Z_L}(\varphi)=\exp(i\varphi Z_0 Z_2)$. In the canonical circuit QED implementation, if we directly decompose it to elementary CNOT/CZ gates and single qubit rotation gate, the gate will be non fault-tolerant and lead to undetectable first-order error~\cite{akahoshi2024partially}. Now, consider the $R_{Z_L}(\varphi)=\exp(i\varphi Z_0 Z_2)$ gate to be implemented with a $ZZ$-type dispersive-coupled Hamiltonian $H_{ZZ} = -\chi \cdot Z_0 Z_2$ with an evolution time $T = \varphi/\chi$ on two qubits each with the computational basis $\ket{g}$ and $\ket{e}$. 
The qubit noise is usually characterized by the jump operators for the relaxation $J_{e\to g}^{(i)} = \ket{g}\bra{e}$, excitation $J_{g\to e}^{(i)} = \ket{e}\bra{g}$ and dephasing $J_{\phi}^{(i)} = \ket{e}\bra{e}$ on $i$-th qubit. Since the excitation is usually much smaller than the relaxation and the analysis for the excitation error is similar to the relaxation error, we will ignore the excitation error in the current analysis.

\begin{proposition} \label{prop:4112FTdispersive}
The $Z$-rotation gate $R_{Z_L}(\varphi)$ on the [[4,1,1,2]] QED code implemented by the dispersive-coupled Hamiltonian $H_{ZZ}$ with the Markovian noises characterized by the jump operators of relaxation $\{J_{e\to g}^{(i)}\}_i$ and dephasing $\{J_{\phi}^{(i)}\}_i$ satisfy the 1-FT gate requirement in Def.~\ref{def:FTgate_error_tailored}.
\end{proposition}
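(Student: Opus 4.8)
The plan is to verify the two clauses of Definition~\ref{def:FTgate_error_tailored} for $t=1$ directly, case by case over the admissible pairs $(r,s)$ with $r+s\le 1$, namely $(0,0)$, $(1,0)$ and the genuinely new case $(0,1)$. The starting observation is that the noiseless evolution generated by $H_{ZZ}=-\chi Z_0Z_2$ over time $T=\varphi/\chi$ is exactly $U=\exp(i\varphi Z_0Z_2)=R_{Z_L}(\varphi)$, and that $Z_0Z_2=L_Z$ lies in the centralizer $\mc{C}(\mc{S})$: it commutes with both stabilizers $S_Z$ and $S_X$ of \autoref{eq:4112SZSX}. Writing $U=\cos\varphi+i\sin\varphi\,L_Z$, this shows $U$ commutes with every syndrome projector $\Pi_{\vec s}$, so it preserves the code space $\Pi_0$ and each fixed-syndrome subspace. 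The case $(0,0)$ is then immediate, and the whole task reduces to controlling how a single dissipative event, and a single pre-existing weight-$1$ error, are transported by $U$.

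The first lemma I would prove is an error-propagation statement: for any weight-$1$ Pauli $P$, $U P U^\dagger$ lies in $\mathrm{span}\{P,\,L_Z P\}$. Indeed, if $[P,L_Z]=0$ then $UPU^\dagger=P$, while if $\{P,L_Z\}=0$ (the cases $P\in\{X_0,Y_0,X_2,Y_2\}$) then $UPU^\dagger=U^2P=(\cos2\varphi+i\sin2\varphi\,L_Z)P$. Because $L_Z\in\mc{C}(\mc{S})$, the operator $L_ZP$ satisfies $(L_ZP)\cdot P=L_Z\in\mc{C}(\mc{S})$, so $L_ZP\sim P$ under the equivalence of \autoref{eq:Pir}; hence $P\Pi_0P=(L_ZP)\Pi_0(L_ZP)$ and both components of $UPU^\dagger$ act within the same weight-$1$ filter subspace. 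This immediately gives clause~1: for $(1,0)$ the $1$-filtered input is mapped by $U$ into the same weight-$\le 1$ syndrome subspaces, and for $(0,1)$ each order-$1$ Dyson term of \autoref{eq:DysonExpansion} has the form $\mc{U}(T,t_1)[A\,\rho_1\,B]$ with $\rho_1=\mc{U}(t_1,0)\rho_0$ in the code space and $A,B$ drawn from the weight-$\le1$ operators $J_{e\to g}^{(i)}$, $J_\phi^{(i)}$, or $J^\dagger J$; the lemma then keeps the output supported on the $1$-filter.

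For clause~2 I would use the QED ideal decoder, i.e.\ the projection $\Pi_0$ followed by post-selection on the trivial syndrome. The computational heart is the identity $\Pi_0 P\,\Pi_0=0$ for every weight-$1$ Pauli $P\in\{X_i,Y_i,Z_i\}$, since each anticommutes with $S_Z$ or $S_X$ and therefore flips a syndrome. Applying this to the order-$1$ terms: all relaxation-jump contributions $J_{e\to g}^{(i)}\rho (J_{e\to g}^{(i)})^\dagger$ expand into purely $X_i/Y_i$ pieces and are annihilated; in every no-jump backaction $-\tfrac12\{J^\dagger J,\rho\}$ one has $J^\dagger J=\ketbra{e}{e}_i=\tfrac12(I-Z_i)$, whose $Z_i$ part and cross terms are annihilated; and the dephasing jump $J_\phi^{(i)}\rho (J_\phi^{(i)})^\dagger=\tfrac14(I-Z_i)\rho(I-Z_i)$ retains only its $I\rho I$ piece. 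What survives the projection is therefore purely identity-type on the affected qubit, i.e.\ proportional to $\rho_1$, and after forward transport by $\mc{U}(T,t_1)$ proportional to $U\rho_0U^\dagger$. Thus $\Pi_0\,\mc{G}^{[1]}(\rho_0)\,\Pi_0=(1+c)\,U\rho_0U^\dagger$ for some real $c$, and renormalizing the post-selected state yields exactly $R_{Z_L}(\varphi)\rho_0 R_{Z_L}(\varphi)^\dagger$, which matches decoding the $r=0$ input and then applying the ideal gate. The same $\Pi_0 P\Pi_0=0$ argument settles clause~2 for $(1,0)$, since $\Pi_0 U=U\Pi_0$ sends the code part correctly and kills the weight-$1$ part.

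I expect the main obstacle to be the synergy required in the propagation lemma: one must show that the noiseless rotation does not convert a detectable weight-$1$ fault into a logically damaging, post-selection-surviving error, and dually that the surviving no-jump backaction is coherent --- exactly proportional to the ideal output, with no residual logical dephasing or over-rotation. Both hinge on the single algebraic fact that the generator $Z_0Z_2$ is a logical operator lying in $\mc{C}(\mc{S})$, which is what ties the qubit error structure to the code structure. By contrast, the integration over the jump time $t_1$ is routine: every conclusion above is independent of $t_1$, so the time integral only rescales the already-proportional surviving term and does not affect either clause.
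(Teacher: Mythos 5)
Your proof is correct and follows essentially the same route as the paper's: Dyson expansion truncated at first order, case analysis over $(r,s)$ with $r+s\le 1$, decomposition of each dissipator into jump plus backaction, the observation that conjugation by $U=e^{i\varphi Z_0Z_2}$ keeps every weight-$1$ fault within its own $\mbb{P}_{1,\mc{S}}$ equivalence class because the generator $Z_0Z_2=L_Z$ lies in $\mc{C}(\mc{S})$, and $\Pi_0 P\,\Pi_0=0$ for detectable Paulis to show the decoded output is proportional to the ideal one. Your abstract propagation lemma ($UPU^\dagger\in\mathrm{span}\{P,L_ZP\}$ with $L_ZP\sim P$) is just a cleaner packaging of the paper's explicit computation $U(t,t_1)\ket{g}\bra{e}U(t,t_1)^\dagger=\ket{g}\bra{e}\otimes(\cos(2\chi(t-t_1))I+i\sin(2\chi(t-t_1))Z)$, and you additionally spell out the dephasing case that the paper leaves as analogous.
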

The proof of \autoref{prop:4112FTdispersive} can be found in Appendix~\ref{ssc:AppPropZZdisp}. We remark that, Proposition~\ref{prop:4112FTdispersive} still holds if we consider the excitation noise $J^{(i)}_{g\to e}$.

In some scenarios, we may want to avoid the direct coupling of the data qubits. In this case, we also consider an alternative $1$-FT logical-rotation gate design on the $[[4,1,1,2]]$-code based on the dispersive coupling of the data qubits to a $3$-level ancilla qubit, shown in \autoref{fig:RZZgate_ancilla}. 
Here, the $3$-level $g$-$f$ qubit is introduced to avoid the harmful error propagation in the $R_{ZZ}(\varphi)$ gate. 
In a usual $3$-level system on a transmon, the relaxation jump operators are $J_{f\to e}=\ket{e}\bra{f}$ and $J_{e\to g}=\ket{g}\bra{e}$ and there is no direct relaxation from $f$ to $g$ level.
This error structure is widely used in the fault-tolerant gate constrution for the bosonic codes~\cite{ma2020path,Xu2024FaultTolerant}, which also demonstrated in experiments~\cite{Rosenblum2018fault,reinholdErrorcorrectedGatesEncoded2020a,puttermanHardwareefficientQuantumError2025}.
In our case, the ancilla qubit is first prepared on the ground state $\ket{g}$, then coherently drive to $\ket{g}+\ket{f}$ by a $Y$-drive. 
After that, the ancilla qubit is coupled to the two data qubits, $D_0$ and $D_2$, by a dispersive Hamiltonian $H_{fe}=\chi \ket{f}_A\bra{f}\otimes \ket{e}_{D_0(D_2)}\bra{e}$. The dispersive Hamiltonian is used to realize two CZ gates between $A$ and the two data qubits. After that, we apply a $R_X(\varphi)$ gate on the $g$-$f$ qubit. We then apply the dispersive coupling again and rotate the ancillary qubit back to $\ket{g}$. 
In the ideal case when no error happens, the measurement outcome $o_{zz}$ on the ancillary qubit will be $g$, and a $R_{ZZ}(\varphi)$ gate is successfully implemented on the data qubits. 

\begin{figure}[htbp]
    \centering
    \includegraphics[width=0.4\textwidth]{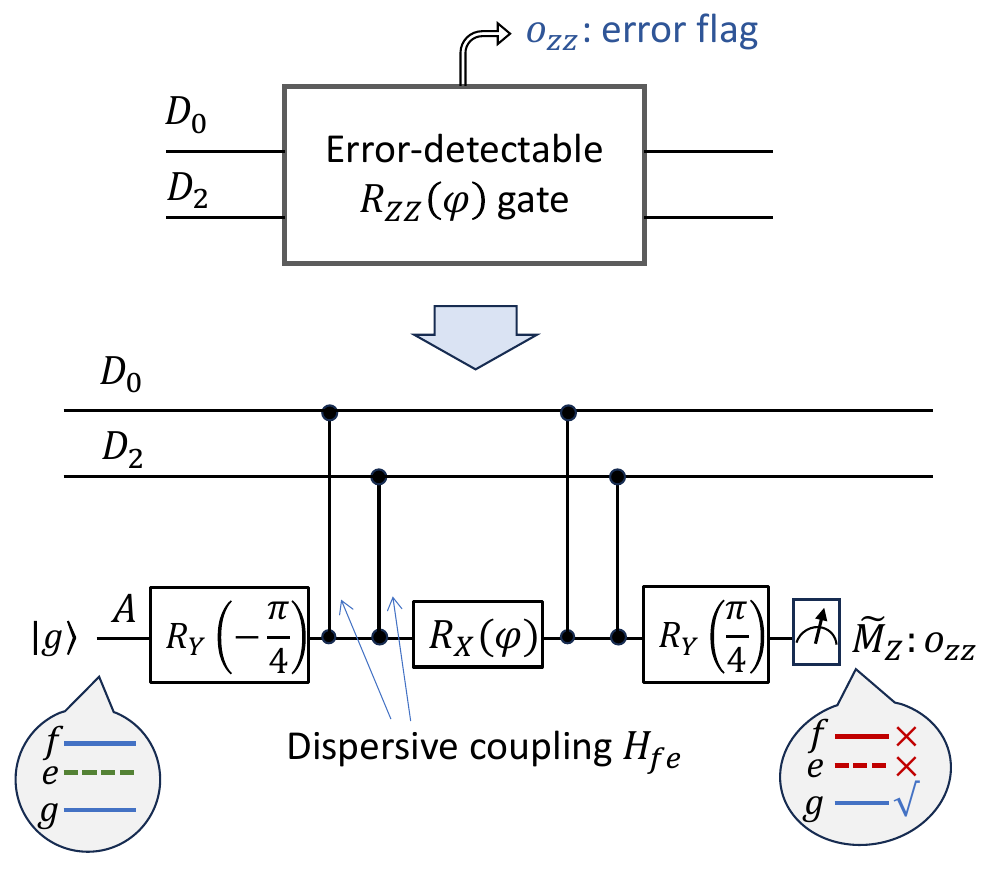}
    \caption{Logical rotation gate $R_{Z_L}(\varphi)$ on the $[[4,1,1,2]]$ code by dispersively coupling the two data qubits $D_0$ and $D_2$ to a $3$-level g-f qubit. We assume the relaxation and dephasing error occuring on all the qubits during the whole procedure.}
    \label{fig:RZZgate_ancilla}
\end{figure}

\begin{proposition} \label{prop:4112FTancilla}
The $Z$-rotation gate $R_{Z_L}(\varphi)$ on the [[4,1,1,2]] QED code implemented by the circuit in \autoref{fig:RZZgate_ancilla} with the Markovian noises characterized by the jump operators of relaxation $\{J_{f\to e}^{(i)}, J_{e\to g}^{(i)}\}_i$ and dephasing $\{J_{\phi}^{(i)}\}_i$ satisfy the 1-FT gate requirement in Def.~\ref{def:FTgate_error_tailored}.
\end{proposition}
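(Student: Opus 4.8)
The plan is to verify both conditions of \autoref{def:FTgate_error_tailored} with $t=1$ by inspecting the Dyson expansion \autoref{eq:DysonExpansion} of the circuit in \autoref{fig:RZZgate_ancilla} order by order, exhausting the three cases $(r,s)\in\{(0,0),(1,0),(0,1)\}$ permitted by $r+s\le 1$. The argument runs closely parallel to the proof of \autoref{prop:4112FTdispersive}, with the extra bookkeeping of the three-level ancilla and its herald $o_{zz}$. The zeroth-order step is to check that the noiseless propagator $\mc{G}_0=\mc{U}$ realizes the ideal gate: tracking a parity eigenstate $\ket{\psi_p}$ with $Z_0Z_2\ket{\psi_p}=p\ket{\psi_p}$ through the sequence — prepare $(\ket{g}+\ket{f})/\sqrt2$, two dispersive $\ketbra{f}{f}\otimes\ketbra{e}{e}$ controlled-phase gates imprinting $\ketbra{g}{g}\otimes I+\ketbra{f}{f}\otimes Z_0Z_2$, the $R_X(\varphi)$ on the $g$-$f$ subspace, the second dispersive round, and the disentangling rotation — one sees the ancilla factorize back to $\ket{g}$ while the data picks up $e^{i\varphi p}$, so the heralded ($o_{zz}=g$) channel equals $R_{Z_L}(\varphi)$. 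This settles $(0,0)$ and, composed with the propagation analysis below, the noiseless-circuit part of $(1,0)$.

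For condition 1 at first order I would write the order-one term as $\int_0^T d\tau\,\mc{U}(T,\tau)\,D[\sqrt{\gamma_j}J_j]\,\mc{U}(\tau,0)$ and propagate the jump to the end of the circuit to obtain the effective error $E_j(\tau)=U(T,\tau)J_jU^\dagger(T,\tau)$. Because the ideal evolution is built only from $Z$-diagonal dispersive couplings and $g$-$f$ rotations, conjugation keeps every data-qubit jump ($J_{e\to g}^{(D_i)}$, $J_\phi^{(D_i)}$) supported on the same single data qubit (possibly tensored with an ancilla $Z_{gf}$-type factor), so on the post-selected branch the data-block weight is $\le 1$ and the output passes the $1$-filter; any single fault that would deposit a heavier correlated error on the data is accompanied by an ancilla herald and removed before the filter is applied. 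Together with the fact that a weight-$\le 1$ input error stays weight-$\le 1$ under the noiseless circuit, this gives condition 1 for $(0,1)$ and $(1,0)$.

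The crux is condition 2 at first order: after the ideal QED decoder (projection $\Pi_0$ with post-selection, together with post-selection on $o_{zz}=g$) every first-order error must be either rejected or act trivially on the logical data. I would split by fault location. (i) Any single data-qubit Pauli has weight $1$; since the code has distance $2$ it anticommutes with $S_X$ or $S_Z$, so $\Pi_0 E_j\Pi_0=0$ and the error — including the cross terms between the error and no-error branches in the dephasing dissipator — is killed by the stabilizer post-selection. (ii) The only ancilla relaxation acting nontrivially at first order is $J_{f\to e}^{(A)}$ (the process $J_{e\to g}^{(A)}$ needs $\ket{e}$ population, absent on the ideal trajectory, hence second order); a single $J_{f\to e}$ does not commute with $H_{fe}$, so through the interrupted controlled-phase it can imprint a correlated data error, potentially the logical $L_Z=Z_0Z_2$, but it necessarily leaves the ancilla on the intermediate level $\ket{e}$, outside the $\{g,f\}$ code space, which is flagged ($o_{zz}\neq g$) and discarded. (iii) A single ancilla dephasing acts as a $Z_{gf}$-type phase that, conjugated through the remaining $R_X(\varphi)$ and disentangling rotation, carries population onto $\ket{f}$ (or $\ket{e}$), again firing the herald. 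In each case the accepted branch carries no logical fault, so the heralded decoded channel equals ``ideal-decode then apply $R_{Z_L}(\varphi)$.''

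I expect the main obstacle to be cases (ii)–(iii): confirming that a mid-circuit ancilla fault can never simultaneously evade the herald and cause a logical data error. This is exactly where the $g$-$f$ structure is essential — a single relaxation cannot return the excited ancilla to $\ket{g}$ in one step, so the dangerous logical error is always heralded and post-selected away, while the competing $f\to e\to g$ return path requires two faults and appears only at second order; in a two-level ancilla the analogous relaxation would land directly in $\ket{g}$ and escape the flag. Carefully tracking how the partial (time-$\tau$) dispersive phase combines with the noncommuting $R_X(\varphi)$ rotation to certify that the flag-accepted component is logically trivial is the principal calculational burden, and the step I would devote the most care to.
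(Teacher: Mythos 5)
Your proposal follows essentially the same route as the paper's proof: Dyson-expand to first order, enumerate each jump operator by type and circuit location, propagate the fault to the end of the circuit, and use the $g$-$f$ ladder structure (no direct $f\to g$ relaxation) together with the $o_{zz}$ herald to show every accepted first-order branch is either a weight-$1$ detectable data error or proportional to the ideal rotated state. The only point you leave implicit is the backaction part $-\tfrac{1}{2}\{J_j^\dagger J_j,\cdot\}$ of each dissipator, which the paper checks separately (it commutes through the dispersive dynamics and contributes only heralded or ideal-proportional terms), so be sure to carry that piece through your case analysis as well.
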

The proof of \autoref{prop:4112FTancilla} can be found in Appendix~\ref{ssc:AppPropZZancilla}. Proposition~\ref{prop:4112FTancilla} also holds if we consider the excitation noises on qubits and the $3$-level $g$-$f$ qubit.
Alternatively, one can also design a similar ancilla-based $R_{Z_L}(\varphi)$ gate using a dual-rail transmon~\cite{kubica2023erasure} or cavity~\cite{teohDualrailEncodingSuperconducting2023b}. 
This alternative design is discussed in Appendix~\ref{sec:App_dual_rail}. Here, we mainly focus on the design based on the $3$-level $g$-$f$ qubit.

To test the performance of the logical rotation gate, we consider to run a logical $\ket{r_\varphi}_L:= R_{Z_L}(\varphi)\ket{+}_L$ state preparation circuit on the $[[4,1,1,2]]$ QED code. In the circuit, we first prepare the logical $\ket{+}_L$ state, then perform the logical rotation gate following the above methods. Afterwards, we perform a noisy QED circuit. The $[[4,1,1,2]]$ $\ket{+}_L$ state preparation circuit and the QED circuit is in Appendix~\ref{sec:4112prepQEDexp}.
If the state passes the noisy QED, we then check the trance distance of the logical state $\rho$ to $\ket{r_\varphi}_L$. The results are shown in \autoref{fig:Num4112coherent}.

In the whole state preparation circuit, we consider the fully coherent noise simulation. For the $R_{Z_L}(\varphi)$ gate, we perform the Lindbladian simulation with the ancilla relaxation and dephasing error. We consider practical Hamiltonian parameters based on the $g$-$f$ qubit studies~\cite{kubica2023erasure}: the dispersive coupling $\chi = 2\pi\cdot 5$ MHz, and the Rabi driving strength for the $g$-$f$ qubit is $\Omega_{gf} = 2\pi\cdot 20$ MHz. During the implementation of the $R_{Z_L}(\varphi)$ gate, we consider a noisy Lindbladian simulation with the dissipation rates $\gamma_{f\to e} = \gamma_{e\to g} = \gamma_{\phi}$ which correspond to the $f\to e$, $e\to g$ relaxation rate and the dephasing rate. We simply denote them as $\gamma$.
To simplify the simulation, we ignore the error structure in the circuit elements other than the $R_{Z_L}(\varphi)$ gate, including the state preparation, measurements, idling, Hadamard gates and $\mr{CNOT}$ gates, and consider the worst-case circuit-level depolarizing noise model with a depolarizing rate $p$. 
The typical parameters for current superconducting quantum platform have $p = 10^{-3}$ and $\gamma =2\times 10^4$ Hz, with a ratio of $p/\gamma = 50$ ns characterizing the typical time scale of a superconducting quantum operation. 
As we are getting better coherence for our superconducting qubits, we expect that both $p$ and $\gamma$ will become even smaller, while the ratio remain mostly unchanges. Hence, we fixed the ratio of $p/\gamma = 50$ ns to check the error scaling with respect to all the circuit noise parameters.

\begin{figure}[htbp]
    \centering
    \includegraphics[width=0.5\textwidth]{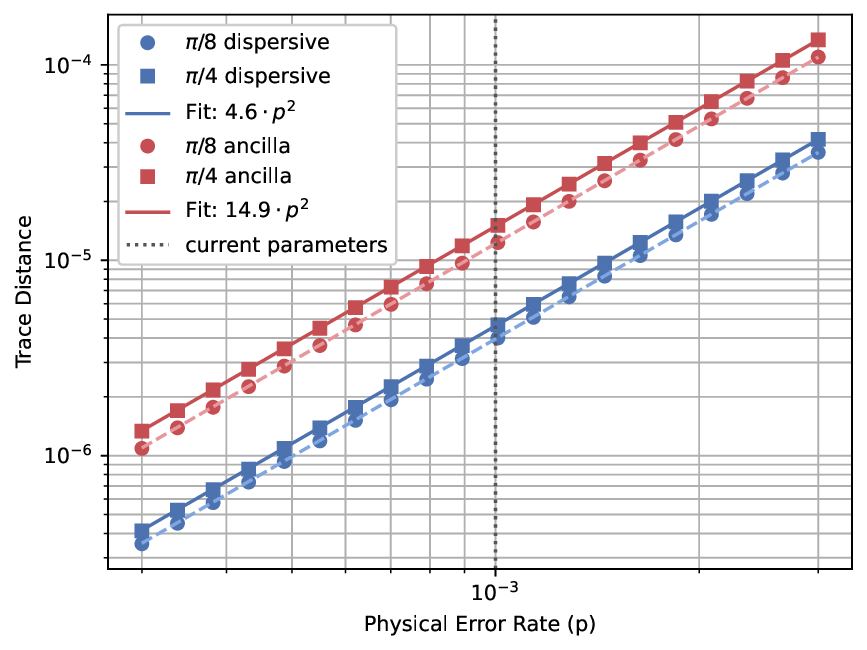}
    \caption{Performance of the $1$-FT rotation gate on the $[[4,1,1,2]]$ code. We simulate the noisy $R_{Z_L}$ gate by preparing the logical rotation state $\ket{r_\varphi}_L$ by acting $R_{Z_L}$ on the $\ket{+}_L$ state. The performance is characterized by the trace distance to the ideal state. We consider both direct dispersive coupling (blue curves) and the ancillary-based approach (red curves) in \autoref{fig:RZZgate_ancilla}. 
    }
    \label{fig:Num4112coherent}
\end{figure}

From \autoref{fig:Num4112coherent} we can see that the logical rotation gates based on both the direct dispersive coupling Hamiltonian and the ancillary-based approach in \autoref{fig:RZZgate_ancilla} are $1$-FT. The error increases slightly when the rotation angle $\varphi$ becomes larger.
The ancilla-based logical rotation gate is noisier than the direct dispersive coupling gate due to a longer implementation time and more mechanisms for the second-order errors.

\section{1-Fault-tolerant logical rotation state preparation on surface codes: expansion scheme} \label{sec:expansion}

The $1$-FT gate on the $[[4,1,1,2]]$ code can be used as a subroutine to prepare a $1$-FT logical rotation state $\ket{r_\varphi}_L := R_{Z_L}(\varphi) \ket{+}_L$ on a general stabilizer code by code expansion or code deformation. Here, we take the surface code as an illustrative example, shown in \autoref{fig:expansion}. We first prepare the logical rotation state $\ket{r_\varphi}_L$ on the $[[4,1,1,2]]$ code, then expand it to the surface code. After the code expansion, we first use a small surface code to perform further QED, then expand it to a larger surface code to store the prepared state $\ket{r_\varphi}_L$ and perform QEC.
The details of the code expansion can be found in Appendix~\ref{sec:4112prepQEDexp}.

\begin{figure*}[htbp]
    \centering
    \includegraphics[width=0.85\textwidth]{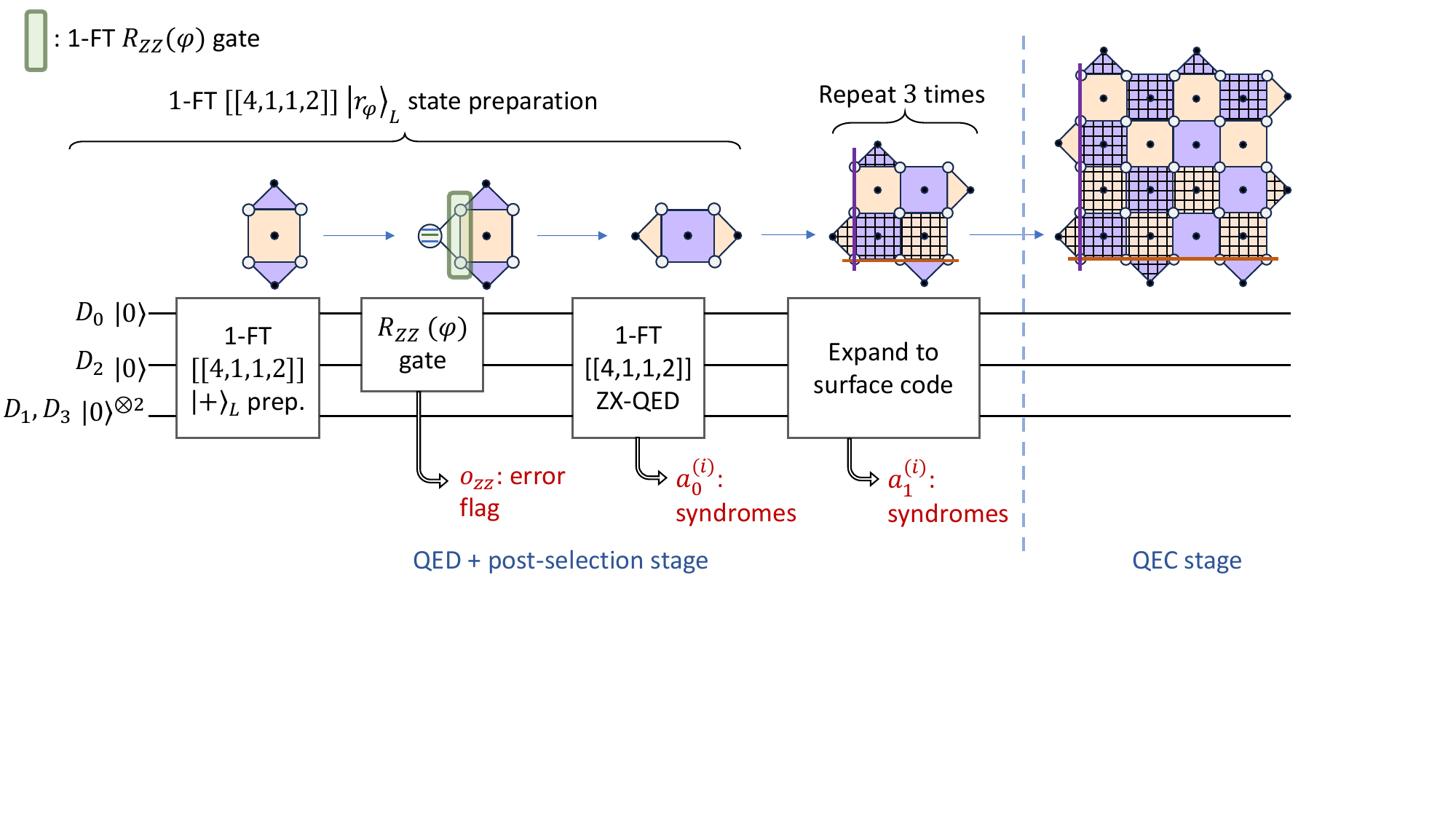}
    \caption{Expansion scheme to prepare the surface code rotation state $\ket{r_
    \varphi}_L$ 1-fault-tolerantly. There are QED and QEC stages. The stabilizers with grid indicates the ones with fixed syndrome values, serving as the detectors for QED or QEC.
    }
    \label{fig:expansion}
\end{figure*}

During the surface code expansion stage, there is a trade-off between the final logical error rate and the successful probability to pass the post-selection in the QED stage in \autoref{fig:expansion}. We choose to expand the $[[4,1,1,2]]$ code first to the $d=3$ surface code and measure the syndromes $3$ times to suppress the measurement error. If unwanted measurement outcomes occur, we then discard this round of state preparation; otherwise we further expand the code to a larger surface code and perform quantum error correction. 

To test the performance of the $\ket{r_\varphi}_L$ state preparation by the expansion scheme in \autoref{fig:expansion}, we perform circuit-level numerical simulations to estimate the logical error rate after the $d=3$ surface code expansion in \autoref{fig:expansion}, with results shown in \autoref{fig:expansion_num}. Here, we focus on the estimation of the error during the QED stage in \autoref{fig:expansion} since the error in the QEC stage will be of higher order.
For simulation simplicity, we ignore error structure in all circuit elements except the $R_{ZZ}(\varphi)$ gate while assuming worst-case canonical depolarizing noise; for the $R_{ZZ}(\varphi)$ gates, we first perform Lindbladian noisy gate simulation in QuTiP using parameters similar to \autoref{fig:Num4112coherent} then extract the Pauli error model for Stim circuit-level simulation. We fix $\varphi=\pi/4$ (yielding a Clifford/logical $S$ gate) for tractable Stim simulation, with logical error rates for $|\varphi|<\pi/4$ being similar to $\varphi=\pi/4$. For $|\varphi|>\pi/4$, one can first perform Clifford $S$ gates and implement the remaining rotation gates with $|\varphi|<\pi/4$. \autoref{fig:expansion_num}(a) demonstrates successful preparation probabilities $>80\%$ while maintaining small logical error rates.
\autoref{fig:expansion_num}(b) shows that the logical error rate of the prepared state is about $70\cdot p^2$ and $84\cdot p^2$ when we use the ancilla-free dispersive $R_{ZZ}(\varphi)$ gate in \autoref{prop:4112FTdispersive} and the ancilla-based dispersive $R_{ZZ}(\varphi)$ gate in \autoref{prop:4112FTancilla}, respectively. Under the current hardware condition where the depolarizing error rate is $1\cdot 10^{-3}$, one can prepare a rotation state $\ket{r_\varphi}_L$ with arbitrary $\varphi$ with an infidelity of $7\cdot 10^{-5}$ to $8.4\cdot 10^{-5}$.


\begin{figure}[htbp]
    \centering
    \begin{subfigure}[b]{\columnwidth}
        \centering
        \includegraphics[width=\linewidth]{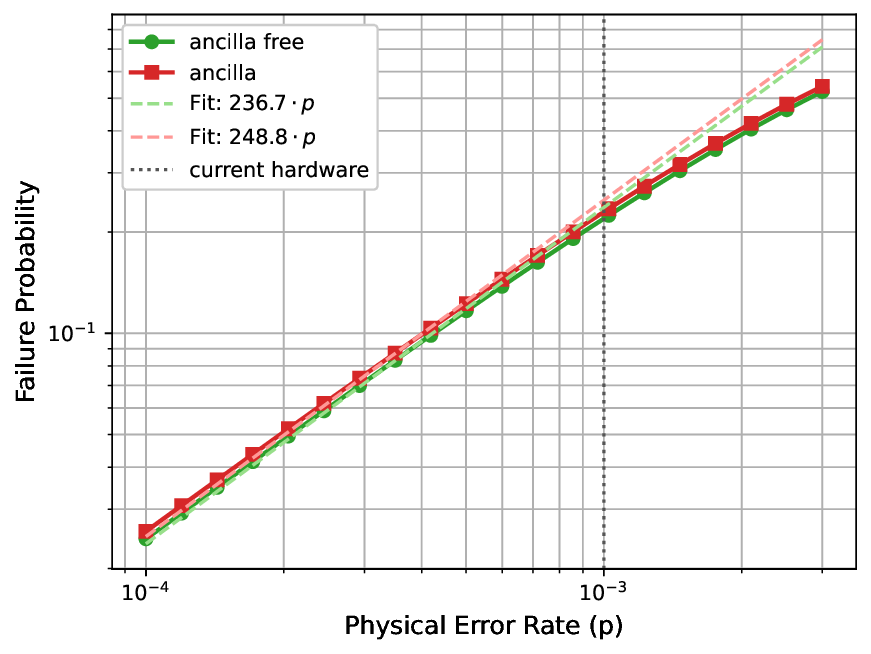}
        \caption{}
        \label{fig:fail_prob_expansion}
    \end{subfigure}
    
    \vspace{\floatsep} 
    
    \begin{subfigure}[b]{\columnwidth}
        \centering
        \includegraphics[width=\linewidth]{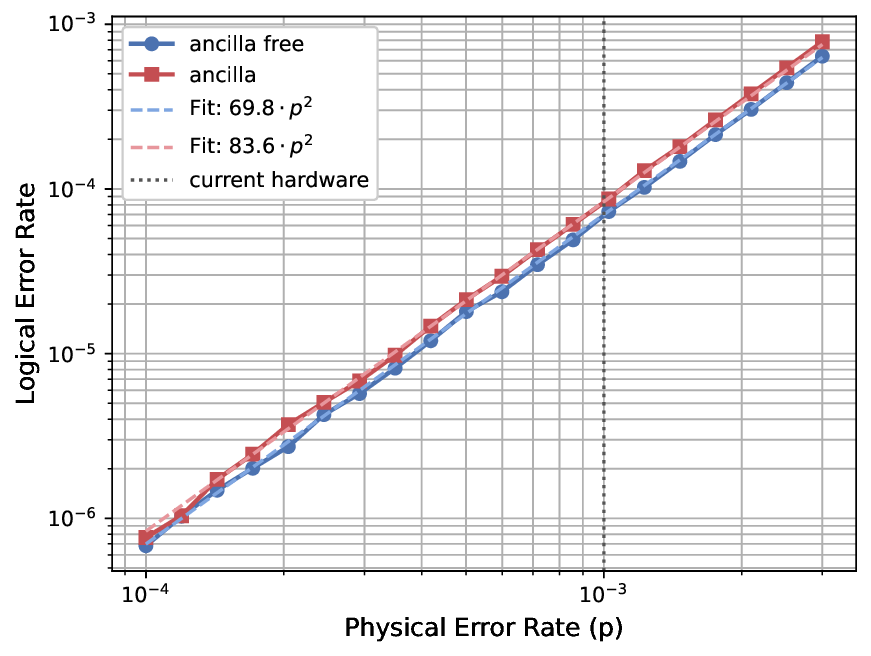}
        \caption{}
        \label{fig:error_prob_expansion}
    \end{subfigure}
    
    \caption{Performance characteristics of the Expansion scheme. (a) The overall failure probability. (b) The average infidelity of the Pauli channel to prepare $\ket{r_\varphi}_L$.}
    \label{fig:expansion_num}
\end{figure}

We emphasize that the above resource state preparation protocol is probabilistic and cannot directly implementing a deterministic rotation gate on the surface codes or other stabilizer codes. To realize rotation gates using these resource states $\ket{r_\varphi}_L$, we employ the rotation state injection circuits~\cite{litinski2019gameofsurfacecodes,akahoshi2024partially} (illustrated in Fig.~\ref{fig:RotationInjection}). These circuits are analogous to standard $T$-state injection circuits using CNOT gates. However, unlike $T$-state injection where an $S$ gate correction for measurement outcome $o_x=1$ ensures deterministic $T$-gate implementation, the arbitrary-angle case presents a key difference: when $o_x=1$, the implemented gate becomes $R_{Z_L}(-\varphi)$ instead of the desired $R_{Z_L}(\varphi)$, and this inversion cannot be straightforwardly corrected by a Clifford gate to obtain the target rotation.

To address this problem, we can introduce a repeat-until-success (RUS) procedure to adaptively implement the target rotation gate~\cite{litinski2019gameofsurfacecodes,akahoshi2024partially}. Suppose we have implemented unwanted $(-\varphi)$-rotation when $o_x=1$, we then prepare the ancilliary state $\ket{r_{2\varphi}}_L$ and perform the injection circuit to implement $(2\varphi)$-rotation; if we get $o_x=1$ again, we then prepare the ancilliary state $\ket{r_{4\varphi}}_L$ and so on. Since the probability of getting $o_x=1$ each time is $\frac{1}{2}$, the expected number of trials to implement the taret rotation gate $R_{Z_L}(\varphi)$ is $\mbb{E}(n_{\mr{trial}}) = 1\times \frac{1}{2} + 2 \times \frac{1}{4} + ... = 2$. In the worst case, we can set a maximum value for $n_{\mr{trial}}$ which may induce a small failure probability of $\delta=1/2^{n_{\mr{trial}}}$ for each rotation gate. In the later resource estimation discussion in \autoref{sec:Error_Count}, we will also take the repetition time of the rotation state injection into consideration and show that the overhead introduced by the repetition is mild.

\begin{figure}[htbp]
    \centering
    \includegraphics[width=0.42\textwidth]{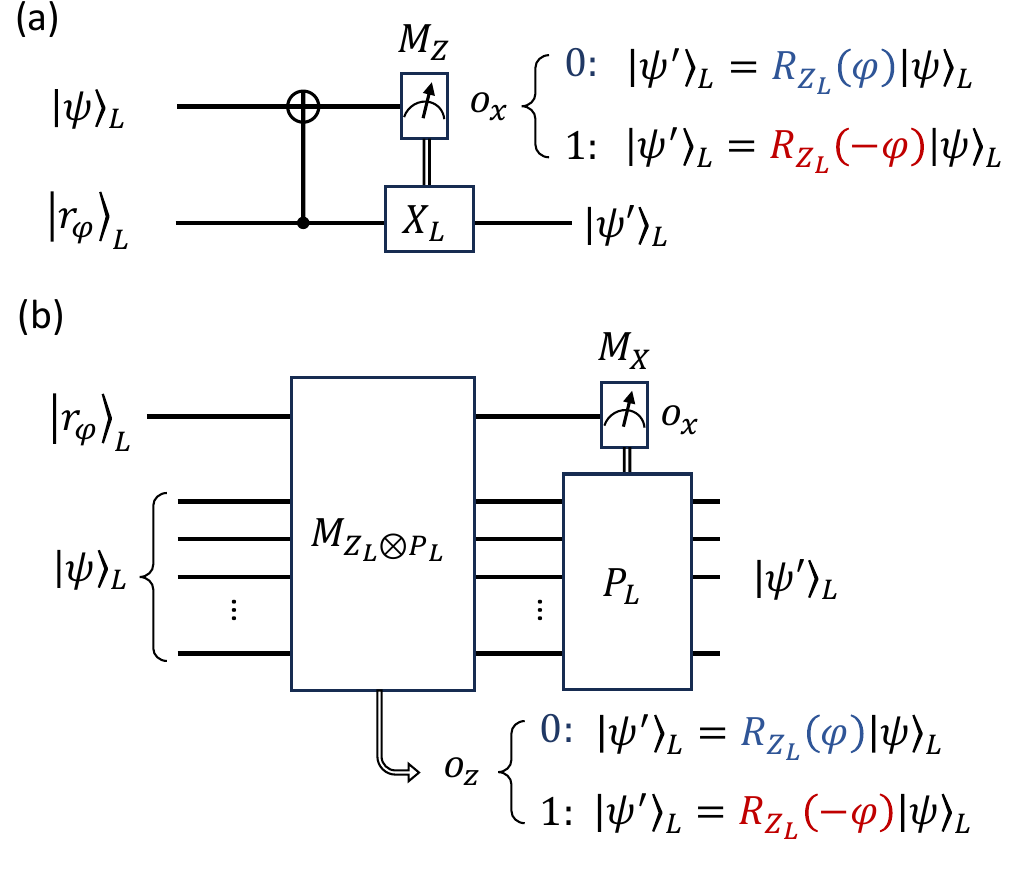}
    \caption{ Probabilistic implementation of the target rotation gate based on the rotation state~\cite{litinski2019gameofsurfacecodes,akahoshi2024partially}. (a) $R_{Z_L}(\varphi)$ gate based on the 1-bit teleportation protocol. When the measurement outcome $o_x$ is $1$, the resulting gate on the input state is $R_{Z_L}(-\varphi)$, in which case we need to introduce a repeat-until-success (RUS) procedure to compensate this gate. (b) Multi-qubit Pauli rotation gate $R_{P_L}(\varphi)$ based on the measurement-based circuit. When $o_z$ is $1$, the resulting gate is $R_{P_L}(-\varphi)$, which need to be compensated based on a RUS procedure.
    }
    \label{fig:RotationInjection}
\end{figure}

\section{1-Fault-tolerant Logical rotation state preparation on surface codes: projection scheme} \label{sec:projection}

The expansion scheme above can be used to prepare a rotation state $\ket{r_\varphi}_L$ with a general rotation angle $\varphi$. 
In many early fault-tolerant quantum algorithms, especially the ones related to the Trotterization of Hamiltonians, the value of $\varphi$ is usually small. In Ref.~\cite{choi2023fault}, Choi~et~al. introduced an interesting idea to suppress the state preparation error of $\ket{r_\varphi}_L$ based on the value $\varphi$ and prepare the rotation state $\ket{r_\varphi}_L$ with a trace distance of $O(|\varphi|\cdot p)$. Here, we briefly introduce their idea in \autoref{ssc:BayesianEM}, and then further improve it to $O(|\varphi|\cdot p^2)$ based on the error-structure-tailored $ZZ$-rotation gate construction in \autoref{sec:RZZgate4112}.

\subsection{Bayesian logical error suppression: review} \label{ssc:BayesianEM}

We review the protocol from Ref.~\cite{choi2023fault}, taking the preparation of $\ket{r_\varphi}_L$ in the $d=3$ rotated surface code as an example. In the protocol, we first prepare logical $\ket{+}_L$ state fault-tolerantly on the rotated surface code; then, for a given minimal set of qubits supporting logical-$Z$ operator, we implement single-qubit $Z$-rotation on these qubits. In the noiseless case, the resulting state becomes,
\begin{equation} \label{eq:psi_id}
\begin{aligned}
\ket{\psi_{\mr{id}}} &= [(\cos^3\theta III + i^3 \sin^3\theta ZZZ)  \\ 
&\quad + i \cos^2\theta\sin\theta (ZII+IZI+IIZ) \\
&\quad +  i^2 \cos\theta\sin\theta^2 (IZZ+ZIZ+ZZI) ] \ket{+}_L \\
&= \sqrt{ P_{\mr{pass|id}} } \ket{\psi_{\mr{id,pass}}} + \sqrt{ P_{\mr{fail|id}} } \ket{\psi_{\mr{id,fail}}}.
\end{aligned}
\end{equation}
Here, $\ket{\psi_{\mr{id,pass}}}\propto (\cos^3\theta III + i^3 \sin^3\theta ZZZ)\ket{+}_L$ represents the post-selected state when projected to the code space, occurring with probability $P_{\mr{pass|id}}:= \cos^6\theta + \sin^6\theta$. The failed component $\ket{\psi_{\mr{id,fail}}}$ occurs with probability $P_{\mr{fail|id}}:=1 - P_{\mr{pass|id}}$. The post-selected state simplifies to:
\begin{equation}
\begin{aligned}
\ket{\psi_{\mr{id,pass}}} &= \frac{1}{\sqrt{P_{\mr{pass|id}}}} \left( \cos^3\theta I_L + i^3 \sin^3\theta Z_L \right) \ket{+}_L \\
&= (\cos\varphi I_L + i \sin\varphi Z_L)\ket{+}_L.
\end{aligned}
\end{equation}
where $\varphi:= - \sin^{-1}(\frac{1}{\sqrt{P_{\mr{pass|id}}}} \sin^3\theta)\approx -\theta^3 + O(\theta^5)$. This allows determining the physical rotation angle $\theta$ needed to implement a target logical rotation $\varphi$, projecting to the desired state $\ket{r_\varphi}_L$.

In the above discussion, we have assumed an ideal noiseless implementation. In practice, when additional noise occurs, it may lead to erroneous implementations of rotation gates. Without loss of generality, we consider the effect of Pauli errors. While many low-weight Pauli errors (e.g., Pauli $XII$ errors) occurring before or after the single-qubit rotations are detectable through subsequent surface code syndrome measurements, some specific Pauli errors may remain undetected by the syndrome measurement.
For example, if the Pauli $ZII$ noise occurs, the state we prepare becomes,
\begin{equation} \label{eq:psi_ud1}
\begin{aligned}
&\ket{\psi_{\mr{ud(1)}}} \\
&= \sqrt{P_{\mr{pass|ud(1)}}} \ket{\psi_{\mr{ud(1),pass}}} + \sqrt{P_{\mr{fail|ud(1)}}} \ket{\psi_{\mr{ud(1),fail}}}.
\end{aligned}
\end{equation}
We use the subscript ``$\mr{ud(1)}$'' to denote the weight-$1$ undetectable errors, including the Pauli $ZII$, $IZI$, and $IIZ$ errors.
Applying QED on $\ket{\psi_{\mr{ud(1)}}}$, the post-selected state in the code space is $\ket{\psi_{\mr{ud(1),pass}}}\propto (\cos\theta I_L + i\sin\theta Z_L)\ket{+}_L$, with the probability $P_{\mr{pass|ud(1)}}:= \cos^2\theta \sin^2\theta\approx \theta^2$. Note that $\ket{\psi_{\mr{ud(1),pass}}} = \ket{r_{\varphi_1}}$ with $\varphi_1=\theta$. If weight-$2$ undetectable errors like the Pauli $ZZI$ error occurs, the resulting post-selected state will be different. However, these weight-$2$ will occur with a higher-order probability $P_{\mr{ud(2)}} = O(p^2)$ and pass post-selection with probability $P_{\mr{pass|ud(2)}}\approx \theta^4$, which is negligible in the leading-order error analysis. 

In practice, conditioning on passing the quantum error detection, the overall post-selected resulting state is
\begin{equation} \label{eq:rho_pass}
\rho_{\mr{pass}} = P_{\mr{id|pass}}\, \rho_{\mr{id,pass}} + P_{\mr{ud(1)|pass}}\, \rho_{\mr{ud(1),pass}} + O(|\varphi|^2 p^2),
\end{equation}
where $\rho_{\mr{id,pass}} = \ket{r_\varphi}\bra{r_\varphi}$ and $\rho_{\mr{ud(1),pass}} = \ket{r_{\varphi_1}}\bra{r_{\varphi_1}}$. $P_{\mr{ud(1)|pass}}$ is the Bayesian probability of undetectable erorr given by
\begin{equation} \label{eq:Pud1_pass}
\begin{aligned}
P_{\mr{ud(1)|pass}} = \frac{ P_{\mr{ud(1),pass}} }{ P_{\mr{pass}} } \approx \frac{ P_{\mr{ud(1)}} P_{\mr{pass|ud(1)}} }{P_{\mr{id}} P_{\mr{pass|id}}}.
\end{aligned}
\end{equation}
In the approximation, we only keep the leading order term of the denominator with respect to $p$. Consider a small angle limit of $|\theta|\to 0$, we have $P_{\mr{id}} P_{\mr{pass|id}} \approx 1$ in \autoref{eq:psi_id} and $P_{\mr{pass|ud(1)}}\approx \theta^2$ in \autoref{eq:psi_ud1}, hence $P_{\mr{ud(1)|pass}} \approx P_{\mr{ud(1)}}\cdot \theta^2$.

We can calculate the trace distance~\cite{nielsen2010quantum} between the post-selected state $\rho_{\mr{pass}}$ and the ideal state $\rho_{\mr{id,pass}}$,
\begin{equation} \label{eq:Dtr_rho_pass}
\begin{aligned}
&\quad D_{\mr{tr}}(\rho_{\mr{pass}}, \rho_{\mr{id,pass}}) = \frac{1}{2}\left| \rho_{\mr{pass}} - \rho_{\mr{id,pass}} \right| \\
&\approx \frac{1}{2}P_{\mr{ud(1)|pass}} \left|\, \ket{r_\varphi}\bra{r_\varphi} - \ket{r_{\varphi_1}}\bra{r_{\varphi_1}} \,\right| \\
&\approx P_{\mr{ud(1)}} \frac{ P_{\mr{pass|ud(1)}} }{ P_{\mr{id}} } \sin\Delta_\varphi.
\end{aligned}
\end{equation}
In the approximation of the second and third line, we only keep the leading order terms with respect to $p$. Here, $\Delta_\varphi:= |\varphi - \varphi_1 |\approx \theta^3 +\theta$. In the samll angle limit of $|\theta|\to 0$, we have $D_{\mr{tr}}(\rho_{\mr{pass}}, \rho_{\mr{id,pass}}) \approx P_{\mr{ud(1)}} \theta^2 \cdot \theta \approx P_{\mr{ud(1)}}|\varphi|$ based on \autoref{eq:Pud1_pass}. 

In the above protocol, the undetectable noise occurs with the probability of $P_{\mr{ud(1)}} = O(p)$. Therefore, the error of the rotation state preparation is $O(|\varphi|p)$, suppressed by a factor of $|\varphi|$ due to the Bayesian property of the undetectable error in \autoref{eq:Pud1_pass}. Note that this protocol is still non-fault-tolerant, since it does not improve the final error scaling related to the physical error $p$. In \autoref{ssc:MultiRot}, we improve the error scaling to $O(|\varphi| p^2)$. 

We remark that the choice of trace distance as the error identifier is important. The infidelity~\cite{nielsen2010quantum} between $\rho_{\mr{pass}}$ and $\rho_{\mr{id,pass}}$ scales as $O(p\cdot |\varphi|^{4/3})$~\cite{toshio2024practical}, which is smaller than the trace distance. This discrepancy arises from coherent errors introduced when rotating from the erroneous state $\ket{r_{\varphi_1}}$ to the target state $\ket{r_\varphi}_L$. Later in \autoref{ssc:Prob_Coher_EC}, we will discuss how to convert the coherent error to the Pauli error.

To understand why the trace distance characterize the error of the rotation state preparation, we consider the usage of the state $\rho_{\mr{pass}}$ in \autoref{eq:rho_pass} to implement the logical rotation gate $R_{Z_L}(\varphi)$ based on the circuit in \autoref{fig:RotationInjection}(a). Suppose the measurement outcome $o_x = 0$, the channel we implement on the input state $\ket{\psi}_L$ is the following stochastic rotation channel,
\begin{equation} \label{eq:N_varphi_d}
\begin{aligned}
&\mc{N}_{\varphi}(\cdot) = \mc{E}_{\varphi}\circ \mc{R}_{\varphi} (\cdot) \\
=& P_{\mr{id|pass}} \mc{R}_{\varphi} (\cdot) + P_{\mr{ud(1)|pass}} \mc{R}_{\varphi_1}(\cdot) + O(|\varphi|^2 p^2).
\end{aligned}
\end{equation}
Here, $\mc{E}_{\theta}$ is the coherent error channel after the ideal rotation unitary channel $\mc{R}_{\varphi}(\cdot):= R_{Z_L}(\varphi) \cdot R_{Z_L}(\varphi)^\dagger$. As a noisy quantum gate implementation, we care about its \emph{worst-case} performance, which is characterized by the diamond norm distance of the over-rotation channel $\mc{E}_{\theta}$ to the identity channel,
\begin{equation} \label{eq:rotation_diamond}
\epsilon_\diamond(\mc{E}_{\theta}) = P_{\mr{ud(1)|pass}}\, |\Delta_\varphi|\, \sqrt{1 + \Delta_\varphi^2} \approx |\varphi|P_{\mr{ud(1)}}.
\end{equation}
This provides a worst-case upper bound for the trace distance between the ideal state $\mc{R}_{\varphi} (\rho)$ and the real state $\mc{N}_{\varphi}(\rho)$ for arbitrary input $\rho$. On the other hand, the average channel infidelity - which relates to the infidelity of the prepared state - can only provide an average-case error estimation. Consequently, we primarily focus on trace-distance estimation for the $\ket{r_\varphi}_L$ state preparation.

In the above discussion, we have assumed ideal syndrome detection. In practice, we can suppress measurement errors by performing multiple syndrome measurement rounds and post-selecting those rounds where all syndromes are consistent. This approach suppresses syndrome measurement errors to higher order, ensuring that the leading-order error scaling in preparing $\ket{r_\varphi}_L$ remains $|\varphi|P_{\text{ud}(1)}| = O(|\varphi|p)$.

\subsection{Multi-rotation scheme with error-detectable ZZ-rotation gates} \label{ssc:MultiRot}

The scheme in Ref.~\cite{choi2023fault} can suppress the state preparation error of $\ket{r_\varphi}_L$ by a factor of $|\varphi|$. However, it is not fault-tolerant as it fails to improve the error scaling with respect to $p$. Here, we propose a $1$-fault-tolerant protocol that prepares $\ket{r_\varphi}_L$ with an error of $O(|\varphi| \cdot p^2)$, leveraging the error-detected $ZZ$-rotation gate presented in \autoref{sec:RZZgate4112}.

\begin{figure*}[htbp]
    \centering
    \includegraphics[width=0.88\textwidth]{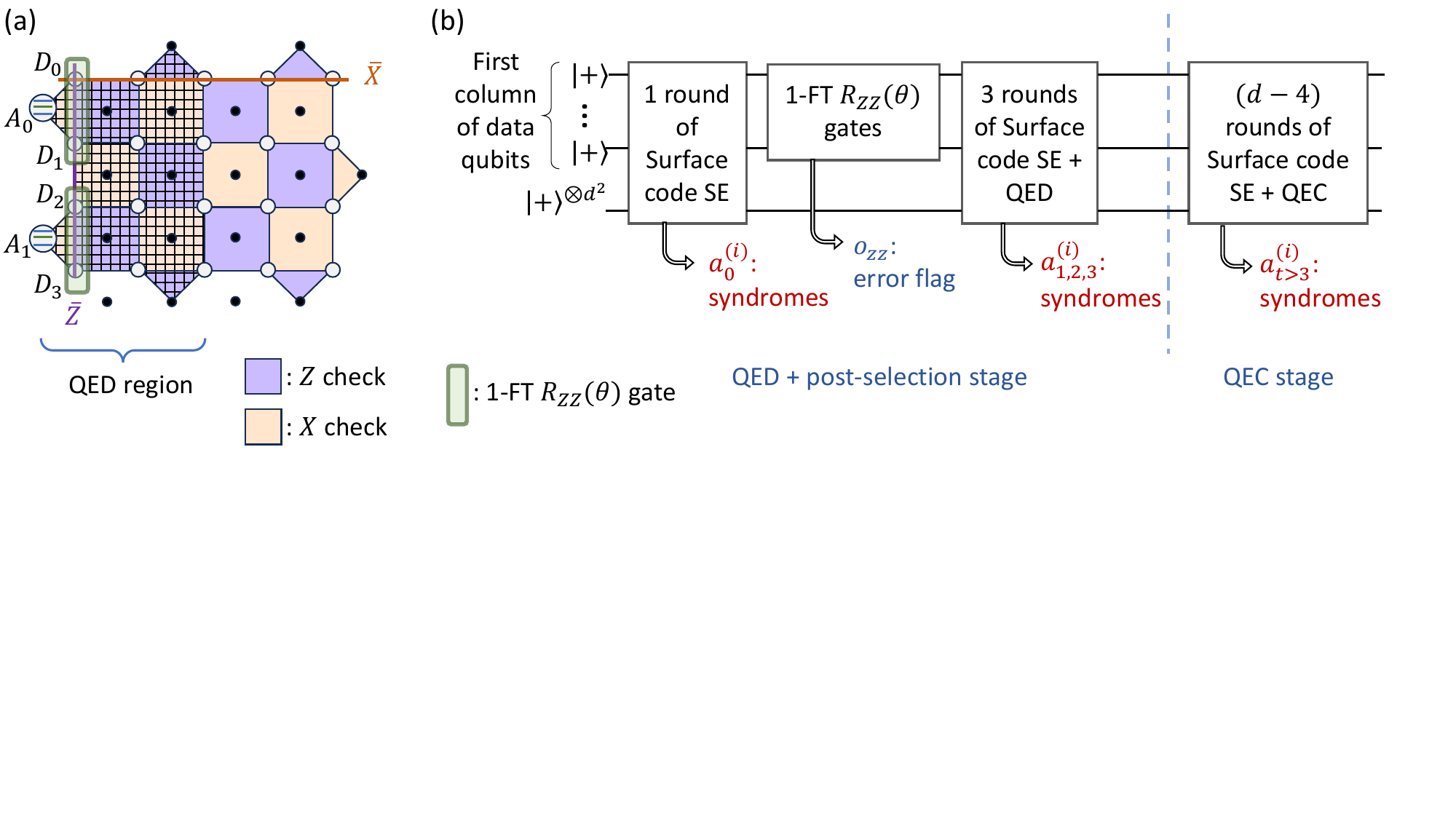}
    \caption{Projection scheme to prepare the rotation state $\ket{r_\varphi}_L$ with a small angle $\varphi$. (a) Example on the $4\times 5$ rotated surface code. The grid region (i.e., the 3 leftmost columns of syndromes) indicates the syndromes used for QED during the projection stage after the 1-FT $R_{ZZ}(\theta)$ gates while the others are used for QEC. (b) Full circuit to prepare the rotation state. Here, SE indicates the syndrome extraction. We first perform 3 rounds of QED post-selection in the QED region, then perform $(d-4)$ rounds of SE and perform QEC using all the $d$ rounds of SE.}
    \label{fig:ProjectionScheme}
\end{figure*}

Without loss of generality, we focus our discussion on preparing the rotation state $\ket{r_\varphi}_L$ in the $d \times (d+1)$ rotated surface code, where the $Z$-distance $d=2k$ is chosen to be even. This even-$d$ assumption simplifies subsequent analysis. We note that preparation of $\ket{r_\varphi}_L$ in the canonical $(d+1) \times (d+1)$ rotated surface code can be achieved by first preparing the state in the $d \times (d+1)$ code then performing code expansion.
Alternative protocols for the direct preparation of $\ket{r_\varphi}_L$ in $(d+1) \times (d+1)$ rotated surface codes are presented in Appendix~\ref{sec:App_extend_rotated}.

Consider preparing a logical $\ket{+}_L$ state on the $d \times (d+1)$ code. For the $d=4$ case (illustrated in \autoref{fig:ProjectionScheme}(a)), we implement a $R_{Z_L}(\varphi)$ gate by applying pairwise $R_{ZZ}(\theta)$ gates on the support of a weight-$d$ logical-$Z$ operator (e.g., qubits $D_0$-$D_3$ in \autoref{fig:ProjectionScheme}(a)), instead of the weight-$1$ $Z$-rotations from \autoref{ssc:BayesianEM}. The noiseless output state becomes:
\begin{equation} \label{eq:psi_id_2}
\begin{aligned}
\ket{\psi_{\mr{id}}} &= [(\cos^2\theta IIII + i^2 \sin^2\theta ZZZZ) \\
&\; + i \cos\theta\sin\theta (IIZZ+ZZII)] \ket{+}_L \\
&= \sqrt{ P_{\mr{pass|id}} } \ket{\psi_{\mr{id,pass}}} + \sqrt{ P_{\mr{fail|id}} } \ket{\psi_{\mr{id,fail}}},
\end{aligned}
\end{equation}
where $P_{\mr{pass|id}}=\cos^4\theta + \sin^4\theta$, $P_{\mr{fail|id}} = 1 - P_{\mr{pass|id}}$ and $\ket{\psi_{\mr{id,pass}}}\propto (\cos^2\theta I_L - \sin^2\theta Z_L)\ket{+}_L \propto R_{X_L}(\pi/4) \ket{r_{\varphi}}_L$ with $\varphi:= -\sin^{-1}\left(\frac{1}{\sqrt{P_{\mr{pass|id}}}} \sin^2\theta\right)\approx \theta^2$.

Suppose we perform syndrome measurement and post-select states with correct syndrome outcomes. This enables preparation of a state approximating the rotation state $\ket{r_\varphi}_L$ up to a Clifford gate $R_{X_L}(\pi/4)$. The complete rotation state preparation protocol is illustrated in \autoref{fig:ProjectionScheme}(b). To suppress syndrome measurement errors, we first execute 3 rounds of syndrome measurements and then post-select only when all syndromes in the designated grid region (\autoref{fig:ProjectionScheme}(b)) are correct.
We restrict post-selection to the three leftmost syndrome columns since they exclusively induce low-order errors. This reduces the QED syndrome count from $O(d^2)$ to $O(d)$, significantly improving success probability.
For the states passing the post-selection, we then utilize all available syndromes (including those outside the grid region) for subsequent QEC procedures.
This ensures comprehensive error correction while maintaining the efficiency benefits of selective post-selection.

More generally, suppose $d=2k$, we denote a chosen weight-$d$ support of the logical $Z_L$ operator to be $\mr{Supp}(Z_L)$ and index the pair of data qubits on $\mr{Supp}(Z_L)$ to be $i=0,1,...,k-1$. In the $i$th pair of data qubits $D_{2i}$ and $D_{2i+1}$, we apply the $ZZ$-rotation gate $R_{ZZ,i}(\theta)$ on them. We can expand the overall effect of $k$ $R_{ZZ}(\theta)$ gates as
\begin{equation} \label{eq:kR_ZZ_expand}
\begin{aligned}
\prod_{i=1}^k R_{ZZ,i}(\theta) &= \sum_{b=0}^{2^k} u_{|b|} (Z_{(2)})^b,
\end{aligned}
\end{equation}
where $b$ is a $k$-bit string, $|b|$ denotes the weight of $b$, $u_w:= i^w \sin^w \theta \cos^{k-w}\theta$ and
\begin{equation}
(Z_{(2)})^b := \prod_{i: b_i=1} Z_{2i} Z_{2i+1}.
\end{equation}
Applying the rotation gate to $\ket{+}_L$, the noiseless resulting state is
\begin{equation} \label{eq:psi_id_m2}
\begin{aligned}
\ket{\psi_{\mr{id}}} &= \prod_{i=1}^k R_{ZZ,i}(\theta) \ket{+}_L = \sum_{b=0}^{2^k} u_{|b|} (Z_{(2)})^b \ket{+}_L \\
&= \sqrt{ P_{\mr{pass|id}} } \ket{\psi_{\mr{id,pass}}} + \sqrt{ P_{\mr{fail|id}} } \ket{\psi_{\mr{id,fail}}},
\end{aligned}
\end{equation}
where
\begin{equation} \label{eq:Ppass_id_psipass}
\begin{aligned}
P_{\mr{pass|id}} &:= \cos^{2k}\theta + \sin^{2k} \theta, \\
\ket{\psi_{\mr{id,pass}}} &\propto (u_{00...0} I_L + u_{11...1} Z_L ) \ket{+}_L.
\end{aligned}
\end{equation}
Therefore, the ideal state after the post-selection is~\cite{toshio2024practical}
\begin{equation} \label{eq:ideal_state_k}
\begin{aligned}
&\ket{\psi_{\mr{id,pass}}} \propto (\cos^k\theta I_L + i^k\sin^k\theta Z_L) \ket{+}_L \\
&=
\begin{cases}
e^{-i\frac{\pi}{4}} R_{X_L}(\frac{\pi}{4})\ket{r_{(-1)^j\varphi}}_L, &  k =2j, \\
\ket{r_{(-1)^j\varphi}}_L, &  k =2j+1.
\end{cases}
\end{aligned}
\end{equation}
Here, the target angle $\varphi$ is related to $\theta$ by $\varphi:= \sin^{-1}(\frac{1}{\sqrt{P_{\mr{pass|id}}}} \sin^k \theta) \approx \theta^k + O(\theta^{k+2})$. 

From \autoref{eq:ideal_state_k}, we can see that the form of the ideal rotation state $\ket{r_\varphi}_L$ is only related to the number of physical rotation gates $k$. Indeed, we can generalize the above discussion to the case where we first group the $d$ data qubits on $\mr{Supp}(Z_L)$ to $k$ groups each containing $m$ qubits, then perform the weight-$m$ $Z$-rotation $R_{Z_k}$ on each group of qubits. After the projection, the ideal state can be related to the target rotation state $\ket{r_\varphi}_L$ by \autoref{eq:ideal_state_k}. This is the $(m,k,d)$-multi-rotation protocol introduced by Toshio~et~al.~\cite{toshio2024practical} as a generalization of Choi's protocol. However, the multi-rotation protocol itself does not guarantee the leading-order error suppression. Specifically, if we adopt the na\"ive non-transversal $R_{ZZ}(\theta)$ gate construction~\cite{toshio2024practical}, the undetectable $ZZ$ error will occur with the probability of $O(p)$, which leads to a target state preparation with the error of $O(|\varphi|p)$.

We now achieve the error scaling of $O(|\varphi|p^2)$ by introducing the error-structure-tailored $R_{ZZ}(\theta)$ gate in \autoref{sec:RZZgate4112} in the multi-rotation protocol. Here, we focus on the case with $m=2$, i.e., $d=2k$. Later in \autoref{fig:SuccProb}, we will discuss the expansion of $R_{ZZ}(\theta)$ gate to $R_{ZZZ}(\theta)$ gate with $m=3$.

\begin{proposition} \label{prop:ProjectionFT}
Consider to prepare the rotation state $\ket{r_\varphi}_L$ on a $d\times (d+1)$ surface code where $d=2k$ by the projection scheme in \autoref{fig:ProjectionScheme}(b), where the 1-FT $R_{ZZ}(\theta)$ gates are implemented by the dispersive coupling Hamiltonian in Proposition~\ref{prop:4112FTdispersive} or the ancilla-based circuit in \autoref{fig:RZZgate_ancilla} and Proposition~\ref{prop:4112FTancilla}. Then, the trace distance of the noisy output state to the ideal state is $O(|\varphi|\cdot p^2)$ conditioning on the successful projection to the code space.
\end{proposition}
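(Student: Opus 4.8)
The plan is to reduce the claim to the Bayesian error-suppression analysis of \autoref{ssc:BayesianEM}, the only essential change being that the error-structure-tailored $R_{ZZ}(\theta)$ gates promote the leading \emph{undetectable} angle-shifting fault from first to second order in $p$. First I would write the state conditioned on passing the QED in the grid region of \autoref{fig:ProjectionScheme}(b) as a Bayesian mixture in the spirit of \autoref{eq:rho_pass},
\begin{equation}
\rho_{\mr{pass}} = P_{\mr{id|pass}}\,\rho_{\mr{id,pass}} + \sum_{\mr{ud}} P_{\mr{ud|pass}}\,\rho_{\mr{ud,pass}} + (\text{higher order}),
\end{equation}
where $\rho_{\mr{id,pass}} = \ket{r_\varphi}\bra{r_\varphi}$ up to the fixed Clifford $R_{X_L}(\pi/4)$ of \autoref{eq:ideal_state_k}, and the sum runs over fault patterns that land in the trivial-syndrome sector yet realize a different logical rotation angle. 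As in \autoref{eq:Dtr_rho_pass}, the trace distance to the ideal state is then controlled by $\sum_{\mr{ud}} P_{\mr{ud}}\,(P_{\mr{pass|ud}}/P_{\mr{pass|id}})\,\sin\Delta_\varphi$, so the whole argument reduces to bounding the probability $P_{\mr{ud}}$ of the leading angle-shifting undetectable pattern and tracking its angle mismatch $\Delta_\varphi$.

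The central step is to invoke the $1$-FT guarantee of the constituent $R_{ZZ}(\theta)$ gate from \autoref{prop:4112FTdispersive} and \autoref{prop:4112FTancilla}. By \autoref{def:FTgate_error_tailored}, any single fault occurring inside one $R_{ZZ}(\theta)$ gate is either flagged by the following syndrome extraction or acts trivially on the logical subspace after ideal decoding. Consequently a rotation-coincident single-pair $Z_{2i}Z_{2i+1}$ operator — the analogue of the weight-$1$ $ZII$ error of \autoref{eq:psi_ud1}, i.e.\ exactly the first-order angle shift that spoils the na\"ive construction — can no longer be produced by a single fault and first appears at two faults, so $P_{\mr{ud}} = O(p^2)$ rather than the $O(p)$ of the na\"ive gate. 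I would then verify the geometric factor reproduces the desired $|\varphi|$ suppression: such an undetectable single-pair $ZZ$ connects the one-pair amplitude $u_1\sim\theta$ of \autoref{eq:psi_id_m2} to $I_L$ and the $(k-1)$-pair amplitude $u_{k-1}\sim\theta^{k-1}$ to $Z_L$, so $\rho_{\mr{ud,pass}} = \ket{r_{\varphi_1}}\bra{r_{\varphi_1}}$ with $\varphi_1\approx\theta^{k-2}$ and $P_{\mr{pass|ud}}\approx\theta^2$. With $\Delta_\varphi\approx\theta^{k-2}$ and $P_{\mr{pass|id}}\approx 1$, one finds $(P_{\mr{pass|ud}}/P_{\mr{pass|id}})\sin\Delta_\varphi \approx \theta^2\cdot\theta^{k-2}=\theta^k\approx|\varphi|$, exactly mirroring the $m=1$ bookkeeping and giving $D_{\mr{tr}} = O(|\varphi|\,p^2)$.

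It remains to account for faults at every other circuit location under the worst-case depolarizing model — the $\ket{+}_L$ preparation, the repeated syndrome-extraction rounds, idling, and the gates lying outside the grid region. Here I would distinguish by timing and location: a single fault producing a $ZZ$-type data error \emph{during} syndrome extraction registers a spacetime detection event and is removed by the three QED rounds, whereas a single genuine Pauli fault \emph{escaping} the grid region is an ordinary error handled by the full-distance QEC step and cannot by itself form an undetectable logical-$Z$ operator at distance $d$. The main obstacle is making this case analysis exhaustive while post-selection is restricted to only the three leftmost syndrome columns: one must certify that no single fault anywhere — including one in the QEC region, possibly dressed by the decoder's recovery — can masquerade as a first-order angle shift across the $3$ QED plus $(d-4)$ SE rounds, so that the leading uncaught angle-shifting configuration genuinely sits at order $p^2$. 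Finally, since the statement concerns the worst-case trace distance, I would carry over the diamond-norm estimate of \autoref{eq:rotation_diamond} to bound the residual coherent over-rotation $\mc{E}_\theta$ and invoke the coherent-to-Pauli conversion of \autoref{ssc:Prob_Coher_EC}, ensuring the $O(|\varphi|\,p^2)$ bound holds in trace distance and not merely in average infidelity.
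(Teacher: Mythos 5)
Your overall strategy is the same as the paper's: write the post-selected state as a Bayesian mixture, use the $1$-FT property of the $R_{ZZ}(\theta)$ gates to push the undetectable pair error $Z_{2i}Z_{2i+1}$ to second order, and combine $P_{\mr{pass|ud}}\approx\theta^2$ with $\Delta_\varphi\approx\theta^{k-2}$ to get $\theta^k\approx|\varphi|$; that bookkeeping matches \autoref{eq:Dtr_rho_pass_m2} and \autoref{eq:Dtrpass_approx} exactly. (The closing appeal to the diamond norm and the coherent-to-Pauli conversion of \autoref{ssc:Prob_Coher_EC} is not needed for the proposition as stated, which only concerns the trace distance of the conditioned output state.)

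There is, however, one genuine gap in the part of the argument covering faults outside the rotation gates. You assert that ``a single fault producing a $ZZ$-type data error during syndrome extraction registers a spacetime detection event and is removed by the three QED rounds.'' This is false for a generic syndrome-extraction schedule: a single ancilla fault in the middle of a $Z$-stabilizer extraction circuit produces a weight-$2$ hook error on the data qubits touched by the remaining CNOTs, with no detection event on that ancilla. If the CNOT order is such that this hook lands as $Z_{2i}Z_{2i+1}$ on one of the designated pairs of $\mr{Supp}(Z_L)$, it is exactly of the form $(Z_{(2)})^b$ and therefore invisible to the post-selection --- a first-order undetectable angle shift that destroys the $O(|\varphi|p^2)$ scaling. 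The same issue arises during the $\ket{+}_L$ preparation rounds. The paper closes this by an explicit scheduling choice: the CNOT order in the ancilla-based syndrome measurement is assigned so that $Z$-type hook errors propagate perpendicular to $Z_L$, hence land outside the $(Z_{(2)})^b$ family and are caught by the QED; only then does $\Pr(Z_{2i}Z_{2i+1})=O(p^2)$ hold at every stage (preparation, rotation gates, and the three post-selection rounds), giving $P_{\mr{ud(2)}}=O(kp^2)$. You correctly flag the exhaustiveness of this case analysis as ``the main obstacle,'' but the hook-error mechanism and its resolution by CNOT ordering is the specific missing idea needed to complete the proof.
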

The proof of Proposition~\ref{prop:ProjectionFT} can be found in Appendix~\ref{ssc:AppProofProjectionFT}. The intuition behind the proof is that among all the errors with weight less or equal than $2$, only Pauli $Z_{2i}Z_{2i+1}$-type errors or the errors with support on the Pauli $Z_{2i}Z_{2i+1}$ space will introduce undetectable logical errors. Thanks to the error-structure-tailored FT design in the $R_{ZZ}(\theta)$ gates (Section~\ref{sec:RZZgate4112}), we can prove that first-order errors do not have support on the Pauli $Z_{2i}Z_{2i+1}$ operator space. Consequently, undetectable errors scale as $O(p^2)$. Furthermore, Bayesian error analysis similar to Section~\ref{ssc:BayesianEM} indicates that when conditioning on passing post-selection, the trace distance of the resulting state to the ideal $\ket{r_\varphi}_L$ state scales as $O(|\varphi|\cdot p^2)$. Specifically, we have shown that the trace distance of the prepared state $\rho$ to $\ket{r_\varphi}_L$ can be estimated by
\begin{equation} \label{eq:Dtr_rho_rvarphi}
D_{\mr{tr}}(\rho,\ket{r_\varphi}_L) = P_{\mr{ud(2)}}|\varphi| + O(|\varphi|^2 p^4),
\end{equation}
where $P_{\mr{ud(2)}}$ is the probability of the weight-2 undetectable error occurs,
\begin{equation} \label{eq:P_mrud2_main}
\begin{aligned}
P_{\mr{ud(2)}} &= \sum_{i=1}^k \Pr(Z_{2i} Z_{2i+1}, I_{\backslash{2i,2i+1}}) \\
&\leq \sum_{i=1}^k \Pr(Z_{2i} Z_{2i+1}).
\end{aligned}
\end{equation}
Here, $\Pr(Z_{2i} Z_{2i+1}, I_{\backslash{2i,2i+1}})$ denotes the probability of a Pauli $Z$ error occurring on qubits $2i$ and $2i+1$ supporting $\bar{Z}$ with no detectable errors on other qubits within the post-selection region defined in \autoref{fig:ProjectionScheme}(a), during the full QED procedure in \autoref{fig:ProjectionScheme}(b). Meanwhile, $\Pr(Z_{2i} Z_{2i+1})$ refers to the probability of the Pauli error $Z_{2i} Z_{2i+1}$, which is of order $O(p^2)$.
Thus, $P_{\mathrm{ud(2)}} = O(k \cdot p^2)$. However, as $k$ increases, the value of $\Pr(Z_{2i} Z_{2i+1}, I_{\backslash{2i,2i+1}})$ becomes significantly smaller than $\Pr(Z_{2i} Z_{2i+1})$. Later we will see that the value of $P_{\mathrm{ud(2)}}/p^2$ increases slower than a linear function of $k$ when we increase $k$. 

The implementation of the projection scheme for a general surface code with $d=2k$ is shown in \autoref{fig:ProjectionScheme}(b), where we: (i) Prepare the logical $\ket{+}_L$ state; (ii) Perform $k$ pairwise 1-FT $R_{ZZ}(\theta)$ gates on the logical $Z$ edge; (iii) Perform 3 rounds of syndrome projection and use the three leftmost syndrome columns for QED; (iv) If the state passes post-selection: we perform additional syndrome measurements, execute QEC and store it for later usage.

To demonstrate the performance of the projection scheme, we perform a circuit-level noise simulation on a $6\times 7$ rotated surface code and set the target angle to be $\varphi = 1\cdot 10^{-3}$ $\mr{rad}$. The results are shown in \autoref{fig:projection_num}. 
Similar to the expansion protocol simulation, we ignore the error structure in all circuit elements except the $R_{ZZ}(\theta)$ rotation gates while assuming worst-case canonical depolaring noise; for the $R_{ZZ}(\varphi)$ gates, we first perform Lindbladian simulation in QuTiP using parameters similar to \autoref{fig:Num4112coherent} then extract the Pauli error model for Stim circuit-level simulation. 
Here, since $\varphi$ is now small, which corresponds to non-Clifford implementation of $R_{ZZ}(\theta)$ gate, we cannot directly simulate the non-Clifford circuits. We consider similar numerical techniques in Ref.~\cite{toshio2024practical} to faithfully estimate the trace distance of the output noisy state to the ideal rotation state $\ket{r_\varphi}_L$. The simulation details can be found in Appendix~\ref{ssc:AppProjNum}. 

\begin{figure}[htbp]
    \includegraphics[width=\linewidth]{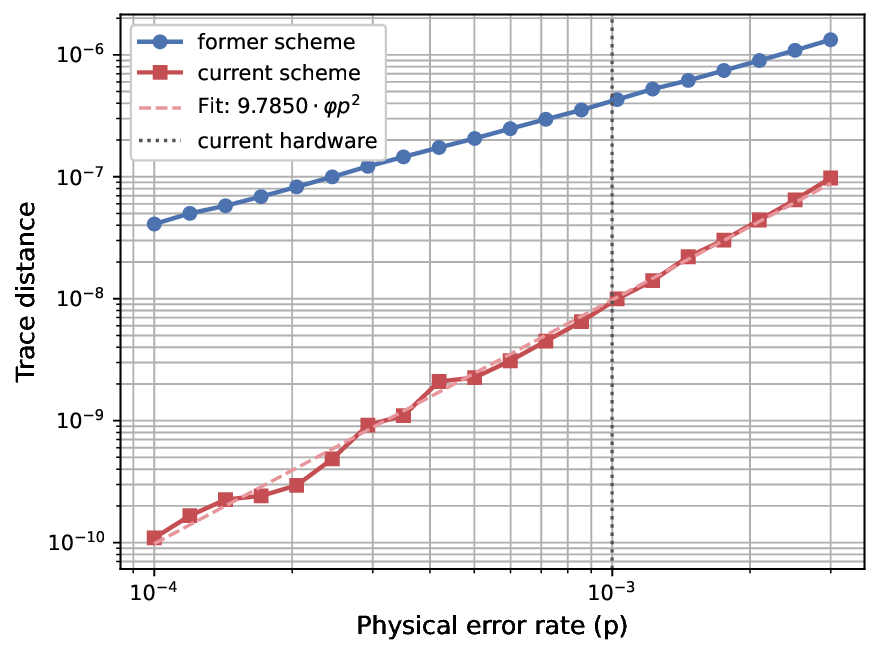}
    \caption{Trace distance of the projection scheme to prepare a logical rotation state $\ket{r_\varphi}_L$ on a $6\times 7$ rotation surface code with the target angle $\varphi=1\cdot 10^{-3}$. The numerical simulation is performed with a maximal sample number of $1\cdot 10^8$ with the methods similar to Ref.~\cite{toshio2024practical}, which is detailed in Appendix~\ref{ssc:AppProjNum}.} 
    \label{fig:projection_num}
\end{figure}

In \autoref{fig:projection_num}, the former scheme (the blue curve) implement the canonical $R_{ZZ}(\theta)$ gates with the depolarizing noise model, whose state preparation error is $O(|\varphi|\cdot p)$. Thanks to the 1-FT property of our $R_{ZZ}(\theta)$ gates, we can prepare $\ket{r_\varphi}_L$ with the trace distance of $O(|\varphi|\cdot p^2)$, illustrated by the red curve. The overhead $c(k)$ is about $9.8$ when $k=3$. Furthermore, under the current hardware condition with $p=1\cdot 10^{-3}$, one can prepare $\ket{r_\varphi}_L$ with an trace distance error smaller than \blue{$1\cdot 10^{-8}$}. 
We remark that the trace distance estimation shown in \autoref{fig:projection_num} only incorporates errors introduced during the QED and post-selection stage in \autoref{fig:ProjectionScheme}(b). This is reasonable when the code distance gets larger: recalling that the surface code distance is $d=2k$, as $k$ increases, errors introduced during the QEC stage can be suppressed exponentially.

To ensure that the projection scheme can be performed when the surface code distance gets larger, we check the the dependence of the trace distance of the prepared state to $\ket{r_\varphi}_L$ to the number of rotation gates $k$. From \autoref{eq:Dtr_rho_rvarphi} and \autoref{eq:P_mrud2_main} we have
\begin{equation} \label{eq:Dtr_Pud2}
D_{\mr{tr}}(\rho,\ket{r_\varphi}_L) \approx P_{\mr{ud(2)}}|\varphi| = c(k) p^2 \cdot |\varphi|.
\end{equation}
Therefore, we only need to check how the coefficient $c(k):=P_{\mr{ud(2)}}/p^2 \approx D_{\mr{tr}}(\rho,\ket{r_\varphi}_L)/(p^2 \cdot |\varphi|) $ increases with $k$. 
\autoref{fig:error_scale} illustrates the relationship of $c(k)$ to $k$. We can see that while $c(k)$ increases slower then a linear function of $k$, $c(k) \leq 3.3\cdot k$ serves as a good upper bound \blue{when $k\geq 3$}. 
The reason we consider the rotation gate number $k \geq 3$ is twofold: (i) For $k = 0$ or $1$, the protocol becomes trivial. (ii) For $k = 2$, the error rate of the projection scheme is significantly higher because the erroneous rotation angle $\varphi_1$ is fixed at $\pi/4$, leading to substantial deviations in the output.

\begin{figure}[htbp]
    \centering
    \includegraphics[width=0.4\textwidth]{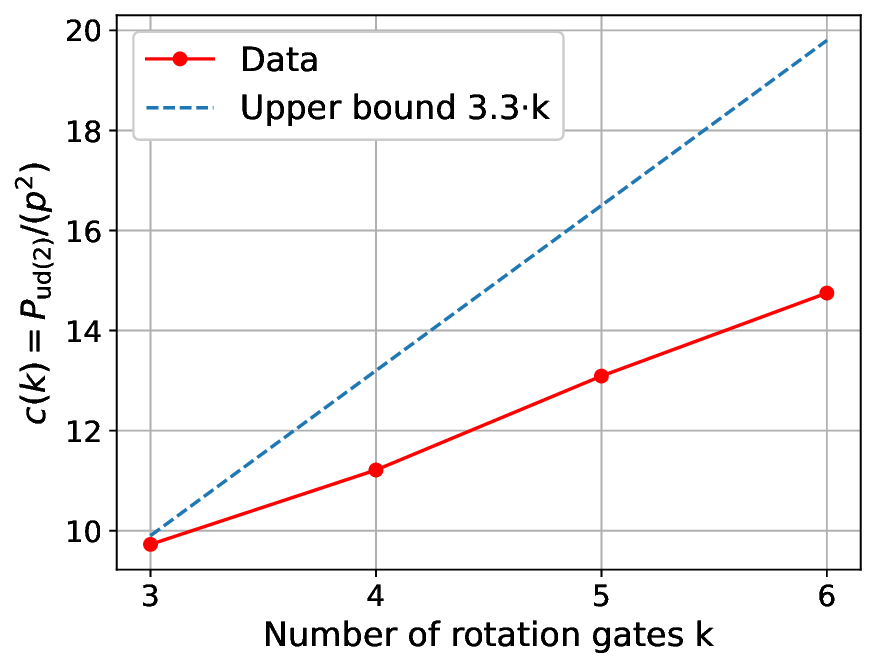}
    \caption{Dependence of $c(k)=P_{\mr{ud(2)}}/p^2$ with the number of rotation gates $k$. Recall that the size of the rotated surface code is $2k\times (2k+1)$.}
    \label{fig:error_scale}
\end{figure}

Another important issue is to ensure the successful probability of preparing the target state remains reasonably high as the code distance $d$ increases. When the number of rotation gates $k$ becomes larger, the success probability decreases due to two main reasons:
\begin{enumerate}
\item When the code distance $d$ increases, the number of detectable error patterns grows.
\item Even in the absence of errors, the probability of projecting the state to the code space $p_{\mathrm{pass|id}} \approx 1 - k\theta^2$ decreases as $k$ increases (see Appendix~\ref{ssc:AppProofProjectionFT} for details). Recall that $\theta \approx \varphi^{1/k}$ when $|\varphi| \ll 1$.
\end{enumerate}
To increase the success probability $p_{\mathrm{pass|id}}$, we follow the approach in Ref.~\cite{toshio2024practical} and expand our $ZZ$-rotation gate non-transversally to a $ZZZ$-rotation gate. See Appendix~\ref{sec:App_extend_rotated} for a detailed discussion.

In \autoref{fig:SuccProb}, we show the success probability of implementing the projection scheme with rotated surface codes of distances $d=12$ and $d=18$. We compare two methods: the $m=2$ method uses the original dispersive $ZZ$-rotation gate design from \autoref{sec:RZZgate4112}, while the $m=3$ method uses the $ZZZ$-rotation constructed from dispersive $ZZ$ rotations via non-transversal expansion.
The success probability for the $m=2$ method is approximately $12.0\%$ for $d=12$ and $1.8\%$ for $d=18$. With the extension, these probabilities increase to $18.7\%$ and $5.4\%$, respectively. Note that the QED stage in the projection scheme (\autoref{fig:ProjectionScheme}(b)) involves only 4 rounds of syndrome detection and physical $ZZ$-rotation gates. Consequently, we can prepare the $\ket{r_\varphi}$ rotation state for $d=12$ within an average of 22 rounds of syndrome extraction without additional footprint.

\begin{figure}[htbp]
    \includegraphics[width=\linewidth]{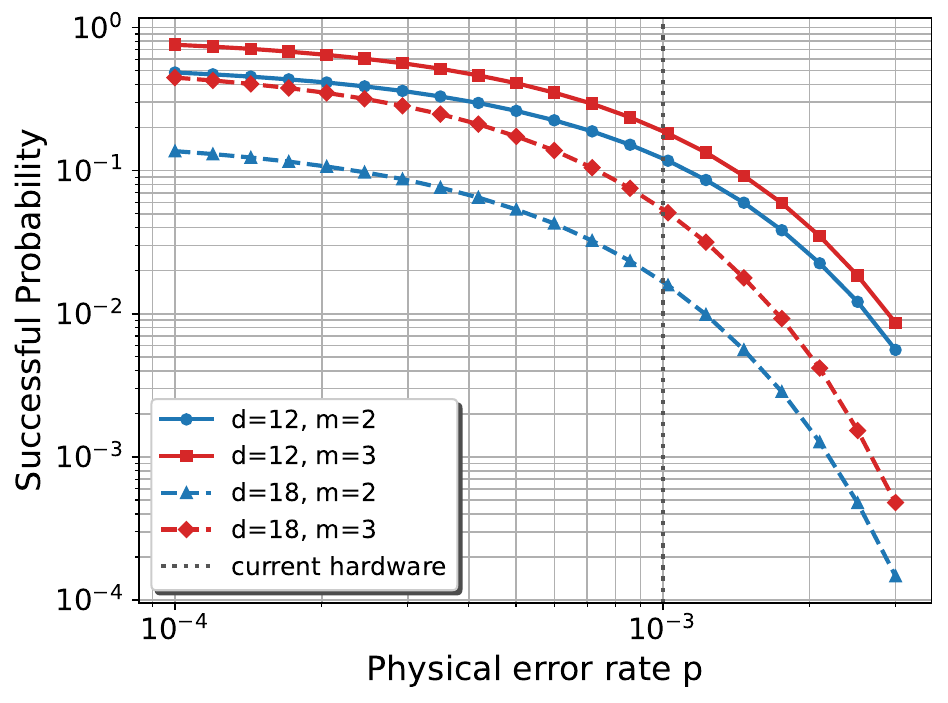
}
    \caption{Successful probability of the projection scheme with the surface code of $d=12$ and $d=18$ with respect to different physical error rate $p$. Here, we fix the rotation angle $\varphi=1\cdot 10^{-3}$. The $m=2$ method use the dispersive $ZZ$-rotation gate design from \autoref{sec:RZZgate4112}, while the $m=3$ method use the extended $ZZZ$-rotation gate.} 
    \label{fig:SuccProb}
\end{figure}

\section{Error mitigation and noise estimation} \label{sec:Error_Count}

In \autoref{sec:projection} we have shown that the projection scheme with the error-structure-tailored FT design and the Bayesian error suppression~\cite{choi2023fault,toshio2024practical} can be used to prepare small-angle rotation state $\ket{r_\varphi}_L$ with a low error rate of $O(|\varphi|\cdot p^2)$, which is below \blue{$1\cdot 10^{-8}$} under the current hardware condition. In this section, we estimate the number of quantum rotation gates we can implement with the prepared rotation state ancillae under proper quantum error mitigation techniques. 

Suppose we have prepared the small-angle ancilla state $\ket{r_\varphi}_L$ following the projection scheme in Sec.~\ref{sec:projection}. If we inject the small-rotation state into a target quantum state by the circuits in \autoref{fig:expansion_num} and succeed in the first trial (recall that the rotation state injection always succeeds with a probability of $1/2$), we will implement the following noisy rotation channel,
\begin{equation} \label{eq:N_varphi_d_QEM}
\begin{aligned}
\mc{N}_{\varphi}(\cdot) = (1- P_L) \mc{R}_{\varphi}(\cdot) + P_L \mc{R}_{\varphi_1}(\cdot) + O(|\varphi|^2 p^4),
\end{aligned}
\end{equation}
where $\mc{R}_{\varphi}(\cdot)$ indicates the ideal rotation channel with the target small angle $\varphi$ while $\mc{R}_{\varphi_1}(\cdot)$ is the wrong rotation with the angle $\varphi_1$. 
When $\varphi$ is small, we have $\varphi \approx \theta^k$ and $\varphi_1 \approx -\theta^{k-2}$. The wrong rotation probability $P_L$ is the conditional probability when the undetected error passes the syndrome detection 
\begin{equation} \label{eq:P_L}
\begin{aligned}
P_L &= P_{\mr{ud(2)|pass}} = \frac{ P_{\mr{ud(2)}} P_{\mr{pass|ud(2)}} }{ P_{\mr{pass}} }\\
&\approx P_{\mr{ud(2)}} \theta^2 = O(p^2 \theta^2).
\end{aligned}
\end{equation}
The derivation of \autoref{eq:N_varphi_d_QEM} is in the proof of Proposition~\ref{prop:ProjectionFT}.

We can decompose the rotation channel in \autoref{eq:N_varphi_d_QEM} as $\mc{N}_{\varphi}(\cdot) = \mc{E}_{\varphi} \circ \mc{R}_{\varphi}(\cdot)$ where
\begin{equation} \label{eq:E_varphi_d_QEM}
\mc{E}_{\varphi}(\cdot) = (1- P_L) \mc{I}(\cdot) + P_L \mc{R}_{\Delta\varphi}(\cdot) + O(|\varphi|^2 p^4),
\end{equation}
the noise induced by the rotation channel, which is a coherent stochastic rotation channel. Here, $\mc{I}(\cdot)$ is the identity channel and $\Delta\varphi := \varphi_1 - \varphi = O(\theta^{k-2})$.
The size of the noise $\mc{E}_{\varphi}(\cdot)$ is characterized by the diamond-norm distance $\epsilon_\diamond(\mc{E}_{\varphi}):=\frac{1}{2}\|\mc{E}_{\varphi}-\mc{I}\|_\diamond$. Similar to \autoref{eq:rotation_diamond}, we have
\begin{equation}
\epsilon_\diamond(\mc{E}_{\varphi}) = P_{\mr{ud(2)|pass}} |\Delta_\varphi| \sqrt{1+\Delta_\varphi^2} \approx |\varphi| P_L.
\end{equation}

\autoref{eq:E_varphi_d_QEM} characterizes the noise introduced by the imperfect resource state. To provide a complete noise estimation, the following two issues must be addressed:
\begin{enumerate}
\item \autoref{eq:N_varphi_d_QEM} assumes the injection succeeds on the first attempt. In practice, the RUS injection protocol from \autoref{sec:expansion} and \autoref{fig:RotationInjection} must be employed. A full noise analysis of this iterative procedure is required.
\item The noise channel in \autoref{eq:E_varphi_d_QEM} is a coherent stochastic rotation channel. For practical error mitigation and streamlined resource estimation, this should be converted to a standard Pauli noise channel compatible with Pauli error cancellation techniques.
\end{enumerate}
We now analyze these two problems. 

\subsection{Noise channel after the repeat-until-success procedure} \label{ssc:RUSnoise}

Consider the RUS protocol for rotation-gate injection as described in \autoref{sec:expansion} and illustrated in \autoref{fig:RotationInjection}. For trial numbers $K = 1, 2, 3, \dots$, we implement the injection with rotation angles that scale exponentially with $K$: $\varphi^{(K)} = 2^{K-1}\varphi$, where $\varphi^{(1)} = \varphi$, $\varphi^{(2)} = 2\varphi$, $\varphi^{(3)} = 4\varphi$, and so on, until the operation succeeds.
When the injection succeeds on the $K$th trial, the resulting rotation channel is
\begin{equation}
\begin{aligned}
\mc{N}^{K}_\varphi(\cdot) :&= \mc{N}_{2^{K-1}\varphi}\circ \mc{N}_{-2^{K-2}\varphi} \circ ... \circ \mc{N}_{-2\varphi} \circ \mc{N}_{-\varphi}(\cdot)\\
&= \mc{E}^K_\varphi \circ \mc{R}_\varphi(\cdot),
\end{aligned}
\end{equation}
where $\mc{E}^{K}_\varphi(\cdot)$ is an effective error channel that describes the accumulated error,
\begin{equation}
\mc{E}^K_\varphi(\cdot) = \mc{E}_{2^{K-1}\varphi}\circ \mc{E}_{-2^{K-2}\varphi} \circ ... \circ \mc{E}_{-2\varphi} \circ \mc{E}_{-\varphi}(\cdot).
\end{equation}

We remark that, when the rotation angle $\varphi^{j}$ where $j=1,2,...K$ goes beyond the region of $[-\frac{\pi}{8}, \frac{\pi}{8})$, we can always reduce the value of this angle into this region by applying logical $S$ gate. We define the wrapping function $\Lambda(x) := (x+ \frac{\pi}{8} )\text{ mod } \frac{\pi}{4} - \frac{\pi}{8}$ which ensures that $|\Lambda(x)|\leq \frac{\pi}{8}$.
Taking the wrapping into account, we can define the average channel $\mc{N}_\varphi$ over any possible $K$,
\begin{equation} \label{eq:tildeN}
\tilde{\mc{N}}_\varphi(\cdot) := \sum_{K=1}^\infty \left(\frac{1}{2} \right)^K \mc{N}_\varphi^K(\cdot) = \tilde{\mc{E}}_\varphi \circ \mc{R}_\varphi(\cdot),
\end{equation}
where $\tilde{\mc{E}}_\varphi$ denotes the effective error channel introduced by the whole RUS process,
\begin{equation}
\tilde{\mc{E}}_\varphi(\cdot) := \sum_{K=1}^\infty \left(\frac{1}{2}\right)^K \mc{E}_\varphi^K(\cdot).
\end{equation}
Following the analysis in Appendix~E of Ref.~\cite{toshio2024practical}, we have
\begin{equation} \label{eq:epsilon_tildeE}
\epsilon_\diamond(\tilde{\mc{E}}_\varphi) = O(|\varphi| P_L).
\end{equation}
In what follows, we discuss how to mitigate the error $\tilde{\mc{E}}_\varphi$.


\subsection{Probabilistic coherent error cancellation} \label{ssc:Prob_Coher_EC}

To mitigate the error with the form of \autoref{eq:E_varphi_d_QEM}, we consider to introduce the probabilistic coherent error cancellation techniques introduced by Toshio~et.~al.~\cite{toshio2024practical}. Consider the noise rotation channel of $\mc{N}_{\varphi}(\cdot)$ in \autoref{eq:N_varphi_d_QEM}. After the implementation of $\mc{N}_{\varphi}(\cdot)$, we append the following compensation channel,
\begin{equation} \label{eq:C_varphi_d}
\begin{aligned}
\mc{C}_{\varphi}(\cdot) &= (1-P_L)\, \mc{I}(\cdot) + P_L \,\tilde{\mc{N}}_{-\Delta\varphi}(\cdot) \\
&\approx (1-P_L) \, \mc{I}(\cdot) + P_L \, \mc{R}_{-\Delta\varphi}(\cdot) + O(|\varphi| p^4).
\end{aligned}
\end{equation}
Here, $\tilde{\mc{N}}_{-\Delta\varphi}(\cdot)$ is the averaged noisy small rotation channel in \autoref{eq:tildeN} with the target angle $-\Delta\varphi=\varphi - \varphi_1$. In the second line, we have used the fact that $P_L=P_{\mr{ud(2)|pass}}=O(p^2\theta^2)$ from \autoref{eq:P_L} and $\tilde{\mc{E}}_{-\Delta\varphi}=O(|\Delta\varphi|P_L)=O(|\theta^{k-2}| P_L)$ from \autoref{eq:epsilon_tildeE}.

Appending $C_{\varphi}$ after the noise channel $E_{\varphi}$ we have,
\begin{equation} \label{eq:Ec_varphi_rho}
\begin{aligned}
&\quad \mc{E}^c_{\varphi}(\rho) = \mc{C}_{\varphi}\circ \mc{E}_{\varphi} (\rho) \\
& \approx (1-P_L)^2 \rho + P_L^2 \mc{R}_{-\Delta\varphi}\circ \mc{R}_{\Delta\varphi} (\rho)  \\
& \quad + P_L(1-P_L) ( \mc{R}_{-\Delta\varphi} + \mc{R}_{\Delta\varphi} ) + O(|\varphi| p^4) \\
& \approx ( 1 - 2 P_L \sin^2 \Delta\varphi ) \rho + 2 P_L \sin^2 \Delta\varphi\,Z\rho Z + O(|\varphi| p^4),
\end{aligned}
\end{equation}
which becomes an incoherent dephasing error and hence easy to be mitigated by probabilistic error cancellation methods~\cite{temme2017error,endo2018practical}. Then, the worst-case error of the noise channel $\mc{E}^c_{\varphi}(\rho)$ given by the diamond distance is
\begin{equation} \label{eq:epsilon_E}
\epsilon_\diamond(\mc{E}^c_{\varphi}) = 2 P_L \sin^2(\Delta\varphi) \approx 2 P_{\mr{ud(2)}} \theta^{2k-2},
\end{equation}

Now, we consider to embed the above coherent error cancellation techniques to the RUS process. Suppose the rotation-state injection succeed in the $K$th trial, we append the compensation channel after each rotation gate,
\begin{equation}
\begin{aligned}
\mc{N}_\varphi^{c,K}(\cdot) :&= \mc{N}_{2^{K-1}\varphi}^c \circ \mc{N}_{-2^{K-2}\varphi}^c \circ ... \circ \mc{N}_{-2\varphi}^c \circ \mc{N}_{-\varphi}^c \\
&=: \mc{E}_\varphi^{c,K} \circ \mc{R}_\varphi(\cdot), \\
\mc{E}_\varphi^{c,K}(\cdot) :&= \mc{E}_{2^{K-1}\varphi}^c \circ \mc{E}_{-2^{K-2}\varphi}^c \circ ... \circ \mc{E}_{-2\varphi}^c \circ \mc{E}_{-\varphi}^c.
\end{aligned}
\end{equation}
The averaged rotation and noise channel of the RUS process is then given by
\begin{equation}
\begin{aligned}
\tilde{\mc{N}}_\varphi^{c}(\cdot) &= \sum_{K=1}^\infty \left(\frac{1}{2}\right)^{K} \mc{N}_\varphi^{c,K}(\cdot), \\
\tilde{\mc{E}}_\varphi^{c}(\cdot) &= \sum_{K=1}^\infty \left(\frac{1}{2}\right)^{K} \mc{E}_\varphi^{c,K}(\cdot).
\end{aligned}
\end{equation}
Ignoring the higher-order terms, we can write out the explicit form of the averaged noise channel,
\begin{equation} \label{eq:tildeE_varphi_rho_avg}
\tilde{\mc{E}}_\varphi^{c}(\rho) = (1- \tilde{P}_L) \rho + \tilde{P}_L Z\rho Z.
\end{equation}
With the averaged logical error rate
\begin{equation} \label{eq:tildeP_L_varphi}
\tilde{P}_L(\varphi) = \sum_{K=1}^\infty \left( \frac{1}{2}\right)^K \sum_{M=1}^K \epsilon_\diamond(\mc{E}_{2^{M-1}\varphi}^c).
\end{equation}
From \autoref{eq:epsilon_E} we can see that, $\epsilon_\diamond(\mc{E}^c_{2^{n-1}\varphi})$ scales as $O(|\theta|^{2k-2} P_{\mr{ud(2)}} ) = O(|\varphi|^{2-2/k} P_{\mr{ud(2)}} )$ when the parameter $n$ is small. However, when the trial number $K$ satisfies $2^{K-1}\varphi \geq 1$, the corresponding accumulated error $\sum_{n=1}^K \epsilon_\diamond(\mc{E}^c_{2^{n-1}\varphi})$ cannot be suppressed by the value of $\varphi$ and reaches $O(P_{\mr{ud(2)}})$. Since the above case appears in the RUS process with the probability of $2^{-K}\approx O(|\varphi|)$, the accumulated error scales with $O(|\varphi| P_{\mr{ud(2)}})$. 

We can then express the averaged error of the RUS process with coherent error mitigation as
\begin{equation} \label{eq:tildeP_Lphi_alpha}
\tilde{P}_L(\varphi) = \alpha_{\mr{RUS}} |\varphi| p^2.
\end{equation}
The prefactor $\alpha_{\mr{RUS}}$ indicates the cost overhead introduced by the RUS and coherent error mitigation procedure, which is determined by \autoref{eq:epsilon_E} and \autoref{eq:tildeP_L_varphi}. When the value of $P_{\mathrm{ud(2)}}$ is given, we can numerically estimate $\alpha_{\mr{RUS}}$. Recall from \autoref{eq:Dtr_Pud2} that $P_{\mathrm{ud(2)}} = c(k)\cdot p^2$. Here, we choose to set the overhead $c(k)=14.7$ based on the value of $k=6$ in \autoref{fig:error_scale}. This moderate $k$ value is suitable for the projection scheme with a reasonable successful probability. From \autoref{fig:SuccProb} we can see that, with the extended protocol of $m=3$ and $k=6$, we can run a surface code of $d=18$ with a successful probability of $5.4\%$.

\autoref{fig:alpha_RUS} indicates the dependence of the value $\alpha_{\mr{RUS}}$ with respect to different target rotation angle $\varphi$. The complicated curve shape is due to the angle-dependent wrapping procedure introduced in \autoref{ssc:RUSnoise}~\cite{toshio2024practical}. We can see that the maximum value of the error prefactor $\alpha_{\mr{RUS}}$ is close to $91$. We will use this value for the later resource estimation.

\begin{figure}[htbp]
    \centering
    \includegraphics[width=0.45\textwidth]{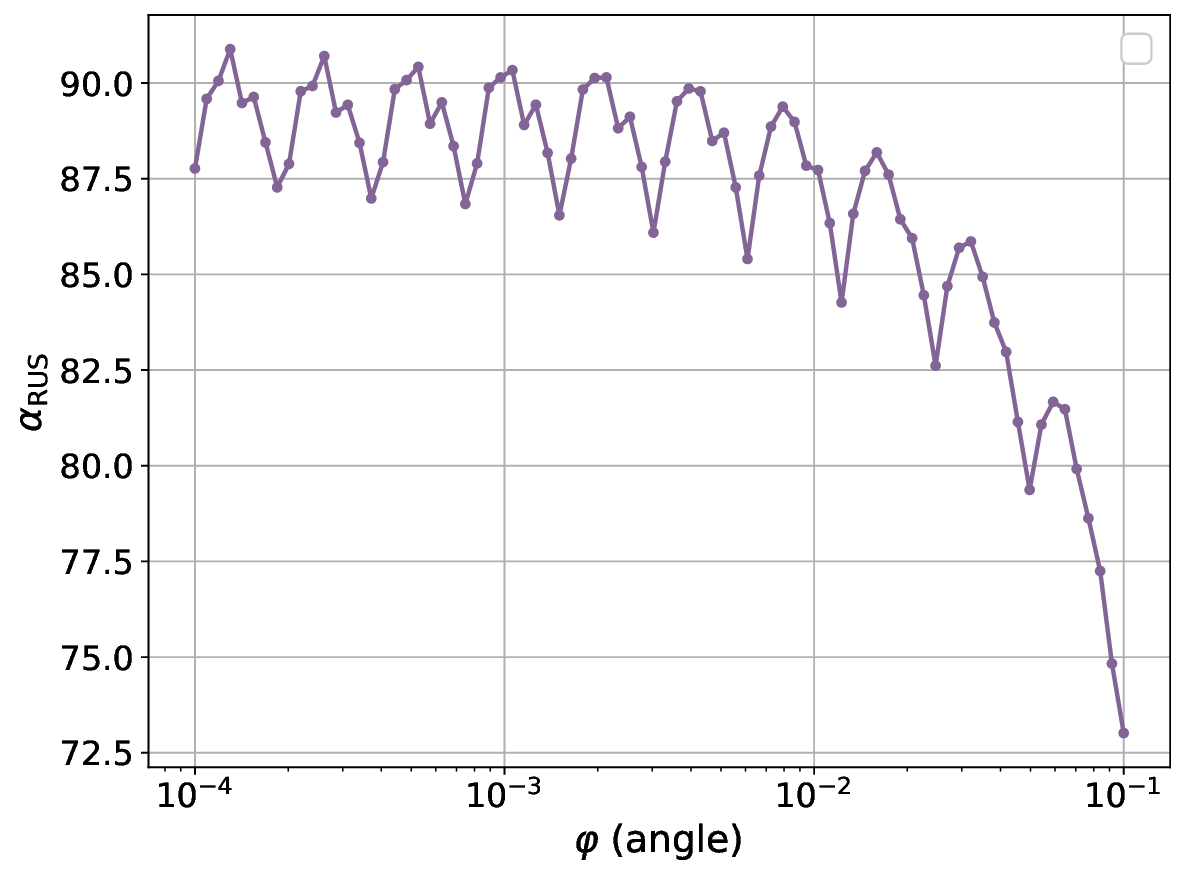}
    \caption{Dependence of the error prefactor $\alpha_{\mr{RUS}}$ of the avaraged gate error $\tilde{P}_L(\varphi)$ in \autoref{eq:tildeP_Lphi_alpha} with respect to the target rotation angle $\varphi$.}
    \label{fig:alpha_RUS}
\end{figure}

\subsection{Controll error cancellation}

In the above discussion, we implicitly assume that the error originates from the incoherent noise procedure of the implemented gates, especially the $R_{ZZ}(\theta)$ gates. However, in the actual devices, the control errors due to calibration errors, crosstalks and so on may leads to unwanted effects on the prepared $\ket{r_\varphi}$ states. Usually, the control errors can be suppressed using dynamical decoupling techniques~\cite{vandersypen2005NMR}. 
Here, we discuss the control error cancellation methods in Sec.~V of Toshio~et~al.~\cite{toshio2024practical} to cancel the remaining control error by a randomized transversal rotation.

Suppose the actual implemented rotation gates in the projection scheme in \autoref{fig:ProjectionScheme} become
\begin{equation}
\prod_{i=1}^k R_{ZZ,i}(\theta) \leftarrow \prod_{i=1}^k R_{ZZ,i}(\theta + \phi_i),
\end{equation}
where $\phi_i$ indicates the control error which satisfies $\phi_i\ll \theta$. 
We now discuss two different types of control errors.
\begin{enumerate}
\item The angle value of $\phi_i$ is random, centered around 0. This type of control error usually occurs in the experiment when the calibration is done with limited accuracy or the strength of the driving pulses are unstable with fluctuations. 
In this case, the effect of $\phi_i$ to the target rotaion will be suppressed to the second order $\phi_i^2$.

To see that, we consider a worst-case scenario where $\phi_i$ will randomly filp from two values $\pm\tilde{\phi}_i$ where $\tilde{\phi}_i$ denotes the largest value in the distribution of $\phi_i$. We have
\begin{equation}
\prod_{i=1}^k R_{ZZ,i}(\theta) \leftarrow \prod_{i=1}^k R_{ZZ,i}(\theta + (-1)^{m_i} \tilde{\phi}_i).
\end{equation}
Based on the calculation in Sec.~V of Ref.~\cite{toshio2024practical}, we have that the target rotation value $\tilde{\phi}$ with be close to the target value of $\varphi$ with a relative error of 
\begin{equation}
\tilde{\varphi} = \varphi\cdot(1 + \sum_i \tilde{\phi}_i^2/2).
\end{equation}

\item The angle value of $\phi_i$ is fixed and independent of the choice of $\theta$. This type of control error usually occurs on the gates before and after the real rotation gate. For example, the control errors on the dispersize coupled $CZ$ gates in the ancilla-based $R_{ZZ}(\theta)$ gate in \autoref{fig:RZZgate_ancilla}. 
In this case, we can introduce an active randomized multirotation design~\cite{toshio2024practical} to suppress the effect of $\phi$ to the second order. We actively randomize the implementation of the rotation gates by  
\begin{equation}
\prod_{i=1}^k R_{ZZ,i}(\theta) \leftarrow \prod_{i=1}^k R_{ZZ,i}((-1)^{n_i}\theta + \phi_i),
\end{equation}
where $n_i(=0,1)$ is a parameter indicating whether to flip the direction of the rotation gate or not. After this randomization operation, the target rotation angle $\tilde{\varphi}$ will be close to the target value of $\varphi$ with a relative error of 
\begin{equation}
\tilde{\varphi} = \varphi\cdot(1 + \sum_i \phi_i^2/2).
\end{equation}

\end{enumerate}
Recall from \autoref{eq:tildeP_Lphi_alpha} that the incoherent error is $\tilde{P}_L(\varphi)=\alpha_{\mr{RUS}}|\varphi| p^2$. Therefore, as long as the value of $\phi_i^2$ is much smaller than $p^2$, the physical noises $\tilde{P}_L(\varphi)$ will be the dominant error source and we will ignore the effect of the small control error in the later resource estimation.

\subsection{Resource estimation}

In \autoref{ssc:Prob_Coher_EC} we have shown that, by appling the coherent error cancellation during the whole RUS procedure, we can implement the rotation gate $R_{Z_L}(\varphi)$ with a Pauli noise channel whose diamond norm distance to the identity channel is given by \autoref{eq:tildeP_Lphi_alpha},
$$ \tilde{P}_L(\varphi) = \alpha_{\mr{RUS}}|\varphi| p^2, $$
where the prefactor $\alpha_{\mr{RUS}}$ is numerically shown to be smaller than $91$. Since now the noise channel is a Pauli channel, we can then implement the canonical probabilistic error cancellation techniques~\cite{temme2017error,endo2018practical} to mitigate these residue errors.

Recall from \autoref{eq:tildeE_varphi_rho_avg} that the Pauli noise channel after the coherent error cancellation is a dephasing channel $\tilde{\mc{E}}^c_\varphi$ with the dephasing rate $\tilde{P}_L$. Pauli error cancellation is to implement the inverse map,
\begin{equation}
\tilde{\mc{E}}^{c,-1}_\varphi(\rho) =\gamma ( (1 - \tilde{P}_L) \rho - \tilde{P}_L Z\rho Z ).
\end{equation}
Here $\gamma = 1/(1-2 \tilde{P}_L)$. We have $\tilde{\mc{E}}^{c,-1}_\varphi \circ \tilde{\mc{E}}^{c}_\varphi(\rho) = \rho$.
The inverse map $\tilde{\mc{E}}^{c,-1}_\varphi(\rho)$ itself is not a CPTP map since it contains a term with negative probability, hence cannot be directly implemented by quantum circuits. However, if we consider the estimation of the expectation value $\braket{O}=\tr(\mc{C}(\rho) O)$ of any observable $O$ at the end of the quantum circuit $\mc{C}_{\mc{N}}$ that contains the rotation gate, which is the usual scenario for many early FT algorithms, we can implement the inverse map effectively by 
\begin{equation} \label{eq:braketO}
\braket{O} \approx \gamma \left( (1-\tilde{P}_L) \braket{O}_{\tilde{\mc{E}}^{c}_\varphi} + \tilde{P}_L ( -\braket{O}_{\mc{Z}\circ\tilde{\mc{E}}^{c}_\varphi} )  \right),
\end{equation}
where $\mc{Z}(\rho) = Z\rho Z$ is the Pauli $Z$ channel. As a result, we can achieve the unbiased estimation of $\braket{O}$ by applying Pauli $Z$ operator with a probability of $\tilde{P}_L$. This comes with an extra $\gamma^2$ sampling cost, as the variance of the estimator in \autoref{eq:braketO} is amplified by $\gamma^2$.

Now, suppose we are going to perform $N_g$ rotation gates $\{R_{Z_L}(\varphi_1), R_{Z_L}(\varphi_2), ..., R_{Z_L}(\varphi_{N_g})\}$ in the circuit $\mc{C}$, each attached with an independent error mitigation channel with the form in \autoref{eq:braketO}. The variance of the final unbiased estimator is
\begin{equation} \label{eq:gamma_tot}
\begin{aligned}
\gamma_{\mr{tot}}^2 &= \prod_{i=1}^{N_g} \gamma_i^2 = \prod_{i=1}^{N_g} \left( \frac{1}{1- 2 \tilde{P}_L(\varphi_i)} \right)^2 \\
&\approx e^{ 4 \alpha_{\mr{RUS}} \varphi_{\mr{tot}} p^2 }  =: e^{4 P_{\mr{tot}}},
\end{aligned}
\end{equation}
where we assume the RUS and coherent error cancellation cost $\alpha_{\mr{RUS}}$ is approximately independent of $\varphi$. The total error rate $P_{\mr{tot}}$ and total rotation angle are defined by
\begin{equation}
P_{\mr{tot}} :=\sum_{i=1}^{N_g} \tilde{P}_L(\varphi_i)= \alpha_{\mr{RUS}}\varphi_{\mr{tot}}\cdot p^2, \quad \varphi_{\mr{tot}} := \sum_{i=1}^{N_g} |\varphi_i|.
\end{equation}

From \autoref{eq:gamma_tot} we can see that the sampling cost increases exponentially with the total error rate $P_{\mr{tot}}$. To keep the overall algorithm efficient, we need to limit $P_{\mr{tot}}$ to be a small constant,
\begin{equation}
P_{\mr{tot}} = \alpha_{\mr{RUS}}\varphi_{\mr{tot}}\cdot p^2 \lesssim 1.
\end{equation}
In this case, the error mitigation will introduce an extra sampling cost of $e^4\approx 55$ comparing to the ideal quantum circuit implementation. We then set a limitation to the overall rotation angle,
\begin{equation} \label{eq:varphi_tot_ineq}
\varphi_{\mr{tot}} \lesssim \frac{1}{\alpha_{\mr{RUS}}}\cdot \frac{1}{p^2}.
\end{equation}
Now, consider the current hardware condition with the physical error rate $p=1\cdot 10^{-3}$ and $\alpha_{\mr{RUS}}\leq 91$. We have 
\begin{equation} \label{eq:varphi_tot}
\varphi_{\mr{tot}} \leq 1.10\cdot 10^{4}\, \mr{rad}.
\end{equation}
In many early FT algorithms, especially the ones related to quantum simulations and quantum phase estimation (QPE), the rotation angles $\{\varphi_i\}$ are usually small. For example, in the QPE for the Hubbard model, to encure the energy accuracy of $\epsilon=0.01$, each rotation angle in the QPE circuit are set to be roughly $|\varphi|\approx 10^{-3}$ rad~\cite{toshio2024practical,kivlichan2020improvedfault}. Based on \autoref{eq:varphi_tot}, we can then implement approximately $1.10 \cdot 10^7$ such rotation gates.

We can perform a more detailed numerical analysis in the Trotter-based Hamiltonian simulation. For a Hamiltonian $H$ with the following Pauli decomposition,
\begin{equation}
H = \sum_{l=1}^L \alpha_l P_l,
\end{equation}
we define the $1$-norm of its coefficients as $\lambda:= \sum_{l=1}^L |\alpha_l|$. Consider a second-order Trotter formula,
\begin{equation}
\begin{aligned}
e^{-i T H} &\approx \left( \prod_{l=1}^L e^{-i\frac{x}{2} a_l P_l } \prod_{l=l}^1 e^{-i\frac{x}{2} a_l P_l } \right)^r \\
&= \left( \prod_{l=1}^L R_{P_l}(\varphi_l) \prod_{l=l}^1 R_{P_l}(\varphi_l) \right)^r ,
\end{aligned}
\end{equation}
where $r$ is the number of Trotter segments, $x=T/r$ is the unit evolution time for each segment, $\varphi_l:=-\alpha_l x/2$ is the rotation angle for the Pauli operator $P_l$ in each segment.

Based on \autoref{eq:varphi_tot_ineq}, we have that when the evolution time $T$ satisfies
\begin{equation} \label{eq:Tmax}
T \lesssim \frac{1}{\alpha_{RUS} \lambda p^2}.
\end{equation}
Consider an illustrative example of $N$-site one-dimensional Heisenberg model with a disordered magnetic field~\cite{childs2021theory},
\begin{equation} \label{eq:Heisenberg}
H = \sum_j (X_j X_{j+1} + Y_j Y_{j+1} + Z_j Z_{j+1}) + \sum_j h_j Z_j,
\end{equation}
with a periodic boundary condition and a random magnetic field $h_j\in [-h,h]$. This model is widely studied for its dynamical behaviours~\cite{Nandkishore2015manybody}. The averaged $1$-norm of the Hamiltonian is $\lambda=(3+h/2) N$, where $N$ is the site number.
Assuming $\alpha_{\mr{RUS}}\simeq 91$ and $h=1$, \autoref{eq:Tmax} provides the allowed simulation time,
\begin{equation} \label{eq:T_Heisenberg}
T\lesssim \frac{1}{318.5}\frac{1}{N p^2}.
\end{equation}
If we consider the current noise parameter $p=1\cdot 10^{-3}$ and the site number $N=100$, this allow us to perform a simulation up to $T_{\mr{max}} \simeq 31.4$, which is much longer than what can be demonstrated on the NISQ devices~ and is intractable on a classical computer without approximations~\cite{kimEvidenceUtilityQuantum2023}. We remark that, the error parameter of $p=1\cdot 10^{-3}$ is quite conservative based on the current experiments. If one consider a lower physical error rate of $p=1\cdot 10^{-4}$ like the ones in Ref.~\cite{akahoshi2024partially,toshio2024practical}, the allowed simulation time can be extended quadratically to $100 \cdot T_{\mr{max}}$.

Finally, we compare the overall spacetime resource cost of the non-Clifford gates in the Heisenberg-type Hamiltonian simulation in \autoref{eq:Heisenberg} based on the magic-state distillation method~\cite{Litinski2019magic}, magic-state-cultivation method~\cite{gidney2024cultivation} and our projection scheme in \autoref{fig:ResourceComp}. For a fair comparison, we consider to implement the $4$th-order Trotter formula with an accuracy requirement of $1\cdot 10^{-3}$, where a tight commutator bound is given in Childs~et~al.~\cite{childs2021theory}. We set the number of spins to be $N=50$ and the maximum evolution time $T=50$. The details of the resource estimation methods of different approaches can be found in Appendix~\ref{sec:AppCostEst}.

\begin{figure}[htbp]
    \centering
    \includegraphics[width=0.48\textwidth]{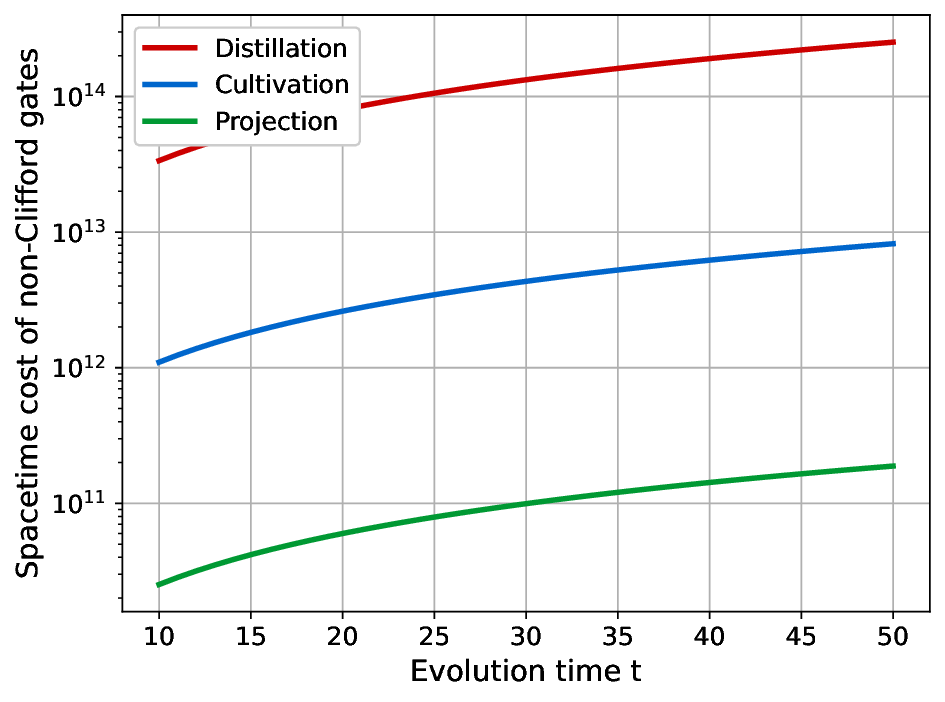}
    \caption{Comparison of the physical spacetime cost (i.e., the number of physical qubitcycles) of the magic state distillation~\cite{Litinski2019magic}, magic-state-cultivation method~\cite{gidney2024cultivation} and our projection scheme, in a $50$-spin Heisenberg Hamiltonian simulation task.}
    \label{fig:ResourceComp}
\end{figure}

From \autoref{fig:ResourceComp} we can see that the spacetime cost, characterized by the physical qubit-cycles (i.e., volume) of our method, is $1337.5$ times and $43.6$ times smaller than the state-of-the-art $15\text{-to-}1$ distillation protocol~\cite{Litinski2019magic} and the recently proposed magic-state-cultivation protocol~\cite{gidney2021stim}, respectively. The major reason is that our method not only enjoys a small footprint and low spacetime cost similar to surface code lattice surgery (comparable to cultivation approaches), but also eliminates the cumbersome $T$-gate compilation procedure.

\section{Summary and outlook}

In this work, we have shown how to generalize the framework of FT gadget analysis to leverage the specific error structures during the gate implementation. 
We provide a rigorous framework to define error-structure-tailored fault tolerance in the canonical stabilizer code circuits.
As concrete example, we have shown how to implement the 1-FT arbitrary rotation gates in [[4,1,1,2]]-code and surface codes without digital compiling and magic state distillation, which is formerly impossible due to the Eastin-Knill's theorem.
Furthermore, we combine our 1-FT error-structure-tailored gate design with the recently proposed small-angle projection scheme~\cite{choi2023fault,toshio2024practical} and show that we can implement the small angle rotation gate efficiently with an error of $\tilde{P}_L = \alpha_{\mr{RUS}}\cdot |\varphi| p^2$ where $\alpha_{\mr{RUS}}$ is almost a constant. By carefully taking the cost of probabilistic coherent error cancellation and repeat-until-success injection implementation into account, we have shown that the error overhead $\alpha_{\mr{RUS}} \lesssim 91$, which allow us to perform about $1.10\cdot 10^7$ small rotation gates $R_{Z_L}(\varphi)$ with $\varphi\approx 10^{-3}$ when the error mitigation techniques are introduced, enabling many promising early FT algorithms.

Recently, there are advances showning that $T$-ancillary state preparation can be performed in a high accuracy by the cultivation techniques without the usage of magic state distillation~\cite{gidney2024cultivation}. It is interesting to explore the possibility of similar techniques in our rotation state preparation $\ket{r_\varphi}_L$ procedure to further reduce the cost by the $T$ gate compiling. Meanwhile, we can also explore the usage of our error-structure-tailored FT design to further improve the $T$-state cultivation performance.

In our FT gate design, we leverage the error structure of the 2-level bare qubits or 3-level qutrits. To further enhance the performance, it will be beneficial to consider the hardware-level encoding to suppress physical errors to higher order~\cite{gottesman2001encoding,mirrahimi2014dynamically,brockQuantumErrorCorrection2025}, introduce error bias~\cite{aliferis2007accuracy,regladeQuantumControlCat2024,puttermanHardwareefficientQuantumError2025} or detect leakage errors~\cite{teohDualrailEncodingSuperconducting2023b,wuErasureConversionFaulttolerant2022}. We anticipate this will further suppress the error rate and boost the successful probability as well.

\begin{acknowledgments}
We acknowledge support from the ARO(W911NF-23-1-0077), ARO MURI (W911NF-21-1-0325), AFOSR MURI (FA9550-21-1-0209, FA9550-23-1-0338), DARPA (HR0011-24-9-0359, HR0011-24-9-0361), NSF (ERC-1941583, OMA-2137642, OSI-2326767, CCF-2312755, OSI-2426975), and the Packard Foundation (2020-71479). 
\end{acknowledgments}

\bibliographystyle{apsrev}

\begin{thebibliography}{49}
\expandafter\ifx\csname natexlab\endcsname\relax\def\natexlab#1{#1}\fi
\expandafter\ifx\csname bibnamefont\endcsname\relax
  \def\bibnamefont#1{#1}\fi
\expandafter\ifx\csname bibfnamefont\endcsname\relax
  \def\bibfnamefont#1{#1}\fi
\expandafter\ifx\csname citenamefont\endcsname\relax
  \def\citenamefont#1{#1}\fi
\expandafter\ifx\csname url\endcsname\relax
  \def\url#1{\texttt{#1}}\fi
\expandafter\ifx\csname urlprefix\endcsname\relax\def\urlprefix{URL }\fi
\providecommand{\bibinfo}[2]{#2}
\providecommand{\eprint}[2][]{\url{#2}}

\bibitem[{\citenamefont{Huggins et~al.}(2022)\citenamefont{Huggins, O'Gorman,
  Rubin, Reichman, Babbush, and Lee}}]{hugginsUnbiasingFermionicQuantum2022}
\bibinfo{author}{\bibfnamefont{W.~J.} \bibnamefont{Huggins}},
  \bibinfo{author}{\bibfnamefont{B.~A.} \bibnamefont{O'Gorman}},
  \bibinfo{author}{\bibfnamefont{N.~C.} \bibnamefont{Rubin}},
  \bibinfo{author}{\bibfnamefont{D.~R.} \bibnamefont{Reichman}},
  \bibinfo{author}{\bibfnamefont{R.}~\bibnamefont{Babbush}}, \bibnamefont{and}
  \bibinfo{author}{\bibfnamefont{J.}~\bibnamefont{Lee}},
  \bibinfo{journal}{Nature} \textbf{\bibinfo{volume}{603}},
  \bibinfo{pages}{416} (\bibinfo{year}{2022}), ISSN \bibinfo{issn}{1476-4687},
  \urlprefix\url{https://doi.org/10.1038/s41586-021-04351-z}.

\bibitem[{\citenamefont{Morvan et~al.}(2024)\citenamefont{Morvan, Villalonga,
  Mi, Mandr{\`a}, Bengtsson, Klimov, Chen, Hong, Erickson, Drozdov
  et~al.}}]{morvanPhaseTransitionsRandom2024}
\bibinfo{author}{\bibfnamefont{A.}~\bibnamefont{Morvan}},
  \bibinfo{author}{\bibfnamefont{B.}~\bibnamefont{Villalonga}},
  \bibinfo{author}{\bibfnamefont{X.}~\bibnamefont{Mi}},
  \bibinfo{author}{\bibfnamefont{S.}~\bibnamefont{Mandr{\`a}}},
  \bibinfo{author}{\bibfnamefont{A.}~\bibnamefont{Bengtsson}},
  \bibinfo{author}{\bibfnamefont{P.~V.} \bibnamefont{Klimov}},
  \bibinfo{author}{\bibfnamefont{Z.}~\bibnamefont{Chen}},
  \bibinfo{author}{\bibfnamefont{S.}~\bibnamefont{Hong}},
  \bibinfo{author}{\bibfnamefont{C.}~\bibnamefont{Erickson}},
  \bibinfo{author}{\bibfnamefont{I.~K.} \bibnamefont{Drozdov}},
  \bibnamefont{et~al.}, \bibinfo{journal}{Nature}
  \textbf{\bibinfo{volume}{634}}, \bibinfo{pages}{328} (\bibinfo{year}{2024}),
  ISSN \bibinfo{issn}{1476-4687},
  \urlprefix\url{https://www.nature.com/articles/s41586-024-07998-6}.

\bibitem[{\citenamefont{Liu et~al.}(2025)\citenamefont{Liu, Su, Gong, Gu, Tang,
  Jia, Wei, Song, Wang, Zheng
  et~al.}}]{liu2025robustquantumcomputationaladvantage}
\bibinfo{author}{\bibfnamefont{H.-L.} \bibnamefont{Liu}},
  \bibinfo{author}{\bibfnamefont{H.}~\bibnamefont{Su}},
  \bibinfo{author}{\bibfnamefont{S.-Q.} \bibnamefont{Gong}},
  \bibinfo{author}{\bibfnamefont{Y.-C.} \bibnamefont{Gu}},
  \bibinfo{author}{\bibfnamefont{H.-Y.} \bibnamefont{Tang}},
  \bibinfo{author}{\bibfnamefont{M.-H.} \bibnamefont{Jia}},
  \bibinfo{author}{\bibfnamefont{Q.}~\bibnamefont{Wei}},
  \bibinfo{author}{\bibfnamefont{Y.}~\bibnamefont{Song}},
  \bibinfo{author}{\bibfnamefont{D.}~\bibnamefont{Wang}},
  \bibinfo{author}{\bibfnamefont{M.}~\bibnamefont{Zheng}},
  \bibnamefont{et~al.}, \emph{\bibinfo{title}{Robust quantum computational
  advantage with programmable 3050-photon gaussian boson sampling}}
  (\bibinfo{year}{2025}), \eprint{2508.09092},
  \urlprefix\url{https://arxiv.org/abs/2508.09092}.

\bibitem[{\citenamefont{Aharonov et~al.}(2023)\citenamefont{Aharonov, Gao,
  Landau, Liu, and Vazirani}}]{aharonov2023polynomial}
\bibinfo{author}{\bibfnamefont{D.}~\bibnamefont{Aharonov}},
  \bibinfo{author}{\bibfnamefont{X.}~\bibnamefont{Gao}},
  \bibinfo{author}{\bibfnamefont{Z.}~\bibnamefont{Landau}},
  \bibinfo{author}{\bibfnamefont{Y.}~\bibnamefont{Liu}}, \bibnamefont{and}
  \bibinfo{author}{\bibfnamefont{U.}~\bibnamefont{Vazirani}}, in
  \emph{\bibinfo{booktitle}{Proceedings of the 55th Annual ACM Symposium on
  Theory of Computing}} (\bibinfo{publisher}{Association for Computing
  Machinery}, \bibinfo{address}{New York, NY, USA}, \bibinfo{year}{2023}), STOC
  2023, p. \bibinfo{pages}{945–957}, ISBN \bibinfo{isbn}{9781450399135},
  \urlprefix\url{https://doi.org/10.1145/3564246.3585234}.

\bibitem[{\citenamefont{Schuster et~al.}(2024)\citenamefont{Schuster, Yin, Gao,
  and Yao}}]{schuster2024polynomial}
\bibinfo{author}{\bibfnamefont{T.}~\bibnamefont{Schuster}},
  \bibinfo{author}{\bibfnamefont{C.}~\bibnamefont{Yin}},
  \bibinfo{author}{\bibfnamefont{X.}~\bibnamefont{Gao}}, \bibnamefont{and}
  \bibinfo{author}{\bibfnamefont{N.~Y.} \bibnamefont{Yao}},
  \emph{\bibinfo{title}{A polynomial-time classical algorithm for noisy quantum
  circuits}} (\bibinfo{year}{2024}), \eprint{2407.12768},
  \urlprefix\url{https://arxiv.org/abs/2407.12768}.

\bibitem[{\citenamefont{Lee et~al.}(2021)\citenamefont{Lee, Berry, Gidney,
  Huggins, McClean, Wiebe, and Babbush}}]{lee2021evenmore}
\bibinfo{author}{\bibfnamefont{J.}~\bibnamefont{Lee}},
  \bibinfo{author}{\bibfnamefont{D.~W.} \bibnamefont{Berry}},
  \bibinfo{author}{\bibfnamefont{C.}~\bibnamefont{Gidney}},
  \bibinfo{author}{\bibfnamefont{W.~J.} \bibnamefont{Huggins}},
  \bibinfo{author}{\bibfnamefont{J.~R.} \bibnamefont{McClean}},
  \bibinfo{author}{\bibfnamefont{N.}~\bibnamefont{Wiebe}}, \bibnamefont{and}
  \bibinfo{author}{\bibfnamefont{R.}~\bibnamefont{Babbush}},
  \bibinfo{journal}{PRX Quantum} \textbf{\bibinfo{volume}{2}},
  \bibinfo{pages}{030305} (\bibinfo{year}{2021}),
  \urlprefix\url{https://link.aps.org/doi/10.1103/PRXQuantum.2.030305}.

\bibitem[{\citenamefont{Regev}(2025)}]{regev2025efficient}
\bibinfo{author}{\bibfnamefont{O.}~\bibnamefont{Regev}}, \bibinfo{journal}{J.
  ACM} \textbf{\bibinfo{volume}{72}} (\bibinfo{year}{2025}), ISSN
  \bibinfo{issn}{0004-5411}, \urlprefix\url{https://doi.org/10.1145/3708471}.

\bibitem[{\citenamefont{Katabarwa et~al.}(2024)\citenamefont{Katabarwa,
  Gratsea, Caesura, and Johnson}}]{karabarwa2024early}
\bibinfo{author}{\bibfnamefont{A.}~\bibnamefont{Katabarwa}},
  \bibinfo{author}{\bibfnamefont{K.}~\bibnamefont{Gratsea}},
  \bibinfo{author}{\bibfnamefont{A.}~\bibnamefont{Caesura}}, \bibnamefont{and}
  \bibinfo{author}{\bibfnamefont{P.~D.} \bibnamefont{Johnson}},
  \bibinfo{journal}{PRX Quantum} \textbf{\bibinfo{volume}{5}},
  \bibinfo{pages}{020101} (\bibinfo{year}{2024}),
  \urlprefix\url{https://link.aps.org/doi/10.1103/PRXQuantum.5.020101}.

\bibitem[{\citenamefont{Lin and Tong}(2022)}]{lin2022heisenberg}
\bibinfo{author}{\bibfnamefont{L.}~\bibnamefont{Lin}} \bibnamefont{and}
  \bibinfo{author}{\bibfnamefont{Y.}~\bibnamefont{Tong}}, \bibinfo{journal}{PRX
  Quantum} \textbf{\bibinfo{volume}{3}}, \bibinfo{pages}{010318}
  (\bibinfo{year}{2022}),
  \urlprefix\url{https://link.aps.org/doi/10.1103/PRXQuantum.3.010318}.

\bibitem[{\citenamefont{Dong et~al.}(2022)\citenamefont{Dong, Lin, and
  Tong}}]{dong2022ground}
\bibinfo{author}{\bibfnamefont{Y.}~\bibnamefont{Dong}},
  \bibinfo{author}{\bibfnamefont{L.}~\bibnamefont{Lin}}, \bibnamefont{and}
  \bibinfo{author}{\bibfnamefont{Y.}~\bibnamefont{Tong}}, \bibinfo{journal}{PRX
  Quantum} \textbf{\bibinfo{volume}{3}}, \bibinfo{pages}{040305}
  (\bibinfo{year}{2022}),
  \urlprefix\url{https://link.aps.org/doi/10.1103/PRXQuantum.3.040305}.

\bibitem[{\citenamefont{Zeng et~al.}(2025)\citenamefont{Zeng, Sun, Jiang, and
  Zhao}}]{zeng2025simple}
\bibinfo{author}{\bibfnamefont{P.}~\bibnamefont{Zeng}},
  \bibinfo{author}{\bibfnamefont{J.}~\bibnamefont{Sun}},
  \bibinfo{author}{\bibfnamefont{L.}~\bibnamefont{Jiang}}, \bibnamefont{and}
  \bibinfo{author}{\bibfnamefont{Q.}~\bibnamefont{Zhao}}, \bibinfo{journal}{PRX
  Quantum} \textbf{\bibinfo{volume}{6}}, \bibinfo{pages}{010359}
  (\bibinfo{year}{2025}),
  \urlprefix\url{https://link.aps.org/doi/10.1103/PRXQuantum.6.010359}.

\bibitem[{\citenamefont{Ding and Lin}(2023)}]{ding2023simultaneous}
\bibinfo{author}{\bibfnamefont{Z.}~\bibnamefont{Ding}} \bibnamefont{and}
  \bibinfo{author}{\bibfnamefont{L.}~\bibnamefont{Lin}},
  \bibinfo{journal}{{Quantum}} \textbf{\bibinfo{volume}{7}},
  \bibinfo{pages}{1136} (\bibinfo{year}{2023}), ISSN \bibinfo{issn}{2521-327X},
  \urlprefix\url{https://doi.org/10.22331/q-2023-10-11-1136}.

\bibitem[{\citenamefont{Yu et~al.}(2025)\citenamefont{Yu, Li, Zhao, and
  Yuan}}]{yu2025lindbladian}
\bibinfo{author}{\bibfnamefont{W.}~\bibnamefont{Yu}},
  \bibinfo{author}{\bibfnamefont{X.}~\bibnamefont{Li}},
  \bibinfo{author}{\bibfnamefont{Q.}~\bibnamefont{Zhao}}, \bibnamefont{and}
  \bibinfo{author}{\bibfnamefont{X.}~\bibnamefont{Yuan}},
  \bibinfo{journal}{Phys. Rev. Lett.} \textbf{\bibinfo{volume}{135}},
  \bibinfo{pages}{160602} (\bibinfo{year}{2025}),
  \urlprefix\url{https://link.aps.org/doi/10.1103/2cx4-b82c}.

\bibitem[{\citenamefont{Eastin and Knill}(2009)}]{eastin2009restrictions}
\bibinfo{author}{\bibfnamefont{B.}~\bibnamefont{Eastin}} \bibnamefont{and}
  \bibinfo{author}{\bibfnamefont{E.}~\bibnamefont{Knill}},
  \bibinfo{journal}{Phys. Rev. Lett.} \textbf{\bibinfo{volume}{102}},
  \bibinfo{pages}{110502} (\bibinfo{year}{2009}),
  \urlprefix\url{https://link.aps.org/doi/10.1103/PhysRevLett.102.110502}.

\bibitem[{\citenamefont{Litinski}(2019{\natexlab{a}})}]{Litinski2019magic}
\bibinfo{author}{\bibfnamefont{D.}~\bibnamefont{Litinski}},
  \bibinfo{journal}{Quantum} \textbf{\bibinfo{volume}{3}}, \bibinfo{pages}{205}
  (\bibinfo{year}{2019}{\natexlab{a}}), ISSN \bibinfo{issn}{2521-327X},
  \urlprefix\url{http://dx.doi.org/10.22331/q-2019-12-02-205}.

\bibitem[{\citenamefont{Bocharov et~al.}(2015)\citenamefont{Bocharov,
  Roetteler, and Svore}}]{bocharov2015efficient}
\bibinfo{author}{\bibfnamefont{A.}~\bibnamefont{Bocharov}},
  \bibinfo{author}{\bibfnamefont{M.}~\bibnamefont{Roetteler}},
  \bibnamefont{and} \bibinfo{author}{\bibfnamefont{K.~M.} \bibnamefont{Svore}},
  \bibinfo{journal}{Phys. Rev. Lett.} \textbf{\bibinfo{volume}{114}},
  \bibinfo{pages}{080502} (\bibinfo{year}{2015}),
  \urlprefix\url{https://link.aps.org/doi/10.1103/PhysRevLett.114.080502}.

\bibitem[{\citenamefont{Litinski}(2019{\natexlab{b}})}]{litinski2019gameofsurfacecodes}
\bibinfo{author}{\bibfnamefont{D.}~\bibnamefont{Litinski}},
  \bibinfo{journal}{{Quantum}} \textbf{\bibinfo{volume}{3}},
  \bibinfo{pages}{128} (\bibinfo{year}{2019}{\natexlab{b}}), ISSN
  \bibinfo{issn}{2521-327X},
  \urlprefix\url{https://doi.org/10.22331/q-2019-03-05-128}.

\bibitem[{\citenamefont{Akahoshi et~al.}(2024)\citenamefont{Akahoshi, Maruyama,
  Oshima, Sato, and Fujii}}]{akahoshi2024partially}
\bibinfo{author}{\bibfnamefont{Y.}~\bibnamefont{Akahoshi}},
  \bibinfo{author}{\bibfnamefont{K.}~\bibnamefont{Maruyama}},
  \bibinfo{author}{\bibfnamefont{H.}~\bibnamefont{Oshima}},
  \bibinfo{author}{\bibfnamefont{S.}~\bibnamefont{Sato}}, \bibnamefont{and}
  \bibinfo{author}{\bibfnamefont{K.}~\bibnamefont{Fujii}},
  \bibinfo{journal}{PRX Quantum} \textbf{\bibinfo{volume}{5}},
  \bibinfo{pages}{010337} (\bibinfo{year}{2024}),
  \urlprefix\url{https://link.aps.org/doi/10.1103/PRXQuantum.5.010337}.

\bibitem[{\citenamefont{Rosenblum et~al.}(2018)\citenamefont{Rosenblum,
  Reinhold, Mirrahimi, Jiang, Frunzio, and Schoelkopf}}]{Rosenblum2018fault}
\bibinfo{author}{\bibfnamefont{S.}~\bibnamefont{Rosenblum}},
  \bibinfo{author}{\bibfnamefont{P.}~\bibnamefont{Reinhold}},
  \bibinfo{author}{\bibfnamefont{M.}~\bibnamefont{Mirrahimi}},
  \bibinfo{author}{\bibfnamefont{L.}~\bibnamefont{Jiang}},
  \bibinfo{author}{\bibfnamefont{L.}~\bibnamefont{Frunzio}}, \bibnamefont{and}
  \bibinfo{author}{\bibfnamefont{R.~J.} \bibnamefont{Schoelkopf}},
  \bibinfo{journal}{Science} \textbf{\bibinfo{volume}{361}},
  \bibinfo{pages}{266} (\bibinfo{year}{2018}),
  \eprint{https://www.science.org/doi/pdf/10.1126/science.aat3996},
  \urlprefix\url{https://www.science.org/doi/abs/10.1126/science.aat3996}.

\bibitem[{\citenamefont{Ma et~al.}(2020)\citenamefont{Ma, Zhang, Wong, Noh,
  Rosenblum, Reinhold, Schoelkopf, and Jiang}}]{ma2020path}
\bibinfo{author}{\bibfnamefont{W.-L.} \bibnamefont{Ma}},
  \bibinfo{author}{\bibfnamefont{M.}~\bibnamefont{Zhang}},
  \bibinfo{author}{\bibfnamefont{Y.}~\bibnamefont{Wong}},
  \bibinfo{author}{\bibfnamefont{K.}~\bibnamefont{Noh}},
  \bibinfo{author}{\bibfnamefont{S.}~\bibnamefont{Rosenblum}},
  \bibinfo{author}{\bibfnamefont{P.}~\bibnamefont{Reinhold}},
  \bibinfo{author}{\bibfnamefont{R.~J.} \bibnamefont{Schoelkopf}},
  \bibnamefont{and} \bibinfo{author}{\bibfnamefont{L.}~\bibnamefont{Jiang}},
  \bibinfo{journal}{Phys. Rev. Lett.} \textbf{\bibinfo{volume}{125}},
  \bibinfo{pages}{110503} (\bibinfo{year}{2020}),
  \urlprefix\url{https://link.aps.org/doi/10.1103/PhysRevLett.125.110503}.

\bibitem[{\citenamefont{Xu et~al.}(2024)\citenamefont{Xu, Zeng, Xu, and
  Jiang}}]{Xu2024FaultTolerant}
\bibinfo{author}{\bibfnamefont{Q.}~\bibnamefont{Xu}},
  \bibinfo{author}{\bibfnamefont{P.}~\bibnamefont{Zeng}},
  \bibinfo{author}{\bibfnamefont{D.}~\bibnamefont{Xu}}, \bibnamefont{and}
  \bibinfo{author}{\bibfnamefont{L.}~\bibnamefont{Jiang}},
  \bibinfo{journal}{Phys. Rev. X} \textbf{\bibinfo{volume}{14}},
  \bibinfo{pages}{031016} (\bibinfo{year}{2024}),
  \urlprefix\url{https://link.aps.org/doi/10.1103/PhysRevX.14.031016}.

\bibitem[{\citenamefont{Kubica et~al.}(2023)\citenamefont{Kubica, Haim, Vaknin,
  Levine, Brand\~ao, and Retzker}}]{kubica2023erasure}
\bibinfo{author}{\bibfnamefont{A.}~\bibnamefont{Kubica}},
  \bibinfo{author}{\bibfnamefont{A.}~\bibnamefont{Haim}},
  \bibinfo{author}{\bibfnamefont{Y.}~\bibnamefont{Vaknin}},
  \bibinfo{author}{\bibfnamefont{H.}~\bibnamefont{Levine}},
  \bibinfo{author}{\bibfnamefont{F.}~\bibnamefont{Brand\~ao}},
  \bibnamefont{and} \bibinfo{author}{\bibfnamefont{A.}~\bibnamefont{Retzker}},
  \bibinfo{journal}{Phys. Rev. X} \textbf{\bibinfo{volume}{13}},
  \bibinfo{pages}{041022} (\bibinfo{year}{2023}),
  \urlprefix\url{https://link.aps.org/doi/10.1103/PhysRevX.13.041022}.

\bibitem[{\citenamefont{Teoh et~al.}(2023)\citenamefont{Teoh, Winkel, Babla,
  Chapman, Claes, {de Graaf}, Garmon, Kalfus, Lu, Maiti
  et~al.}}]{teohDualrailEncodingSuperconducting2023b}
\bibinfo{author}{\bibfnamefont{J.~D.} \bibnamefont{Teoh}},
  \bibinfo{author}{\bibfnamefont{P.}~\bibnamefont{Winkel}},
  \bibinfo{author}{\bibfnamefont{H.~K.} \bibnamefont{Babla}},
  \bibinfo{author}{\bibfnamefont{B.~J.} \bibnamefont{Chapman}},
  \bibinfo{author}{\bibfnamefont{J.}~\bibnamefont{Claes}},
  \bibinfo{author}{\bibfnamefont{S.~J.} \bibnamefont{{de Graaf}}},
  \bibinfo{author}{\bibfnamefont{J.~W.~O.} \bibnamefont{Garmon}},
  \bibinfo{author}{\bibfnamefont{W.~D.} \bibnamefont{Kalfus}},
  \bibinfo{author}{\bibfnamefont{Y.}~\bibnamefont{Lu}},
  \bibinfo{author}{\bibfnamefont{A.}~\bibnamefont{Maiti}},
  \bibnamefont{et~al.}, \bibinfo{journal}{Proceedings of the National Academy
  of Sciences} \textbf{\bibinfo{volume}{120}}, \bibinfo{pages}{e2221736120}
  (\bibinfo{year}{2023}),
  \urlprefix\url{https://www.pnas.org/doi/10.1073/pnas.2221736120}.

\bibitem[{\citenamefont{Temme et~al.}(2017)\citenamefont{Temme, Bravyi, and
  Gambetta}}]{temme2017error}
\bibinfo{author}{\bibfnamefont{K.}~\bibnamefont{Temme}},
  \bibinfo{author}{\bibfnamefont{S.}~\bibnamefont{Bravyi}}, \bibnamefont{and}
  \bibinfo{author}{\bibfnamefont{J.~M.} \bibnamefont{Gambetta}},
  \bibinfo{journal}{Phys. Rev. Lett.} \textbf{\bibinfo{volume}{119}},
  \bibinfo{pages}{180509} (\bibinfo{year}{2017}),
  \urlprefix\url{https://link.aps.org/doi/10.1103/PhysRevLett.119.180509}.

\bibitem[{\citenamefont{Endo et~al.}(2018)\citenamefont{Endo, Benjamin, and
  Li}}]{endo2018practical}
\bibinfo{author}{\bibfnamefont{S.}~\bibnamefont{Endo}},
  \bibinfo{author}{\bibfnamefont{S.~C.} \bibnamefont{Benjamin}},
  \bibnamefont{and} \bibinfo{author}{\bibfnamefont{Y.}~\bibnamefont{Li}},
  \bibinfo{journal}{Phys. Rev. X} \textbf{\bibinfo{volume}{8}},
  \bibinfo{pages}{031027} (\bibinfo{year}{2018}),
  \urlprefix\url{https://link.aps.org/doi/10.1103/PhysRevX.8.031027}.

\bibitem[{\citenamefont{Childs et~al.}(2021)\citenamefont{Childs, Su, Tran,
  Wiebe, and Zhu}}]{childs2021theory}
\bibinfo{author}{\bibfnamefont{A.~M.} \bibnamefont{Childs}},
  \bibinfo{author}{\bibfnamefont{Y.}~\bibnamefont{Su}},
  \bibinfo{author}{\bibfnamefont{M.~C.} \bibnamefont{Tran}},
  \bibinfo{author}{\bibfnamefont{N.}~\bibnamefont{Wiebe}}, \bibnamefont{and}
  \bibinfo{author}{\bibfnamefont{S.}~\bibnamefont{Zhu}},
  \bibinfo{journal}{Phys. Rev. X} \textbf{\bibinfo{volume}{11}},
  \bibinfo{pages}{011020} (\bibinfo{year}{2021}),
  \urlprefix\url{https://link.aps.org/doi/10.1103/PhysRevX.11.011020}.

\bibitem[{\citenamefont{Gidney et~al.}(2024)\citenamefont{Gidney, Shutty, and
  Jones}}]{gidney2024cultivation}
\bibinfo{author}{\bibfnamefont{C.}~\bibnamefont{Gidney}},
  \bibinfo{author}{\bibfnamefont{N.}~\bibnamefont{Shutty}}, \bibnamefont{and}
  \bibinfo{author}{\bibfnamefont{C.}~\bibnamefont{Jones}},
  \emph{\bibinfo{title}{Magic state cultivation: growing t states as cheap as
  cnot gates}} (\bibinfo{year}{2024}), \eprint{2409.17595},
  \urlprefix\url{https://arxiv.org/abs/2409.17595}.

\bibitem[{\citenamefont{Aliferis et~al.}(2005)\citenamefont{Aliferis,
  Gottesman, and Preskill}}]{aliferis2005quantum}
\bibinfo{author}{\bibfnamefont{P.}~\bibnamefont{Aliferis}},
  \bibinfo{author}{\bibfnamefont{D.}~\bibnamefont{Gottesman}},
  \bibnamefont{and} \bibinfo{author}{\bibfnamefont{J.}~\bibnamefont{Preskill}},
  \emph{\bibinfo{title}{Quantum accuracy threshold for concatenated distance-3
  codes}} (\bibinfo{year}{2005}), \eprint{quant-ph/0504218},
  \urlprefix\url{https://arxiv.org/abs/quant-ph/0504218}.

\bibitem[{\citenamefont{Gottesman}(2009)}]{gottesman2009intro}
\bibinfo{author}{\bibfnamefont{D.}~\bibnamefont{Gottesman}},
  \emph{\bibinfo{title}{An introduction to quantum error correction and
  fault-tolerant quantum computation}} (\bibinfo{year}{2009}),
  \eprint{0904.2557}, \urlprefix\url{https://arxiv.org/abs/0904.2557}.

\bibitem[{\citenamefont{Gottesman}(2014)}]{gottesman2014fault}
\bibinfo{author}{\bibfnamefont{D.}~\bibnamefont{Gottesman}},
  \bibinfo{journal}{Quantum Information and Computation}
  \textbf{\bibinfo{volume}{14}}, \bibinfo{pages}{1339} (\bibinfo{year}{2014}),
  \urlprefix\url{https://doi.org/10.26421/QIC14.15-16-5}.

\bibitem[{\citenamefont{Reinhold et~al.}(2020)\citenamefont{Reinhold,
  Rosenblum, Ma, Frunzio, Jiang, and
  Schoelkopf}}]{reinholdErrorcorrectedGatesEncoded2020a}
\bibinfo{author}{\bibfnamefont{P.}~\bibnamefont{Reinhold}},
  \bibinfo{author}{\bibfnamefont{S.}~\bibnamefont{Rosenblum}},
  \bibinfo{author}{\bibfnamefont{W.-L.} \bibnamefont{Ma}},
  \bibinfo{author}{\bibfnamefont{L.}~\bibnamefont{Frunzio}},
  \bibinfo{author}{\bibfnamefont{L.}~\bibnamefont{Jiang}}, \bibnamefont{and}
  \bibinfo{author}{\bibfnamefont{R.~J.} \bibnamefont{Schoelkopf}},
  \bibinfo{journal}{Nature Physics} \textbf{\bibinfo{volume}{16}},
  \bibinfo{pages}{822} (\bibinfo{year}{2020}), ISSN \bibinfo{issn}{1745-2481},
  \urlprefix\url{https://www.nature.com/articles/s41567-020-0931-8}.

\bibitem[{\citenamefont{Putterman et~al.}(2025)\citenamefont{Putterman, Noh,
  Hann, MacCabe, Aghaeimeibodi, Patel, Lee, Jones, Moradinejad, Rodriguez
  et~al.}}]{puttermanHardwareefficientQuantumError2025}
\bibinfo{author}{\bibfnamefont{H.}~\bibnamefont{Putterman}},
  \bibinfo{author}{\bibfnamefont{K.}~\bibnamefont{Noh}},
  \bibinfo{author}{\bibfnamefont{C.~T.} \bibnamefont{Hann}},
  \bibinfo{author}{\bibfnamefont{G.~S.} \bibnamefont{MacCabe}},
  \bibinfo{author}{\bibfnamefont{S.}~\bibnamefont{Aghaeimeibodi}},
  \bibinfo{author}{\bibfnamefont{R.~N.} \bibnamefont{Patel}},
  \bibinfo{author}{\bibfnamefont{M.}~\bibnamefont{Lee}},
  \bibinfo{author}{\bibfnamefont{W.~M.} \bibnamefont{Jones}},
  \bibinfo{author}{\bibfnamefont{H.}~\bibnamefont{Moradinejad}},
  \bibinfo{author}{\bibfnamefont{R.}~\bibnamefont{Rodriguez}},
  \bibnamefont{et~al.}, \bibinfo{journal}{Nature}
  \textbf{\bibinfo{volume}{638}}, \bibinfo{pages}{927} (\bibinfo{year}{2025}),
  ISSN \bibinfo{issn}{1476-4687},
  \urlprefix\url{https://doi.org/10.1038/s41586-025-08642-7}.

\bibitem[{\citenamefont{Choi et~al.}(2023)\citenamefont{Choi, Chong, Englund,
  and Ding}}]{choi2023fault}
\bibinfo{author}{\bibfnamefont{H.}~\bibnamefont{Choi}},
  \bibinfo{author}{\bibfnamefont{F.~T.} \bibnamefont{Chong}},
  \bibinfo{author}{\bibfnamefont{D.}~\bibnamefont{Englund}}, \bibnamefont{and}
  \bibinfo{author}{\bibfnamefont{Y.}~\bibnamefont{Ding}},
  \emph{\bibinfo{title}{Fault tolerant non-clifford state preparation for
  arbitrary rotations}} (\bibinfo{year}{2023}), \eprint{2303.17380},
  \urlprefix\url{https://arxiv.org/abs/2303.17380}.

\bibitem[{\citenamefont{Nielsen and Chuang}(2010)}]{nielsen2010quantum}
\bibinfo{author}{\bibfnamefont{M.~A.} \bibnamefont{Nielsen}} \bibnamefont{and}
  \bibinfo{author}{\bibfnamefont{I.~L.} \bibnamefont{Chuang}},
  \emph{\bibinfo{title}{Quantum computation and quantum information}}
  (\bibinfo{publisher}{Cambridge university press}, \bibinfo{year}{2010}).

\bibitem[{\citenamefont{Toshio et~al.}(2025)\citenamefont{Toshio, Akahoshi,
  Fujisaki, Oshima, Sato, and Fujii}}]{toshio2024practical}
\bibinfo{author}{\bibfnamefont{R.}~\bibnamefont{Toshio}},
  \bibinfo{author}{\bibfnamefont{Y.}~\bibnamefont{Akahoshi}},
  \bibinfo{author}{\bibfnamefont{J.}~\bibnamefont{Fujisaki}},
  \bibinfo{author}{\bibfnamefont{H.}~\bibnamefont{Oshima}},
  \bibinfo{author}{\bibfnamefont{S.}~\bibnamefont{Sato}}, \bibnamefont{and}
  \bibinfo{author}{\bibfnamefont{K.}~\bibnamefont{Fujii}},
  \bibinfo{journal}{Phys. Rev. X} \textbf{\bibinfo{volume}{15}},
  \bibinfo{pages}{021057} (\bibinfo{year}{2025}),
  \urlprefix\url{https://link.aps.org/doi/10.1103/PhysRevX.15.021057}.

\bibitem[{\citenamefont{Vandersypen and Chuang}(2005)}]{vandersypen2005NMR}
\bibinfo{author}{\bibfnamefont{L.~M.~K.} \bibnamefont{Vandersypen}}
  \bibnamefont{and} \bibinfo{author}{\bibfnamefont{I.~L.}
  \bibnamefont{Chuang}}, \bibinfo{journal}{Rev. Mod. Phys.}
  \textbf{\bibinfo{volume}{76}}, \bibinfo{pages}{1037} (\bibinfo{year}{2005}),
  \urlprefix\url{https://link.aps.org/doi/10.1103/RevModPhys.76.1037}.

\bibitem[{\citenamefont{Kivlichan et~al.}(2020)\citenamefont{Kivlichan, Gidney,
  Berry, Wiebe, McClean, Sun, Jiang, Rubin, Fowler, Aspuru-Guzik
  et~al.}}]{kivlichan2020improvedfault}
\bibinfo{author}{\bibfnamefont{I.~D.} \bibnamefont{Kivlichan}},
  \bibinfo{author}{\bibfnamefont{C.}~\bibnamefont{Gidney}},
  \bibinfo{author}{\bibfnamefont{D.~W.} \bibnamefont{Berry}},
  \bibinfo{author}{\bibfnamefont{N.}~\bibnamefont{Wiebe}},
  \bibinfo{author}{\bibfnamefont{J.}~\bibnamefont{McClean}},
  \bibinfo{author}{\bibfnamefont{W.}~\bibnamefont{Sun}},
  \bibinfo{author}{\bibfnamefont{Z.}~\bibnamefont{Jiang}},
  \bibinfo{author}{\bibfnamefont{N.}~\bibnamefont{Rubin}},
  \bibinfo{author}{\bibfnamefont{A.}~\bibnamefont{Fowler}},
  \bibinfo{author}{\bibfnamefont{A.}~\bibnamefont{Aspuru-Guzik}},
  \bibnamefont{et~al.}, \bibinfo{journal}{{Quantum}}
  \textbf{\bibinfo{volume}{4}}, \bibinfo{pages}{296} (\bibinfo{year}{2020}),
  ISSN \bibinfo{issn}{2521-327X},
  \urlprefix\url{https://doi.org/10.22331/q-2020-07-16-296}.

\bibitem[{\citenamefont{Nandkishore and Huse}(2015)}]{Nandkishore2015manybody}
\bibinfo{author}{\bibfnamefont{R.}~\bibnamefont{Nandkishore}} \bibnamefont{and}
  \bibinfo{author}{\bibfnamefont{D.~A.} \bibnamefont{Huse}},
  \bibinfo{journal}{Annual Review of Condensed Matter Physics}
  \textbf{\bibinfo{volume}{6}}, \bibinfo{pages}{15} (\bibinfo{year}{2015}),
  ISSN \bibinfo{issn}{1947-5462}.

\bibitem[{\citenamefont{Kim et~al.}(2023)\citenamefont{Kim, Eddins, Anand, Wei,
  {van den Berg}, Rosenblatt, Nayfeh, Wu, Zaletel, Temme
  et~al.}}]{kimEvidenceUtilityQuantum2023}
\bibinfo{author}{\bibfnamefont{Y.}~\bibnamefont{Kim}},
  \bibinfo{author}{\bibfnamefont{A.}~\bibnamefont{Eddins}},
  \bibinfo{author}{\bibfnamefont{S.}~\bibnamefont{Anand}},
  \bibinfo{author}{\bibfnamefont{K.~X.} \bibnamefont{Wei}},
  \bibinfo{author}{\bibfnamefont{E.}~\bibnamefont{{van den Berg}}},
  \bibinfo{author}{\bibfnamefont{S.}~\bibnamefont{Rosenblatt}},
  \bibinfo{author}{\bibfnamefont{H.}~\bibnamefont{Nayfeh}},
  \bibinfo{author}{\bibfnamefont{Y.}~\bibnamefont{Wu}},
  \bibinfo{author}{\bibfnamefont{M.}~\bibnamefont{Zaletel}},
  \bibinfo{author}{\bibfnamefont{K.}~\bibnamefont{Temme}},
  \bibnamefont{et~al.}, \bibinfo{journal}{Nature}
  \textbf{\bibinfo{volume}{618}}, \bibinfo{pages}{500} (\bibinfo{year}{2023}),
  ISSN \bibinfo{issn}{1476-4687}.

\bibitem[{\citenamefont{Gidney}(2021)}]{gidney2021stim}
\bibinfo{author}{\bibfnamefont{C.}~\bibnamefont{Gidney}},
  \bibinfo{journal}{{Quantum}} \textbf{\bibinfo{volume}{5}},
  \bibinfo{pages}{497} (\bibinfo{year}{2021}), ISSN \bibinfo{issn}{2521-327X},
  \urlprefix\url{https://doi.org/10.22331/q-2021-07-06-497}.

\bibitem[{\citenamefont{Gottesman et~al.}(2001)\citenamefont{Gottesman, Kitaev,
  and Preskill}}]{gottesman2001encoding}
\bibinfo{author}{\bibfnamefont{D.}~\bibnamefont{Gottesman}},
  \bibinfo{author}{\bibfnamefont{A.}~\bibnamefont{Kitaev}}, \bibnamefont{and}
  \bibinfo{author}{\bibfnamefont{J.}~\bibnamefont{Preskill}},
  \bibinfo{journal}{Phys. Rev. A} \textbf{\bibinfo{volume}{64}},
  \bibinfo{pages}{012310} (\bibinfo{year}{2001}),
  \urlprefix\url{https://link.aps.org/doi/10.1103/PhysRevA.64.012310}.

\bibitem[{\citenamefont{Mirrahimi et~al.}(2014)\citenamefont{Mirrahimi,
  Leghtas, Albert, Touzard, Schoelkopf, Jiang, and
  Devoret}}]{mirrahimi2014dynamically}
\bibinfo{author}{\bibfnamefont{M.}~\bibnamefont{Mirrahimi}},
  \bibinfo{author}{\bibfnamefont{Z.}~\bibnamefont{Leghtas}},
  \bibinfo{author}{\bibfnamefont{V.~V.} \bibnamefont{Albert}},
  \bibinfo{author}{\bibfnamefont{S.}~\bibnamefont{Touzard}},
  \bibinfo{author}{\bibfnamefont{R.~J.} \bibnamefont{Schoelkopf}},
  \bibinfo{author}{\bibfnamefont{L.}~\bibnamefont{Jiang}}, \bibnamefont{and}
  \bibinfo{author}{\bibfnamefont{M.~H.} \bibnamefont{Devoret}},
  \bibinfo{journal}{New Journal of Physics} \textbf{\bibinfo{volume}{16}},
  \bibinfo{pages}{045014} (\bibinfo{year}{2014}),
  \urlprefix\url{https://iopscience.iop.org/article/10.1088/1367-2630/16/4/045014/meta}.

\bibitem[{\citenamefont{Brock et~al.}(2025)\citenamefont{Brock, Singh,
  Eickbusch, Sivak, Ding, Frunzio, Girvin, and
  Devoret}}]{brockQuantumErrorCorrection2025}
\bibinfo{author}{\bibfnamefont{B.~L.} \bibnamefont{Brock}},
  \bibinfo{author}{\bibfnamefont{S.}~\bibnamefont{Singh}},
  \bibinfo{author}{\bibfnamefont{A.}~\bibnamefont{Eickbusch}},
  \bibinfo{author}{\bibfnamefont{V.~V.} \bibnamefont{Sivak}},
  \bibinfo{author}{\bibfnamefont{A.~Z.} \bibnamefont{Ding}},
  \bibinfo{author}{\bibfnamefont{L.}~\bibnamefont{Frunzio}},
  \bibinfo{author}{\bibfnamefont{S.~M.} \bibnamefont{Girvin}},
  \bibnamefont{and} \bibinfo{author}{\bibfnamefont{M.~H.}
  \bibnamefont{Devoret}}, \bibinfo{journal}{Nature}
  \textbf{\bibinfo{volume}{641}}, \bibinfo{pages}{612} (\bibinfo{year}{2025}),
  ISSN \bibinfo{issn}{1476-4687},
  \urlprefix\url{https://www.nature.com/articles/s41586-025-08899-y}.

\bibitem[{\citenamefont{Aliferis et~al.}(2008)\citenamefont{Aliferis,
  Gottesman, and Preskill}}]{aliferis2007accuracy}
\bibinfo{author}{\bibfnamefont{P.}~\bibnamefont{Aliferis}},
  \bibinfo{author}{\bibfnamefont{D.}~\bibnamefont{Gottesman}},
  \bibnamefont{and} \bibinfo{author}{\bibfnamefont{J.}~\bibnamefont{Preskill}},
  \bibinfo{journal}{Quantum Info. Comput.} \textbf{\bibinfo{volume}{8}},
  \bibinfo{pages}{181–244} (\bibinfo{year}{2008}), ISSN
  \bibinfo{issn}{1533-7146},
  \urlprefix\url{https://dl.acm.org/doi/abs/10.5555/2011763.2011764}.

\bibitem[{\citenamefont{R{\'e}glade et~al.}(2024)\citenamefont{R{\'e}glade,
  Bocquet, Gautier, Cohen, Marquet, Albertinale, Pankratova, Hall{\'e}n,
  Rautschke, Sellem et~al.}}]{regladeQuantumControlCat2024}
\bibinfo{author}{\bibfnamefont{U.}~\bibnamefont{R{\'e}glade}},
  \bibinfo{author}{\bibfnamefont{A.}~\bibnamefont{Bocquet}},
  \bibinfo{author}{\bibfnamefont{R.}~\bibnamefont{Gautier}},
  \bibinfo{author}{\bibfnamefont{J.}~\bibnamefont{Cohen}},
  \bibinfo{author}{\bibfnamefont{A.}~\bibnamefont{Marquet}},
  \bibinfo{author}{\bibfnamefont{E.}~\bibnamefont{Albertinale}},
  \bibinfo{author}{\bibfnamefont{N.}~\bibnamefont{Pankratova}},
  \bibinfo{author}{\bibfnamefont{M.}~\bibnamefont{Hall{\'e}n}},
  \bibinfo{author}{\bibfnamefont{F.}~\bibnamefont{Rautschke}},
  \bibinfo{author}{\bibfnamefont{L.-A.} \bibnamefont{Sellem}},
  \bibnamefont{et~al.}, \bibinfo{journal}{Nature}
  \textbf{\bibinfo{volume}{629}}, \bibinfo{pages}{778} (\bibinfo{year}{2024}),
  ISSN \bibinfo{issn}{1476-4687},
  \urlprefix\url{https://doi.org/10.1038/s41586-024-07294-3}.

\bibitem[{\citenamefont{Wu et~al.}(2022)\citenamefont{Wu, Kolkowitz, Puri, and
  Thompson}}]{wuErasureConversionFaulttolerant2022}
\bibinfo{author}{\bibfnamefont{Y.}~\bibnamefont{Wu}},
  \bibinfo{author}{\bibfnamefont{S.}~\bibnamefont{Kolkowitz}},
  \bibinfo{author}{\bibfnamefont{S.}~\bibnamefont{Puri}}, \bibnamefont{and}
  \bibinfo{author}{\bibfnamefont{J.~D.} \bibnamefont{Thompson}},
  \bibinfo{journal}{Nature Communications} \textbf{\bibinfo{volume}{13}},
  \bibinfo{pages}{4657} (\bibinfo{year}{2022}), ISSN \bibinfo{issn}{2041-1723},
  \urlprefix\url{https://www.nature.com/articles/s41467-022-32094-6}.

\bibitem[{\citenamefont{Javadi-Abhari et~al.}(2024)\citenamefont{Javadi-Abhari,
  Treinish, Krsulich, Wood, Lishman, Gacon, Martiel, Nation, Bishop, Cross
  et~al.}}]{javadiabhari2024qiskit}
\bibinfo{author}{\bibfnamefont{A.}~\bibnamefont{Javadi-Abhari}},
  \bibinfo{author}{\bibfnamefont{M.}~\bibnamefont{Treinish}},
  \bibinfo{author}{\bibfnamefont{K.}~\bibnamefont{Krsulich}},
  \bibinfo{author}{\bibfnamefont{C.~J.} \bibnamefont{Wood}},
  \bibinfo{author}{\bibfnamefont{J.}~\bibnamefont{Lishman}},
  \bibinfo{author}{\bibfnamefont{J.}~\bibnamefont{Gacon}},
  \bibinfo{author}{\bibfnamefont{S.}~\bibnamefont{Martiel}},
  \bibinfo{author}{\bibfnamefont{P.~D.} \bibnamefont{Nation}},
  \bibinfo{author}{\bibfnamefont{L.~S.} \bibnamefont{Bishop}},
  \bibinfo{author}{\bibfnamefont{A.~W.} \bibnamefont{Cross}},
  \bibnamefont{et~al.}, \emph{\bibinfo{title}{Quantum computing with qiskit}}
  (\bibinfo{year}{2024}), \eprint{2405.08810},
  \urlprefix\url{https://arxiv.org/abs/2405.08810}.

\bibitem[{\citenamefont{Johansson et~al.}(2012)\citenamefont{Johansson, Nation,
  and Nori}}]{johansson2012qutip}
\bibinfo{author}{\bibfnamefont{J.}~\bibnamefont{Johansson}},
  \bibinfo{author}{\bibfnamefont{P.}~\bibnamefont{Nation}}, \bibnamefont{and}
  \bibinfo{author}{\bibfnamefont{F.}~\bibnamefont{Nori}},
  \bibinfo{journal}{Computer Physics Communications}
  \textbf{\bibinfo{volume}{183}}, \bibinfo{pages}{1760–1772}
  (\bibinfo{year}{2012}), ISSN \bibinfo{issn}{0010-4655},
  \urlprefix\url{http://dx.doi.org/10.1016/j.cpc.2012.02.021}.

\bibitem[{\citenamefont{Beale et~al.}(2018)\citenamefont{Beale, Wallman,
  Guti\'errez, Brown, and Laflamme}}]{beale2018qec}
\bibinfo{author}{\bibfnamefont{S.~J.} \bibnamefont{Beale}},
  \bibinfo{author}{\bibfnamefont{J.~J.} \bibnamefont{Wallman}},
  \bibinfo{author}{\bibfnamefont{M.}~\bibnamefont{Guti\'errez}},
  \bibinfo{author}{\bibfnamefont{K.~R.} \bibnamefont{Brown}}, \bibnamefont{and}
  \bibinfo{author}{\bibfnamefont{R.}~\bibnamefont{Laflamme}},
  \bibinfo{journal}{Phys. Rev. Lett.} \textbf{\bibinfo{volume}{121}},
  \bibinfo{pages}{190501} (\bibinfo{year}{2018}),
  \urlprefix\url{https://link.aps.org/doi/10.1103/PhysRevLett.121.190501}.

\end{thebibliography}

\onecolumngrid
\newpage

\clearpage

\begin{appendix}

\section{$[[4,1,1,2]]$ code: state preparation, quantum error detection and expansion} \label{sec:4112prepQEDexp}

A small-overhead $1$-FT logical $\ket{+}$ state preparation circuit is shown in \autoref{fig:4112QED}(a) by creating two pairs of Bell states without performing any measurements. Ideally, this process prepares the state $\ket{+}_L\ket{0}_g$, meaning the gauge qubit is initialized in the $\ket{0}$ state. Under the circuit-level depolarizing error model, the state preparation is $1$-FT.

The ``ZX''-type QED on the $[[4,1,1,2]]$ code is shown in \autoref{fig:4112QED}(b).
In the first half of the scheme, we measure the $Z$-type gauge operators $\psm{Z, I \\ Z, I}$ and $\psm{I, Z \\ I, Z}$ to determine the stabilizer value of $S_Z$. Additionally, if the gauge operator $g_Z$ is fixed (which is the case after our $\ket{+}_L$-state preparation), we can use the gauge operators themselves for error detection. In the second half of the procedure, we measure the $X$-type gauge operators $\psm{X, X \\ I, I}$ and $\psm{I, I \\ X, X}$ to determine the value of $S_X$. During this phase, the gauge qubit is projected onto the $X$-basis, and the value of $g_X$ becomes fully random.

\begin{figure}[htbp]
    \centering
    \includegraphics[width=0.45\textwidth]{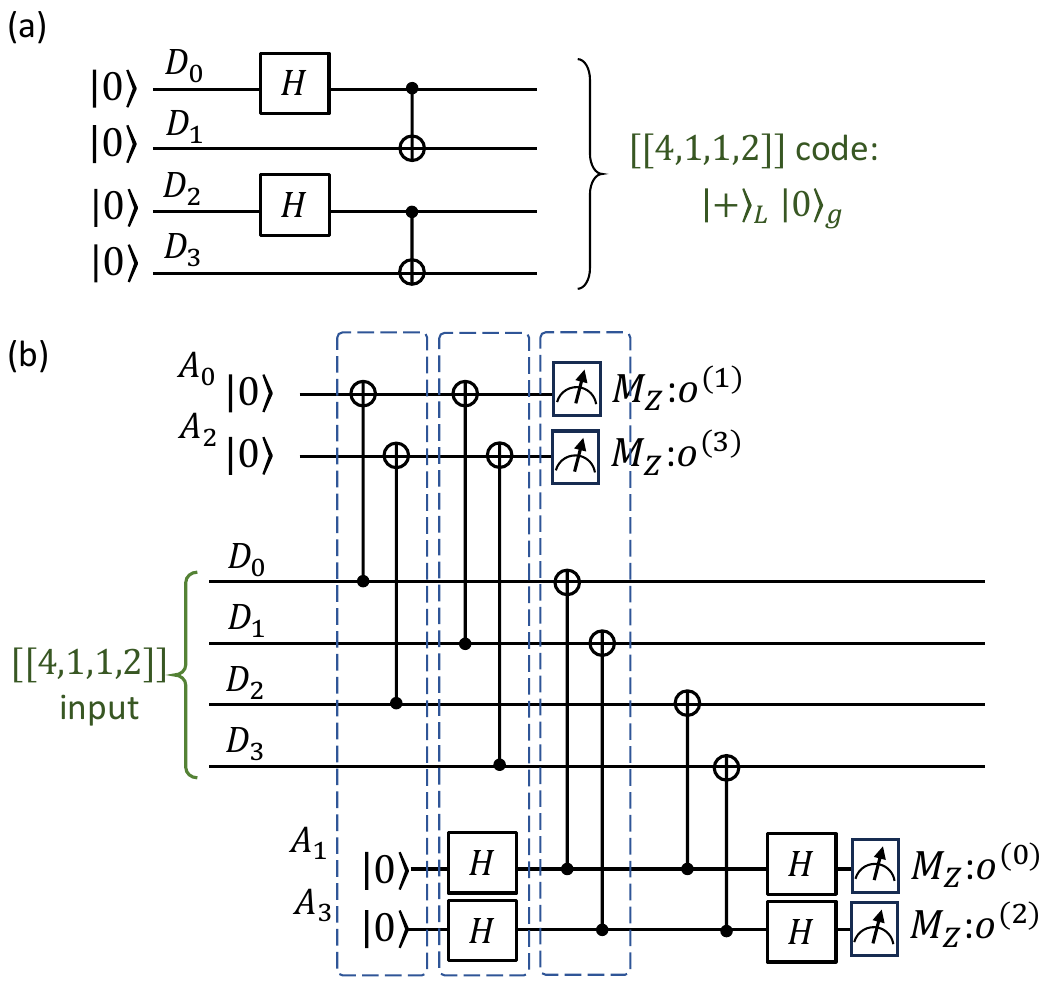}
    \caption{(a) Logical $\ket{+}_L$ state preparation on $[[4,1,1,2]]$ code. (b) ``ZX''-type QED on the $[[4,1,1,2]]$ code. The gauge qubit is fixed on $Z_g$ from the input. We first measure two weight-2 $Z$-type gauge operators, then measure two weight-2 $X$-type gauge operators. The latter step will project the gauge qubit onto the $X$-basis with a random value of $g_X$. The operations in the blue dashed box will be performed at the same time.}
    \label{fig:4112QED}
\end{figure}

Now, we discuss how to expand the codeword of the $[[4,1,1,2]]$ code to the surface code in a 1-fault-tolerant manner. As shown in \autoref{fig:4112expand}(a), suppose we begin with the $[[4,1,1,2]]$ code with the gauge qubit fixed to $g_X$ and aim to expand it to a $d=5$ rotated surface code. First, we embed the $[[4,1,1,2]]$ data qubit onto a corner of the 5-by-5 grid and initialize the remaining data qubits in either the $\ket{0}$ or $\ket{+}$ states. Next, we perform surface code syndrome measurements, as depicted in \autoref{fig:4112expand}(b), for two rounds. We discard the expansion if either of the following occurs: (1) some fixed syndrome values in the first round of syndrome measurements (marked with a mesh texture in \autoref{fig:4112expand}(b)) differ from their previous values, or (2) some syndrome values differ between the two rounds of surface code syndrome measurements.

\begin{figure}[htbp]
    \centering
    \includegraphics[width=0.5\textwidth]{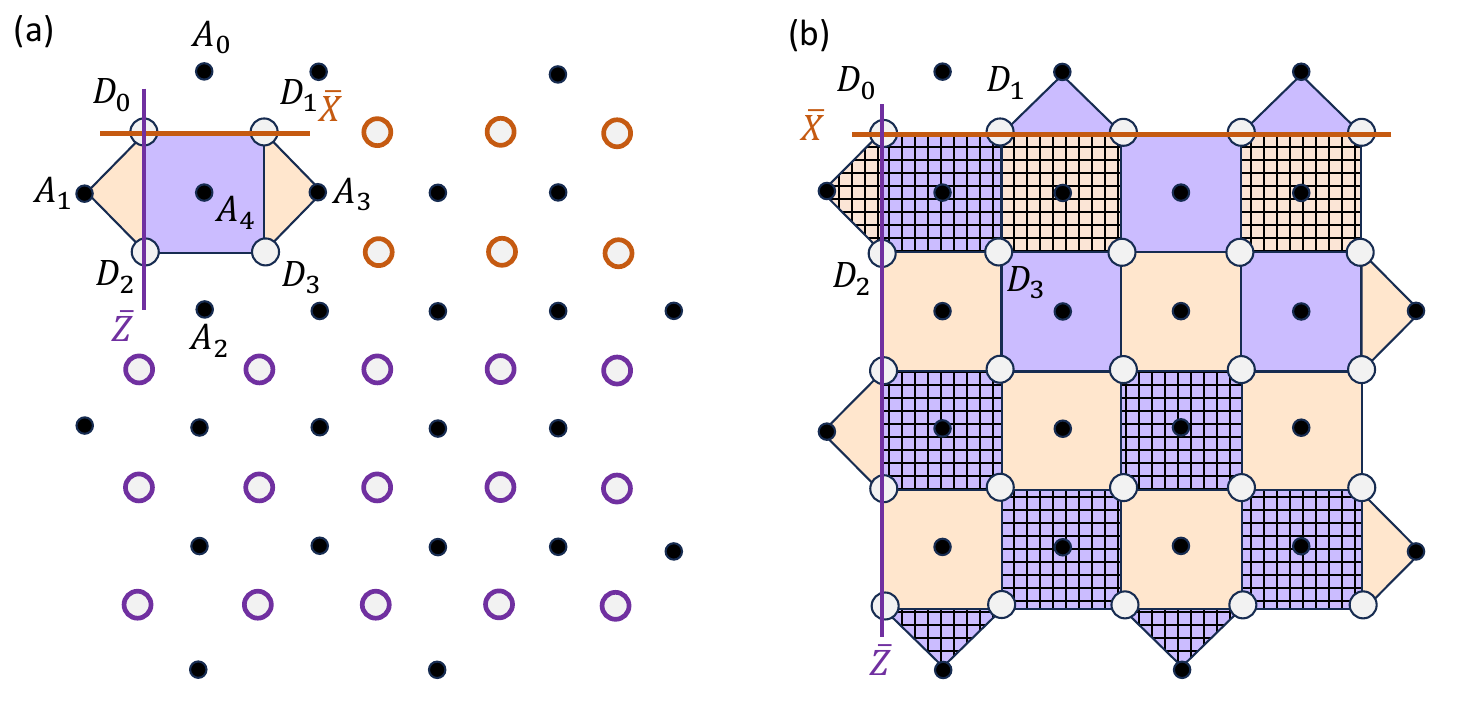}
    \caption{ $1$-FT expansion from the $[[4,1,1,2]]$ code to a $d=5$ rotated surface code: (a) Place the $[[4,1,1,2]]$ code, with the gauge fixed to $g_Z$, in the corner of a 5-by-5 grid. Initialize the orange qubit in the $\ket{+}$ state and the blue qubit in the $\ket{0}$ state. (b) Perform the surface code syndrome measurement three times. Syndromes with a mesh texture indicate values that are fixed before the syndrome measurement.}
    \label{fig:4112expand}
\end{figure}

\section{Fault-tolerant analysis with QED codes} \label{sec:AppFTQED}

In this section, we discuss how to generalize the FT analysis for QEC code in Ref.~\cite{aliferis2005quantum,gottesman2009intro} to the case of QED code. 
Here for simplicity, we mainly focus on the canonical FT model where local stochastic noise may occur after the implementation of each physical operation. The discussion of error-structure-tailored FT can be performed in a similar manner.
Without loss of generality, we focus on the FT discussion with the $[[n,k,d]]$ QED code with $k=1$. Similar to \autoref{eq:Pir}, we can define the $r$-filter with $r\leq d-1$ to be an extended code space projector with at most $r$ errors,
\begin{equation} \label{eq:Pir_QED}
\Pi_{(r)} = \sum_{P\in \mbb{P}_{r,\mc{S}} } P\, \Pi_0 P.
\end{equation}
Here, $\mbb{P}_{r,\mc{S}} = \mbb{P}_r/\sim$ is a quotient set of $\mbb{P}_r$ by the equivalence relationship of $P_1 \sim P_2: P_1 P_2 \in \mc{C}(\mc{S})$. 

The ideal decoder for the QED code is defined to be a noiseless codespace projection measurment with the projector $\{\Pi_0, I-\Pi_0\}$. When we fail to project the state to the code space, we discard this round.
We remark that, here we do not follow the approach in Ref.~\cite{aliferis2007accuracy} where the authors define the ideal QED decoder to be an ideal syndrome measurement appended by an arbitrary error recovery to the code space. Alternatively, we define the ideal decoder as a noiseless projection onto the code space, which is trace non-preserving. 
In what follows, we show that when focusing on the post-selected normalized quantum state which passes the post-selections, we can bound the error rate of the noisy outcome state to $O(p^d)$ where $p$ is the physical error rate.

We first define the requirements of FT-QED gadgets, including logicla state preparation, measurement, gate and QED, as follows.

\begin{definition}[$t$-FT logical measurement for QED codes] \label{def:FTmeas_QED}
The logical measurement is $t$-FT if it satisfy the following property:

If (i) the input state passes the $r$-filter, (ii) $s$ faults occur during the logical measurement, and (iii) $s + r \leq t$, then the normalized statistics of the measurement outcome are the same as first applying an ideal decoder, then performing a noiseless logical measurement.
\end{definition}

\begin{definition}[$t$-FT logical state preparation for QED codes] \label{def:FTprep_QED}
The logical state preparation is $t$-FT if it satisfies,
\begin{enumerate}
\item If $s$ faults occur during the state preparation, then the output state can pass the $s$-filter if $s\leq t$.
\item If $s$ faults occur during the gate implementation and $s\leq t$, then the conditional output state after applying an ideal decoder is the ideal logical state.
\end{enumerate}
\end{definition}

\begin{definition}[$t$-FT gate for QED codes] \label{def:FTgate_QED}
Consider a logical gate acting on $m$ code blocks. The gate is $t$-FT if it satisfies,
\begin{enumerate}
\item If (i) the input state on the $i$-th block passes the $r_i$-filter for $i=1,\dots,m$, (ii) $s$ faults occur during the gate implementation, and (iii) $s+\sum_{i=1}^m r_i\leq t$, then the output state can pass the $s+\sum_{i=1}^m r_i$ filter.
\item If (i) the input state on the $i$-th block passes the $r_i$-filter for $i=1,\dots,m$, (ii) $s$ faults occur during the gate implementation, and (iii) $s+\sum_{i=1}^m r_i\leq t$, then the conditional output state after applying an ideal decoder is the same as the conditional state where we first apply an ideal decoder on each code block after the $r_i$-filter and then apply an ideal gate.
\end{enumerate}
\end{definition}

\begin{definition}[$t$-FT quantum error detection for QED codes] \label{def:FTQED_QED}
The quantum error detection is $t$-FT if it satisfies,
\begin{enumerate}
\item If $s$ faults occur during the QED and $s\leq t$, then the output state can pass the $s$-filter.
\item If (i) the input state passes the $r$-filter, (ii) $s$ faults occur during the QED, and (iii) $s+\sum_{i=1}^m r_i\leq t$, then the conditional output state after applying an ideal decoder is the same as the conditional state after we directly apply an ideal decoder on the input state.
\end{enumerate}
\end{definition}

We remark that, due to the probabilistic nature of QED, we mainly focus on tracking the resulting state or measurement statistics after the QED postselection in the FT analysis. The goal of our FT analysis is to show that the noisy FT circuit can be used to simulate the ideal circuit with a small error rate of order $O(p^d)$ where $p$ is the error rate. We following the approach in \cite{gottesman2009intro}:

We consider a local stochastic error model: we gradually replace the noisy circuit elements to ideal noiseless elements.

\begin{proposition} Suppose we use $t$-FT gadgets defined for the QED codes, any logical circuit under the local stochastic error model that passes the postselection has a logical error rate of $O(p^{t+1})$, where $p$ is the physical error rate.
\end{proposition}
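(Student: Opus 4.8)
The plan is to adapt the extended-rectangle (exRec) argument of Aliferis--Gottesman--Preskill and Gottesman~\cite{aliferis2005quantum,gottesman2009intro} to the post-selected QED setting, with the ideal decoder now taken to be the codespace projection $\{\Pi_0,\, I-\Pi_0\}$ and ``recovery'' replaced by ``discard on failure.'' First I would fix the exRec decomposition: partition the fault locations of the noisy circuit so that each logical gate gadget, together with the leading QED gadget(s) acting on its input blocks, forms one extended rectangle, following the overlapping-exRec convention. An exRec is called \emph{good} if it contains at most $t$ faults and \emph{bad} otherwise. Under the local stochastic model the faults at distinct locations are independent, each of probability at most $p$.

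The technical heart is a correctness lemma: for any good exRec, conditioned on all its internal post-selections passing, sliding an ideal decoder (the $\Pi_0$-projection) from the output back to the input reproduces exactly the corresponding ideal logical gate. I would prove this by induction over the gadgets composing the exRec. The input block passes the $r$-filter for some $r$; condition~1 of Def.~\ref{def:FTgate_QED} and Def.~\ref{def:FTQED_QED} propagates the filter, guaranteeing that the running error weight never exceeds $t$ so that every $t$-FT hypothesis remains applicable. Condition~2 of the same definitions then lets me commute the ideal decoder backwards through the gate and the QED one gadget at a time, collapsing the noisy action into the composition of an ideal decode, the ideal gate, and an ideal decode. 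The filter bookkeeping ensures that for a good exRec ``passing QED'' reproduces precisely the ideal-decoder outcome, so that conditioning on the pass event is consistent with the ideal simulation.

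With the lemma in hand I would bound the failure probability. A deviation from the ideal circuit can arise only from a bad exRec, since every good exRec simulates its ideal gate exactly once conditioned on passing. Each exRec contains a bounded number $A$ of locations, so $\Pr(\mathrm{bad})\le \binom{A}{t+1}p^{t+1}=O(p^{t+1})$. For the conditional statement I would write $\Pr(\mathrm{error}\mid\mathrm{pass})=\Pr(\mathrm{error}\wedge\mathrm{pass})/\Pr(\mathrm{pass})$, use that an error forces at least one bad exRec, and note that $\Pr(\mathrm{pass})$ for a good exRec is essentially deterministic in the low-noise regime, so the conditioning inflates the bound by at most a constant factor. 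A union bound over the $O(L)$ exRecs in a size-$L$ circuit then yields a conditional logical error rate of $O(L\,p^{t+1})$, i.e. $O(p^{t+1})$ per gate; specializing to $t=d-1$ recovers the $O(p^{d})$ scaling claimed in the main text.

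The step I expect to be the main obstacle is the conditioning on post-selection in the last paragraph: unlike the QEC case, where the decoder always acts and no run is discarded, here the underlying measure is reweighted by the global pass event, and one must verify that this reweighting cannot amplify the rare bad-exRec events beyond a constant. Making this rigorous hinges on showing that for good exRecs the pass/fail outcome coincides with the ideal-decoder outcome, so that passing is ``generic'' and $\Pr(\mathrm{pass})$ stays bounded away from zero per exRec. That property is exactly what the correctness lemma of the second step must be stated carefully enough to deliver, and I would invest most of the effort there.
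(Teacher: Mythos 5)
Your proposal is correct and follows essentially the same route as the paper: an exRec decomposition with $t$-goodness defined by fault count, a gadget-by-gadget reduction using the two properties of each QED FT definition to slide the ideal decoder (codespace projection with discard-on-failure) backward, and a $\binom{A}{t+1}p^{t+1}$ bound on bad exRecs. The one place you go beyond the paper is in explicitly flagging and arguing the reweighting-by-post-selection issue; the paper handles this implicitly by defining the reduction as an equivalence of \emph{normalized post-selected} states and noting that intermediate circuits may have different pass probabilities.
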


\begin{proof}

Without loss of generality, we consider an ideal logical circuit with the form shown in \autoref{fig:FT_idealcirc}(e), consisting of one state preparation gadget, two gate gadgets, and one logical measurement gadget.
In practice, we implement this circuit using noisy $t$-FT gadgets as shown in \autoref{fig:FT_idealcirc}(a).
Our goal is to demonstrate that when the circuit passes post-selection, the noisy implementation in \autoref{fig:FT_idealcirc}(a) can be reduced to the ideal circuit in \autoref{fig:FT_idealcirc}(e) with a logical error rate of $O(p^{t+1})$.
To establish this, we partition the circuit using dashed boxes (called extended rectangles or exRecs) as shown in \autoref{fig:FT_idealcirc}(a). These exRecs are defined as:
\begin{enumerate}
\item For quantum gates: $\text{exRec} = \text{ED} \circ \text{Ga} \circ \text{ED}$ (front ED gadget, gate gadget, and rear ED gadget)
\item For state preparation: $\text{exRec} = \text{SP} \circ \text{ED}$ (state preparation gadget and rear ED gadget)
\item For logical measurement: $\text{exRec} = \text{ED} \circ \text{LM}$ (front ED gadget and logical measurement gadget)
\end{enumerate}

\begin{figure}[htbp]
    \centering
    \includegraphics[width=0.4\textwidth]{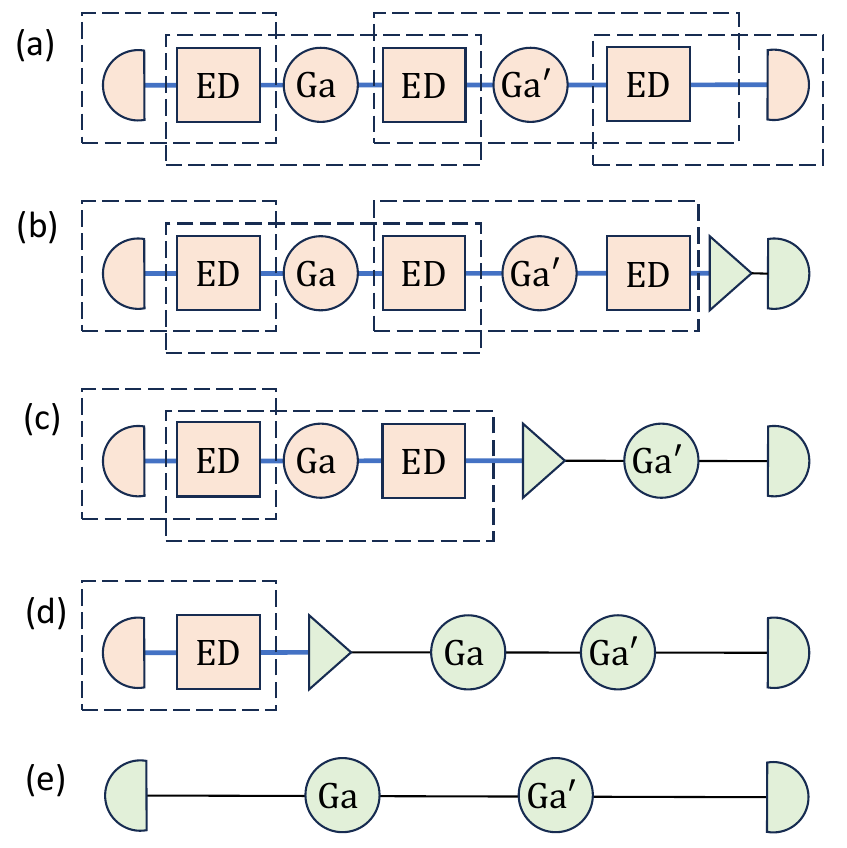}
    \caption{Reduction of a logical FT circuit to the ideal circuit. The left-half-circles, the right-half-circles, the triangles, the ``EC'' squares and the ``Ga'' circles indicate the state preparation gadgets, the measurement gadgets, the ideal decoder, the error-correction gadgets, and the logical gates, respectively. The yellow and green gadgets indicate noisy and ideal (noiseless) gadgets, respectively.
    }
    \label{fig:FT_idealcirc}
\end{figure}

We define an exRec as $t$-good if it contains at most $t$ faults. We now demonstrate that when all dashed boxes (exRecs) in \autoref{fig:FT_idealcirc} are $t$-good, the noisy circuit can be reduced to the ideal circuit following the sequence shown in \autoref{fig:FT_idealcirc}.
Without loss of generality, we examine a single gate reduction step from \autoref{fig:FT_idealcirc}(c) to (d). Specifically, we prove that for any input state, the post-selected output state after processing through the exRec and ideal decoder in \autoref{fig:FTproof_exRec}(a) equals the output state in \autoref{fig:FTproof_exRec}(e) when the extended rectangle contains $\leq t$ faults.

\begin{figure}[htbp]
    \centering
    \includegraphics[width=0.4\textwidth]{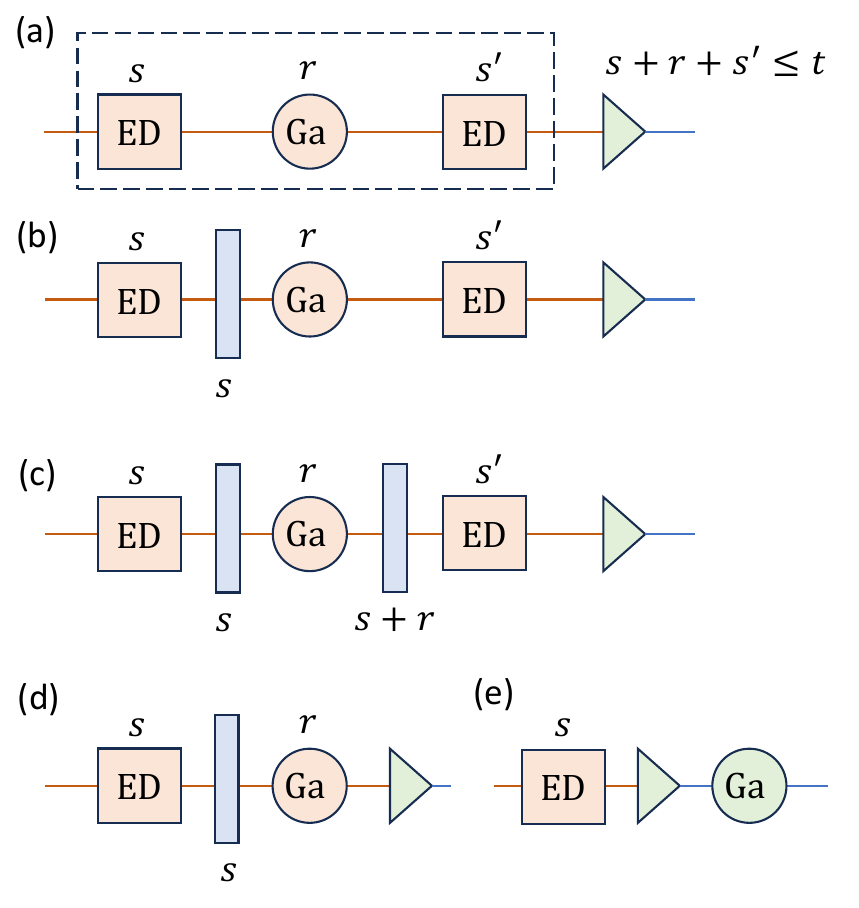}
    \caption{Reduction of the extended rectangular to an ideal gadget when the total number of faults is smaller than $t$. Recall that we focus on the reduction of the state which pass the postselection.
    }
    \label{fig:FTproof_exRec}
\end{figure}

Assume there are $s$, $r$, and $s'$ faults in the front ED gadget, gate gadget, and rear ED gadget respectively within the exRec shown in \autoref{fig:FTproof_exRec}(a).
\begin{enumerate}
\item By the first property of the ED gadget (Def.~\ref{def:FTQED_QED}), \autoref{fig:FTproof_exRec}(a) is equivalent to \autoref{fig:FTproof_exRec}(b).
\item By the first property of the gate gadget (Def.~\ref{def:FTgate_QED}), \autoref{fig:FTproof_exRec}(b) is equivalent to \autoref{fig:FTproof_exRec}(c).
\item Applying the second property of the ED gadget (Def.~\ref{def:FTQED_QED}), we reduce \autoref{fig:FTproof_exRec}(c) to \autoref{fig:FTproof_exRec}(d).
\item Applying the second property of the gate gadget (Def.~\ref{def:FTgate_QED}), we reduce \autoref{fig:FTproof_exRec}(d) to \autoref{fig:FTproof_exRec}(e).
\end{enumerate}
Note that in the last two reduction steps, the circuits may have different postselection probabilities. Here, "reduction" refers to equivalence of the postselected quantum states after quantum error detection.

Assuming all the faults occur independently, the probability of an $\mr{exRec}$ to have more than $t$ faults is $O(p^{t+1})$. Therefore, the reduction step from \autoref{fig:FT_idealcirc}(c) to (d) succeed with an error of $O(p^{t+1})$. We can apply similar techniques to complete the full reduction from \autoref{fig:FT_idealcirc}(a) to (e).
\end{proof}

We remark that, in the above discussion, we focus on the fault-tolerant analysis of the QED codes with canonical fault-tolerance definition based on the local stochasical error model. 
We can also generalize the discussion to the case when some of the gate gadgets are defined under the error-structure-tailored FT in Def.~\ref{def:FTgate_error_tailored}, where the faults are described by the Markovian noisy jump during the continuous Hamiltonian control.

Suppose we have an $\mr{exRec}$ for a quantum gate where the front and rear ED gadgets are defined based on the canonical FT model, while the gate gadget is an error-structure-tailored $t$-FT gate following the definition in Def.~\ref{def:FTgate_error_tailored}.
One can show that the reduction from \autoref{fig:FTproof_exRec}(a) to (e) still holds when there are $s$, $r$, and $s'$ faults in the front ED gadget, gate gadget, and rear ED gadget respectively within the exRec shown in \autoref{fig:FTproof_exRec}(a) and $s+r+s'\leq t$. Here, $r$ faults for the error-structure-tailored FT gate gadget refers to a truncated error dynamics up to the $r$-th order. Since the error of the $r$-th-order truncated dyson series is $O(p^{r+1})$, the reduction from \autoref{fig:FTproof_exRec}(a) to (e) is with the error of $O(p^{t+1})$.

\section{Fault-tolerance of the logical rotation gates} \label{sec:AppZZrot}

\subsection{Proof of \autoref{prop:4112FTdispersive} } \label{ssc:AppPropZZdisp}

\begin{proof}
For the simplicity of discussion, we fix the gauge qubit to $\ket{0}$. The $[[4,1,1,2]]$ subsystem code can then be regarded as a $[[4,1,2]]$ stabilizer code with the stabilizer of $S_X, S_Z$ in \autoref{eq:4112SZSX} and $g_Z$ in \autoref{eq:4112gZgX}. The code space projector is,
\begin{equation}
\Pi_0 = \frac{1}{8}(1+S_X)(1+S_Z)(1+g_Z).
\end{equation}
One can also fix the gauge qubit to $\ket{+}$ where the above discussion still holds.

Following the Dyson series expansion in \autoref{eq:DysonExpansion}, the dynamics during the gate implementation can be expanded by
\begin{equation}
\mc{G}(t,0) = \sum_{q=0}^\infty \mc{G}_q(t,0).
\end{equation}
In the FT analysis, we truncate the dynamics to $s$-th order to check its leading-order dynamics. We will denote the initial state $\rho_0 = \rho(0)$ and the $q$-th-order propagator $\mc{G}_q = \mc{G}_q(t,0)$.

First, we consider the scenario with $r=1, s=0$, i.e, the input state pass the $1$-filter, and we consider the noiseless propagation. We need to show that:
\begin{enumerate}
\item If $\Pi_{(1)} \rho_0 \Pi_{(1)} = \rho_0$, then $\Pi_{(1)} \mc{G}_0(\rho_0) \Pi_{(1)} = \mc{G}_0(\rho_0)$.
\item If $\Pi_{(1)} \rho_0 \Pi_{(1)} = \rho_0$, then $\Pi_0\, \mc{G}_0(\rho_0)\, \Pi_0 = \mc{R}_{Z_L}(\varphi) (\Pi_0 \rho_0 \Pi_0)$. 
\end{enumerate}

Recall that $\mc{G}_0 = \mc{R}_{Z_L}(\varphi)$. We have
\begin{equation} \label{eq:Pi1G0Pi1}
\begin{aligned}
&\Pi_{(1)} \mc{G}_0(\rho_0) \Pi_{(1)} \\ 
=& \Pi_{(1)}  e^{i\varphi Z_L} \rho_0 e^{-i\varphi Z_L} \Pi_{(1)} \\
=& \Pi_{(1)}  e^{i\varphi Z_L} \Pi_{(1)} \rho_0 \Pi_{(1)} e^{-i\varphi Z_L} \Pi_{(1)} \\
=& e^{i\varphi Z_L} \Pi_{(1)} \rho_0 \Pi_{(1)} e^{-i\varphi Z_L} \\
=& e^{i\varphi Z_L} \rho_0 e^{-i\varphi Z_L} = \mc{G}_0(\rho_0).
\end{aligned}
\end{equation}
In the second line, we use the assumption that $\Pi_{(1)} \rho_0 \Pi_{(1)} = \rho_0$. In the third line of \autoref{eq:Pi1G0Pi1}, we use the following property,
\begin{equation} \label{eq:P1eiZLPi1}
\begin{aligned}
&\Pi_{(1)}  e^{i\varphi Z_L} \Pi_{(1)}  \\
= & \sum_{P_1, P_2\in \mbb{P}_{1,S}} P_1 \Pi_0 P_1 e^{i\varphi Z_L} P_2 \Pi_0 P_2 \\
= & \sum_{P_1, P_2\in \mbb{P}_{1,S}, P_1P_2\in \mc{C}(\mc{S})} P_1 P_1 e^{i\varphi Z_L} P_2 \Pi_0 P_2 \\
= & \sum_{P_2\in \mbb{P}_{1,S}} e^{i\varphi Z_L} P_2 \Pi_0 P_2 = e^{i\varphi Z_L} \Pi_{(1)}.
\end{aligned}
\end{equation}
In the third line of \autoref{eq:P1eiZLPi1}, we use the property that $P_1 e^{i\varphi Z_L} P_2\in \mc{C}(\mc{S})$, hence $P_1 P_2 = \mc{C}(\mc{S})$. As a result, $\Pi_0 P_1 e^{i\varphi Z_L} P_2 \Pi_0 = P_1 e^{i\varphi Z_L} P_2 \Pi_0$.

On the other hand, it is easy to check that $\Pi_0 \mc{G}_0(\rho_0) \Pi_0 = \mc{R}_{Z_L}(\varphi) (\Pi_0 \rho_0 \Pi_0)$ holds. Therefore, the gate is fault-tolerant when $r=1, s=0$.

Second, we consider the scenario with $r=0, s=1$, i.e., the input state is in the code space of $\Pi_0$, and we consider the first-order errors. We need to show that:
\begin{enumerate}
\item If $\Pi_{0} \rho_0 \Pi_{0} = \rho_0$, then $\Pi_{(1)} \mc{G}^{[1]}(\rho_0) \Pi_{(1)} = \mc{G}^{[1]}(\rho_0)$.
\item If $\Pi_{0} \rho_0 \Pi_{0} = \rho_0$, then $\Pi_0\, \mc{G}^{[1]}(\rho_0)\, \Pi_0 \propto \mc{R}_{Z_L}(\varphi) (\Pi_0 \rho_0 \Pi_0)$. 
\end{enumerate}

Recall that $\mc{G}^{[1]} = \mc{G}_0 + \mc{G}_1$. The noiseless dynamics $\mc{G}_0$ naturally satisfy the FT requirements. For the first-order noise dynamics,
\begin{equation}
\mc{G}_1 = \mc{G}_1(t,0) = \int_0^t dt_1 \mc{T}\left( \mc{U}(t,t_1)\, \mc{D}\, \mc{U}(t_1, 0) \right), 
\end{equation}
where $\mc{U}(t,0)(\rho) = e^{-it H_{ZZ}} \rho e^{it H_{ZZ}}$, $\mc{D}(\rho) = \sum_j \gamma_j \left( J_j \rho J_j^\dagger - \frac{1}{2}( J_j^\dagger J_j \rho + \rho J_j^\dagger J_j ) \right) $. 

Without loss of generality, we focus on the dynamics associated with the jump operator $J_{e\to g} = \ket{g}\bra{e}\otimes I$ on a specific jump time $t_1\in[0,t]$, 
\begin{equation}
\begin{aligned}
\mc{U}(t,t_1) \mc{D}_{e\to g} \mc{U}(t_1, 0) (\rho_0).
\end{aligned}
\end{equation}
The dissipator $\mc{D}_{e\to g}$ can be decomposed as
\begin{equation}
\begin{aligned}
\mc{D}_{e\to g}(\bullet) &= \gamma_{e\to g} \left( J_{e\to g} \bullet J_{e\to g}^\dagger  - \frac{1}{2} \{ J_{e\to g}^\dagger J_{e\to g}, \bullet \}  \right) \\
&=: \gamma_{e\to g} ( \mc{J}_{e\to g} + \mc{B}_{e\to g} ),  
\end{aligned}
\end{equation}
where $\mc{J}_{e\to g}(\bullet) = J_{e\to g} \bullet J_{e\to g}^\dagger$ represents the jump dynamics, while $\mc{B}_{e\to g} = -\frac{1}{2} \{ J_{e\to g}^\dagger J_{e\to g}, \bullet \}$ is the effect of backaction. The backaction can be treated as a first-order modification onto the no-jump dynamics $\mc{G}_0$ to keep the overall dynamics trace-preserving. 

We first check the effect of the jump operator $\mc{J}_{e\to g}$. To this end, we notice that
\begin{equation} \label{eq:Jeg_commute}
\begin{aligned}
& U(t,t_1) \ket{g}\bra{e} U(t,t_1)^\dagger \\
= &\ket{g}\bra{e} \otimes \left(  \cos(2\chi(t-t_1)) I + i \sin(2\chi(t-t_1)) Z \right),
\end{aligned}
\end{equation}
recall that $\ket{g}\bra{e} = \frac{1}{2}(X+iY)$. Since the Pauli operators $X_0 I_2, X_0 Z_2, Y_0 I_2$ and $Y_0 Z_2$ all belong to $\mbb{P}_{1,\mc{S}}$ of $[[4,1,2]]$ code, we have
\begin{equation} \label{eq:Jump_FT}
\begin{aligned}
\Pi_1 \mc{U}(t,t_1) \mc{J}_{e\to g} \mc{U}(t_1,0) (\rho_0) \Pi_1 &= \mc{U}(t,t_1) \mc{J}_{e\to g} \mc{U}(t_1,0) (\rho_0), \\
\Pi_0 \mc{U}(t,t_1) \mc{J}_{e\to g} \mc{U}(t_1,0) (\rho_0) \Pi_0 &= 0,
\end{aligned}
\end{equation}
when the input state $\rho_0$ is in the code space. Therefore, the jump dynamics satisfy the FT condition.

Now, we check the effect of the backaction $\mc{B}_{e\to g}$. Note that $J_{e\to g}^\dagger J_{e\to g} = \ket{e}\bra{e}\otimes I$, which commutes with the noiseless dynamics $\mc{U}(t,t_1)$. As a result, the backation dynamics can be written as
\begin{equation} \label{eq:back_commute}
\begin{aligned}
&\mc{U}(t,t_1) \mc{B}_{e\to g} \mc{U}(t_1, 0) (\rho_0) \\
= &-\frac{1}{2} \Big( (\ket{e}\bra{e}\otimes I) R_{Z_L}(\phi) \rho_0 R_{Z_L}(\phi) \\
& \quad + R_{Z_L}(\phi) \rho_0 R_{Z_L}(\phi) (\ket{e}\bra{e}\otimes I) \Big).
\end{aligned}
\end{equation}
From \autoref{eq:back_commute}, we can check that,
\begin{equation} \label{eq:Back_FT}
\begin{aligned}
\Pi_1 \mc{U}(t,t_1) \mc{B}_{e\to g} \mc{U}(t_1,0) (\rho_0) \Pi_1 &= \mc{U}(t,t_1) \mc{B}_{e\to g} \mc{U}(t_1,0) (\rho_0), \\
\Pi_0 \mc{U}(t,t_1) \mc{B}_{e\to g} \mc{U}(t_1,0) (\rho_0) \Pi_0 &\propto \mc{R}_{Z_L}(\phi)(\rho_0),
\end{aligned}
\end{equation}
when the input state is in the code space. Combine \autoref{eq:Jump_FT} and \autoref{eq:Back_FT}, we have shown that the gate is FT when $r=0, s=1$.

\end{proof}

\subsection{Proof of \autoref{prop:4112FTancilla} } \label{ssc:AppPropZZancilla}

\begin{proof}
We prove the fault-tolerance of the ancillary-based $ZZ$-rotation gate in a similar manner to \autoref{prop:4112FTdispersive} by the Dyson series expansion of the whole dynamics in \autoref{eq:DysonExpansion}.

The case when $r=1, s=0$, i.e., the input state is noisy while the gate is noiseless is similar to the one in \autoref{prop:4112FTdispersive}.

When $r=0, s=1$, i.e., the input state is noiseless and the first-order error occurs, we can analyze the effect of the dissipator $D[\sqrt{\gamma_j} J_j]$ related to each jump operator $J_j$ during each step of the gate implementation. The effect of the dissipator can be decomposed to the jump operator and the backaction,
\begin{equation}
\begin{aligned}
D[\sqrt{\gamma_j} J_j](\bullet) &= \gamma_j \left( J_j \bullet J_j^\dagger  - \frac{1}{2} \{ J_j^\dagger J_j, \bullet \}  \right) \\
&=: \gamma_j ( \mc{J}_j + \mc{B}_j ).
\end{aligned}
\end{equation}
The effect of the backaction $\mc{B}_j$ can be analyzed similarly to the one in \autoref{prop:4112FTdispersive}. The major task is to enumerate and analyze the effect of different jump operators $\{J_j\}$ occuring at different stages of the gates. We want to show that,
\begin{enumerate}
\item If $\Pi_{0} \rho_0 \Pi_{0} = \rho_0$, then the resulting state $\rho_1$ with a single jump $J_j$ satisfies $\Pi_{(1)} \rho_1 \Pi_{(1)} = \rho_1$.
\item If $\Pi_{0} \rho_0 \Pi_{0} = \rho_0$, then $\Pi_0 \rho_1 \Pi_0 = \mc{R}_{Z_L}(\varphi) (\Pi_0 \rho_0 \Pi_0)$. 
\end{enumerate}

Now, we enumerate the effect of different jump operators.
\begin{enumerate}
\item The dephasing error $J_{\phi} = \ket{e}\bra{e}$ on the data qubit $D_0$ and $D_2$ occured during the whole procedure will not propagate. The final state is $\mc{J}_{\phi}^{(i)}\mc{R}_{Z_L}(\varphi)(\rho_0)$. It is easy to prove that the final state satisfy the FT condition.

\item Consider the ancillary dephasing error $J_{\phi}^{(A)} = \ket{e}\bra{e} + \Delta_f \ket{f}\bra{f}$ during the first group of CZ gates in \autoref{fig:RZZgate_ancilla}. Here $\Delta_f$ is a constant charactering the dephasing property of the $f$-level. To analyze its effect, we propagate the noise to the end of the circuit before ancillary measurement. For simplicity, we focus on the dephasing effect of $f$-level and set $J_{\phi}^{(A)}= \ket{f}\bra{f}$. This is legal because the effect of $\ket{e}\bra{e}$ will not propagate. We can then write $\ket{f}\bra{f}:=\frac{1}{2}(I_{gf} - Z_{gf})$ and check the effect of $Z_{gf}$ instead.
\begin{equation}
\begin{aligned}
 & e^{ i \frac{\pi}{4} Y_{gf} } CZZ_{D_0, D_2; A} e^{i \varphi X_{gf}} Z_{gf} \\
=& e^{ i \frac{\pi}{4} Y_{gf} } CZZ_{D_0, D_2; A} (\cos(2\varphi) Z_{gf} + \sin(2\varphi) Y_{gf} ) e^{i \varphi X_{gf}} \\
=& e^{ i \frac{\pi}{4} Y_{gf} } (\cos(2\varphi) Z_{gf} + \sin(2\varphi) Z_0 Z_2 Y_{gf} ) CZZ_{D_0, D_2; A} e^{i \varphi X_{gf}} \\
=& (-\cos(2\varphi) X_{gf} + \sin(2\varphi) Z_0 Z_2 Y_{gf} ) e^{ i \frac{\pi}{4} Y_{gf} } CZZ_{D_0, D_2; A} e^{i \varphi X_{gf}} \\
\end{aligned}
\end{equation}
Since both $X_{gf}$ and $Y_{gf}$ error will flip the ancillary qubit to $f$-level, which will lead to a wrong detection result $o_{zz}=f$ and will be post-selected, the unnormalized final state $\rho_1\propto 0$. Therefore, it na\"ively satisfies the FT conditions.

\item The ancillary dephasing error during the idling gate will flip the ancillary qubit to $f$-level, leading to a wrong detection result and will be post-selected.

\item The ancillary dephasing error during the second group of CZ gates can be analyzed in a similar manner as the one during the first group of CZ gates. It will leads to a wrong detection result.

\item Consider an ancillary dephasing error occurs during the last $Y$-rotation gate in \autoref{fig:RZZgate_ancilla}. Suppose it occurs at the time $t_1$ during the whole gate period $t= \pi/(4\Omega)$ where $\Omega$ is the driving strength. We propagate it to the end of the $Y$-rotation gate,
\begin{equation} \label{eq:fdephasing_prop}
\begin{aligned}
 & e^{ -i \Omega(t-t_1) Y_{gf} } \ket{f}\bra{f} e^{i \Omega(t-t_1) Y_{gf} } \\
=& \frac{I}{2} + \left(\frac{1}{2} - \cos(2\Omega(t-t_1))\right)Z_{gf} - \sin(2\delta) X_{gf}.
\end{aligned}
\end{equation}
Recall that in the ideal case, the ancillary qubit should be on the level $g$. As a result, the first two terms have no effect on $\ket{g}$. The last Pauli $X_{gf}$ term will flip the ancilla to level $f$ and will be post-selected. The remained state will be proportional to the ideal state.

\item Consider an ancillary dephasing error occurs during the first $Y$-rotation gate \autoref{fig:RZZgate_ancilla}. Suppose it occurs at the time $t_1$ during the whole gate period $t= \pi/(4\Omega)$. Following \autoref{eq:fdephasing_prop}, we propagate it to the end of the first $Y$-rotation gate. Recall that the ideal state after the $Y$-rotation gate is $(\ket{g}+\ket{f})$, we can write the effect of the error on the ancillary state as
\begin{equation} \label{eq:fdephasing_prop2}
\begin{aligned}
& \left(\frac{I}{2} + \left(\frac{1}{2} - \cos(2\Omega(t-t_1))\right)Z_{gf} + \sin(2\delta) X_{gf} \right)  (\ket{g}+\ket{f}) \\
= & \left( (\frac{1}{2} +\sin(2\delta)) I + \left(\frac{1}{2} - \cos(2\Omega(t-t_1))\right)Z_{gf} \right) (\ket{g}+\ket{f}),
\end{aligned}
\end{equation}
where we use the property that $\ket{g}+\ket{f}$ is stabilized by $X_{gf}$. The Pauli $Z_{gf}$ term is a dephasing error whose effect have been discussed above. The remained state will be proportional to the ideal state.

\item The first-order relaxation error on the $g$-$f$ qubit $A$ $J_{f\to e}^{(A)}=\ket{e}\bra{f}$ will transfer its population to the $e$ level, which leads to a wrong detection result $o_{zz}=e$ and will be post-selected. Therefore, it na\"ively satisfies the FT conditions. On the other hand, the first-order relaxation error $J_{e\to g}^{(A)}$ will not occur on the ancillary state which only have population on $g$ and $f$ levels.

\item Consider the relaxation error on the data qubit $D_0$ ($D_2$) during the implementation of the second group of CZ gates in \autoref{fig:RZZgate_ancilla}. Suppose it occurs at the time $t_1$ during the whole CZ gate period $t = \pi/\chi$. We first write it as $\ket{g}_i\bra{e} = \frac{1}{2}(X + iY)$ and check the effect of Pauli $X$. The effect of Pauli $Y$ can be analyzed similarly.
First, we propagate the $X$ error after the remaining dispersive coupling dynamics of $H_{ef}^{(0)} + H_{ef}^{(2)}$,
\begin{equation}
\begin{aligned}
& e^{-i (t-t_1) H_{ef}^{(0)} } X_0 e^{i (t-t_1) H_{ef}^{(0)} } \\
=& X_0 + (\cos\delta -1) X_0\otimes \ket{f}\bra{f} - \sin\delta Y_0\otimes \ket{f}\bra{f}.
\end{aligned}
\end{equation}
Note that, the second and the third term in the second line indicates a dephasing error on the ancillary qubit. Following the ancilla dephasing analysis above, these two terms will filp the ancillary qubit and lead to a wrong ancillary detection result $o_{zz}=f$. As a result, only the $X_0$ term will survive after the ancillary measurement, which is a single-qubit error detectable by the $[[4,1,1,2]]$ code.

\item Consider the relaxation error $J_{e\to g}^{(i)} = \ket{g}_{i}\bra{e}$ on the data qubit $D_0$ ($D_2$) during the idling gate between the first and the second group of CZ gates in \autoref{fig:RZZgate_ancilla}. The relaxation error will introduce an ancillary dephasing error $\ket{f}\bra{f}$, which will be captured by the ancillary measurement.

\item The relaxation error $J_{e\to g}^{(i)} = \ket{g}_{i}\bra{e}$ on the data qubit $D_0$ ($D_2$) during the implementation of the first group of CZ gates can be analyzed in a similar manner as the relaxation error on the data qubit $D_0$ ($D_2$) during the implementation of the second group of CZ gates above. It will be captured by the ancillary measurement.
\end{enumerate}

\end{proof}

\subsection{Proof of \autoref{prop:ProjectionFT}} \label{ssc:AppProofProjectionFT}

\begin{proof}

Focusing on the multi-rotation protocol with $m=2$, i.e., $d=2k$, we first assume that Pauli errors occur during the rotation gate implementation. Importantly, all the weight-$1$ Pauli errors including $Z$-type error (e.g. $ZIII$ error on qubits $D_0$ to $D_3$ in \autoref{fig:ProjectionScheme}) are detectable by the surface code syndrome measurement. This is because the multi-rotation gate $\prod_{i=1}^k R_{ZZ,i}(\theta)$ in \autoref{eq:kR_ZZ_expand} will only transform the $\ket{+}_L$ state to the syndrome subspace that is at least $m=2$ away from the code space. As a result, only some specific weight-2 $ZZ$ error may lead to the undetectable error. For example, if the Pauli $Z_0 Z_1$ error occurs on the data qubit $D_0$ and $D_2$, then the state we prepare will change from \autoref{eq:psi_id_m2} to
\begin{equation} \label{eq:psi_ud1_m2}
\begin{aligned}
& \ket{\psi_{\mr{ud(2)}}} \\
=& \sqrt{P_{\mr{pass|ud(2)}}} \ket{\psi_{\mr{ud(2),pass}}} + \sqrt{P_{\mr{fail|ud(2)}}} \ket{\psi_{\mr{ud(2),fail}}},
\end{aligned}
\end{equation}
where
\begin{equation} \label{eq:Ppassud1_psiud1}
\begin{aligned}
P_{\mr{pass|ud(2)}} &:= \sin^2\theta \cos^2\theta (\sin^{2k-4}\theta + \cos^{2k-4}\theta), \\
\ket{\psi_{\mr{ud(2),pass}}} & \propto R_{Z_L}(\varphi_1) \ket{+}_L = \ket{r_{\varphi_1}}, \\
\varphi_1 = -\sin^{-1}&\left(\frac{1}{\sqrt{P_{\mr{pass|ud(2)}}}} \sin^{k-1}\theta\cos\theta\right) \approx -\theta^{k-2}.
\end{aligned}
\end{equation}
Here, the subscript $\mr{ud(2)}$ is used to denote an undetectable weight-$2$ error on one group of qubits (e.g., Pauli $ZZII$ or $IIZZ$ error on qubits $D_0$ to $D_3$ in \autoref{fig:ProjectionScheme}). Indeed, some weight-$4$ error on $\mr{Supp}(Z_L)$ may also lead to undetectable error which pass the post-selection (e.g., Pauli $ZZZZ$ error in \autoref{fig:ProjectionScheme}), but the probability will be ignorable with $O(p^4 \theta^4)$. The approximation of $\varphi_1 \approx -\theta^{k-2}$ is valid when $|\theta| \ll 1$.

We now analyze the error in the post-selected state $\rho_{\mr{pass}}$ after passing the syndrome measurement,
\begin{equation} \label{eq:rho_pass_m2}
\rho_{\mr{pass}} \approx P_{\mr{id|pass}}\, \rho_{\mr{id,pass}} + P_{\mr{ud(2)|pass}}\, \rho_{\mr{ud(2),pass}} + O(|\varphi|^2 p^4),
\end{equation}
where the approximation is due to higher-order undetectable error with the order of $O(|\varphi|^2 p^4)$ whose form will be analyzed later. $\rho_{\mr{id,pass}} = \ket{r_\varphi}\bra{r_\varphi}$ and $\rho_{\mr{ud(2),pass}} = \ket{r_{\varphi_1}}\bra{r_{\varphi_1}}$. $P_{\mr{ud(2)|pass}}$ is the Bayesian probability of undetectable error given by
\begin{equation} \label{eq:Pud1_pass_m2}
\begin{aligned}
&\quad P_{\mr{ud(2)|pass}} = \frac{ P_{\mr{ud(2),pass}} }{ P_{\mr{pass}} } \\
&= \frac{ P_{\mr{ud(2)}} P_{\mr{pass|ud(2)}} }{ P_{\mr{id}} P_{\mr{pass|id}} + P_{\mr{ud(2)}} P_{\mr{pass|ud(2)}} } + O(\theta^6 p^6) \\
&= \frac{ P_{\mr{ud(2)}} P_{\mr{pass|ud(2)}} }{P_{\mr{id}} P_{\mr{pass|id}}} + O(\theta^4 p^4). 
\end{aligned}
\end{equation}
Here in the first approximation, we omit the contribution from higher-order undetectable error in $P_{\mr{pass}}$. In the second approximation, we only keep the leading-order term in the denominator. We use the fact that in the small angle limit, $P_{\mr{id,pass}}=O(1), P_{\mr{ud(2),pass}}=O(\theta^2 p^2)$ and $P_{\mr{ud(4),pass}}=O(\theta^4 p^4)$.

We can then calculate the trace distance between $\rho_{\mr{pass}}$ and $\rho_{\mr{id,pass}}=\ket{r_\varphi}\bra{r_\varphi}$,
\begin{equation} \label{eq:Dtr_rho_pass_m2}
\begin{aligned}
&\quad D_{\mr{tr}}(\rho_{\mr{pass}}, \rho_{\mr{id,pass}}) = \frac{1}{2}\left| \rho_{\mr{pass}} - \rho_{\mr{id,pass}} \right| \\
&= \frac{1}{2}P_{\mr{ud(2)|pass}} \left|\, \ket{r_\varphi}\bra{r_\varphi} - \ket{r_{\varphi_1}}\bra{r_{\varphi_1}} \,\right| + O(\theta^4 p^4) \\
&= P_{\mr{ud(2)}} \frac{ P_{\mr{pass|ud(2)}} }{ P_{\mr{id}} } \sin\Delta_\varphi + O(\theta^4 p^4).
\end{aligned}
\end{equation}
In the approximation in the second and third line, we only keep the leading order terms. Here, $\Delta_\varphi:= |\varphi - \varphi_1|\approx \theta^k + \theta^{k-2}$. In the small angle limit of $|\theta|\to 0$, we have 
\begin{equation} \label{eq:Dtrpass_approx}
D_{\mr{tr}}(\rho_{\mr{pass}}, \rho_{\mr{id,pass}}) \approx P_{\mr{ud(2)}}\theta^2\cdot \theta^{k-2} \approx P_{\mr{ud(2)}} |\varphi|.
\end{equation}
Following the same approach, we can estimate the higher-order term in \autoref{eq:rho_pass_m2} and shown that $P_{\mr{ud(4)|pass}}\rho_{\mr{ud(4),pass}} = O(|\varphi|^2 p^4)$. 

We remark that, \autoref{eq:Dtrpass_approx} resembles the results in the single-rotation-scheme in \autoref{eq:Dtr_rho_pass}. The major difference lies in the value of the undetectable error $P_{\mr{ud(2)}}$. The exact value of $P_{\mr{ud}(2)}$ is given by
\begin{equation} \label{eq:PrUd2_true}
P_{\mr{ud(2)}} = \sum_{i=1}^k \Pr(Z_{2i} Z_{2i+1}, I_{\backslash{2i,2i+1}}),
\end{equation}
where $\Pr(Z_{2i} Z_{2i+1}, I_{\backslash{2i,2i+1}})$ indicates the probability where Pauli $Z$ error occurs on the qubit $2i$ and $2i+1$ and no detectable errors happen on the other qubits in the post-selection region defined in \autoref{fig:ProjectionScheme} during the whole procedure including the $\ket{+}_L$ state preparation, the implementation of the $R_{ZZ}(\theta)$ gates and the post-selection based on the syndrome measurement. For the convenience of analysis, we can estimate $P_{\mr{ud(2)}}$ by
\begin{equation} \label{eq:PrUd2}
P_{\mr{ud(2)}} \leq \sum_{i=1}^k \Pr(Z_{2i} Z_{2i+1}).
\end{equation}
Here, $\Pr(Z_{2i} Z_{2i+1})$ indicates the probability of Pauli $Z_{2i} Z_{2i+1}$ error. When $k$ becomes larger, the value of $\Pr(Z_{2i} Z_{2i+1}, I_{\backslash{2i,2i+1}})$ will be remarkably smaller than $\Pr(Z_{2i} Z_{2i+1})$. Therefore, \autoref{eq:PrUd2} serves as an upper bound for $P_{\mr{ud(2)}}$.

We can check $\Pr(Z_{2i} Z_{2i+1})$ step-by-step:
\begin{enumerate}
\item The $\ket{+}_L$ state preparation is achieved by first prepare the tensor state $\ket{+}^{\otimes n}$ and then perform $d$ rounds of syndrome measurement to fix the value of $Z$-type syndrome. By carefully assigning the order of the CNOT gates for the ancilla-based syndrome measurement, the $Z$-type hook error will be vertical to the logical $Z_L$ operation. As a result, the $ZZ$-type Pauli error on $\mr{Supp}(Z_L)$ will occur with the probability of $O(p^2)$.
\item During the implementation of the $R_{ZZ}(\theta)$ gates, by leveraging the 1-FT $R_{ZZ}(\theta)$ gate design in \autoref{sec:RZZgate4112}, the $ZZ$-type Pauli error will occur with the robability of $O(p^2)$.
\item During the post-selection based on $3$ rounds of syndrome measurement, similar to the $\ket{+}_L$ state preparation, we can avoid the hook error to propagate along $\mr{Supp}(Z_L)$. The $ZZ$-type Pauli error will occur with the robability of $O(p^2)$.
\end{enumerate}
We remark that, two independent first-order errors during the state preparation, the implementation of $R_{ZZ}(\theta)$ gates and the syndrome measurements can also lead to undetectable $Z_{2i}Z_{2i+1}$-type Pauli errors. This occurs with the probability of $O(p^2)$. However, if the two independent errors do not occur on a pair of locations $2i$ and $2i+1$ with $i=0,1,...,k-1$, this weight-$2$ error will be detectable by the syndrome measurement. 
Taking the number of rotation gates $k$ into account, we have $\Pr(Z_{2i} Z_{2i+1}) = O(p^2)$ and $P_{\mr{ud(2)}}= O(k p^2)$. From \autoref{eq:Dtrpass_approx}, we have
\begin{equation}
D_{\mr{tr}}(\rho_{\mr{pass}}, \rho_{\mr{id,pass}}) \approx O(k|\varphi| p^2).
\end{equation}
Here, $k$ is the number of implemented physical rotation gates $R_{ZZ}(\theta)$. When $k$ is a constant, the trace distance between the prepared noisy state $\rho_{\mr{pass}}$ and $\ket{r_\varphi}_L$ is $O(|\varphi|p^2)$.

\end{proof}

\section{Ancilla-based logical-rotation gate by the dual-rail qubit} \label{sec:App_dual_rail}

In \autoref{fig:RZZ_dual_rail}, we discuss an alternative approach to realize the ancillary-based logical rotation gate $R_{ZZ}(\varphi)$ based on the dual-rail encoding on two qubits~\cite{kubica2023erasure,teohDualrailEncodingSuperconducting2023b}. Consider two transmons each with the energy levels of $\ket{g}, \ket{e}$, $\ket{f}$ and so on. The dual-rail subspace is defined by
\begin{equation}
\ket{0}_L = \ket{ge},\quad \ket{1}_L = \ket{eg}.
\end{equation}
As shown in \autoref{fig:RZZ_dual_rail}(a), when the local qubit relaxation error or excitation error happens, the state of the system will leak out from the code space, which is detectable based on the parity number measurement. 

In our $R_{ZZ}(\varphi)$ gate design in \autoref{fig:RZZ_dual_rail}(b), since the dual-rail qubit acts as an destructive ancilla, we do not need the global parity number measurement. Instead, the parity measurement is done with the logical measurement together at the end of the circuit by two independent qubit measurements.

\begin{figure}[htbp]
    \centering
    \includegraphics[width=0.5\textwidth]{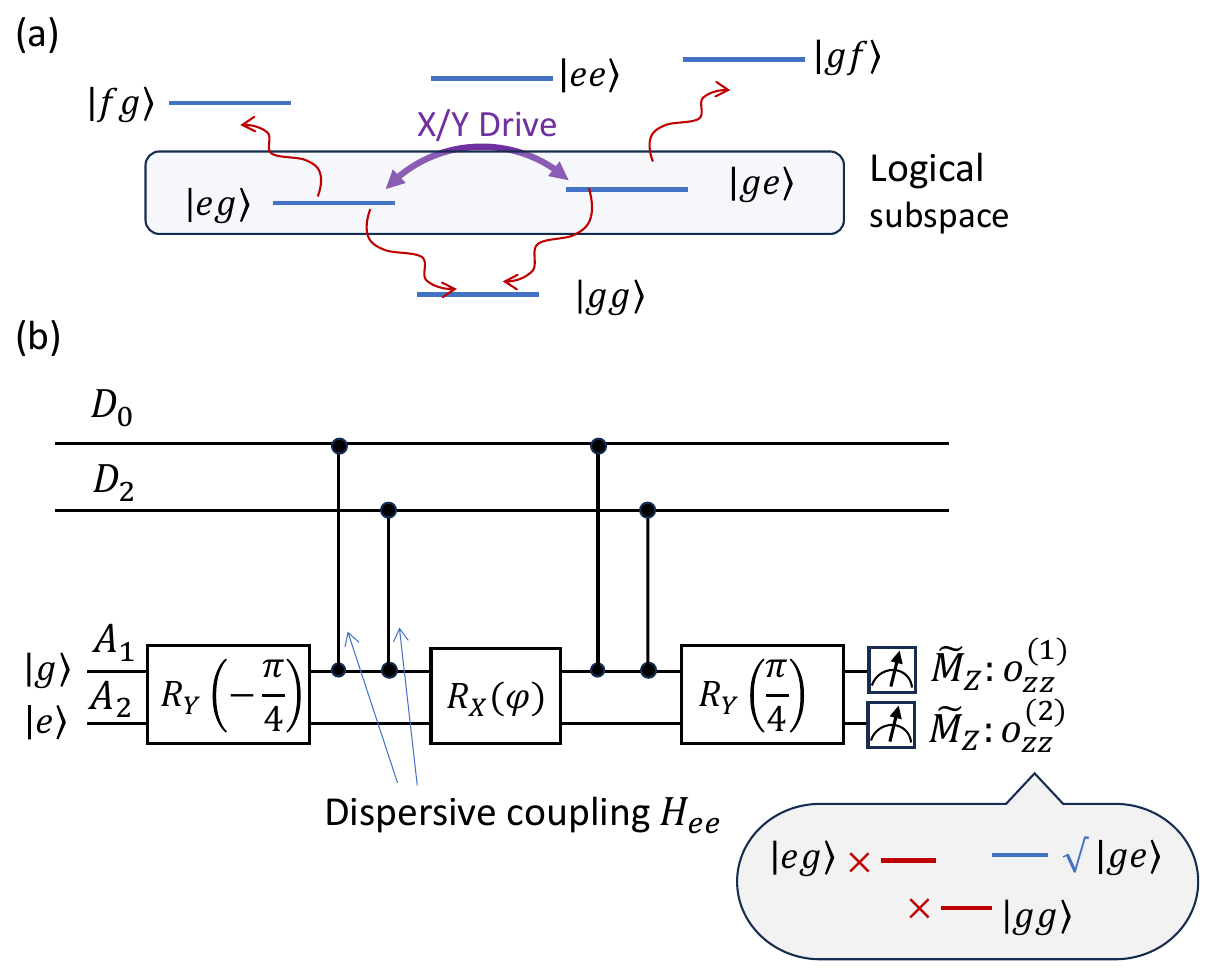}
    \caption{(a) The dual-rail subspace of two qubits. We need the $X$ and $Y$ drive in the dual-rail subspace. (b) 1-fault-tolerant rotation gate $R_{ZZ}(\varphi)$ by the dual-rail ancilla.}
    \label{fig:RZZ_dual_rail}
\end{figure}

Similar to the proof in Appendix~\ref{ssc:AppPropZZancilla}, one can show that when one dephasing, relaxation or excitation error happens during the $R_{ZZ}(\varphi)$ gate in \autoref{fig:RZZ_dual_rail}(b), it will either introduce a detectable error on the ancillary qubits $A_1$ and $A_2$ or a detectable error on the data qubits $D_0$ and $D_2$ which will be captured by the following $[[4,1,1,2]]$ parity checks. Notably, the first-order dephasing error on the dual-rail qubit will also be detected by the ancillary measurements. This makes the 

A good property in this design is that we only need regular dispersive coupling among two qubits $H_{ee} = \chi \ket{e}\bra{e} \otimes \ket{e}\bra{e}$. Meanwhile, our design is robust to the first order of the dephasing error on the dual-rail qubit: as a result, the usual bottleneck of the large dephasing noise on the dual-rail transmon~\cite{kubica2023erasure} will not lead to a severe problem in our design. We will leave the detailed error analysis of the rotation gate built from the dual-rail qubits and the corresponding resource estimation to the future study.

\section{Projection scheme with extended rotation gates} \label{sec:App_extend_rotated}

\autoref{fig:ExtendZZZ}(a) shows how to extend a $R_{ZZ}(\varphi)$ gate to a $R_{ZZZ}(\varphi)$ gate non-transversally by two CNOT gates. Suppose a depolarization noise model for two CNOT gates and a dispersive $R_{ZZ}(\varphi)$ gate as studied in \autoref{sec:RZZgate4112},  it is then easy to show that the Pauli-$ZZZ$ error occurs with a probability of $O(p^2)$ after the whole $R_{ZZZ}(\varphi)$ gate in \autoref{fig:ExtendZZZ}(a). Therefore, following the analysis in Appendix~\ref{ssc:AppProofProjectionFT}, the projection scheme is still 1-FT. 

\begin{figure}[htbp]
    \centering
    \includegraphics[width=0.35\textwidth]{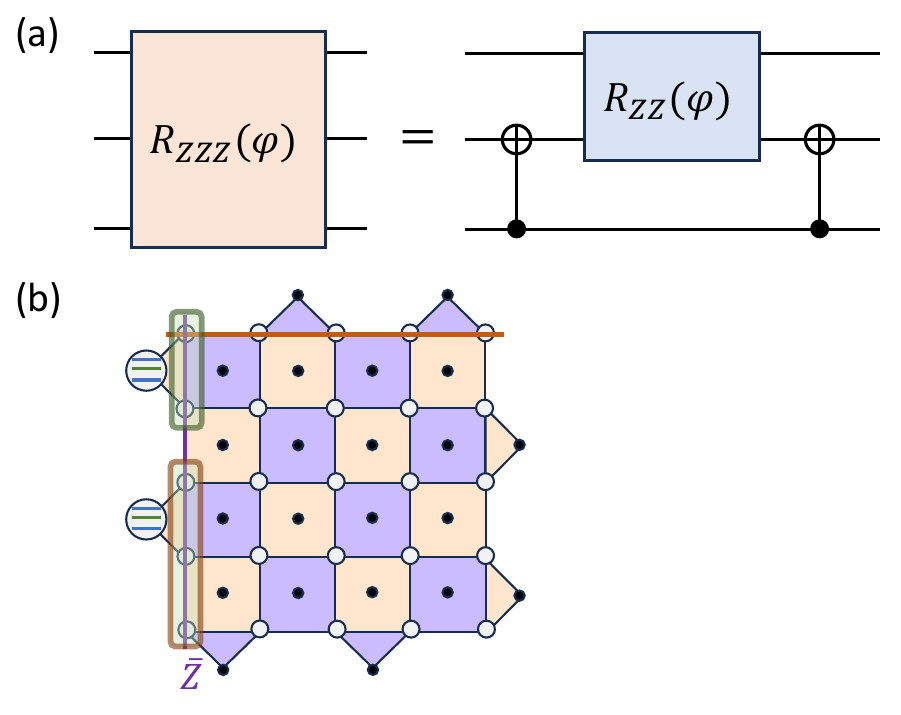}
    \caption{(a) The extension of $R_{ZZ}(\varphi)$ gate to a $R_{ZZZ}(\varphi)$ gate. (b) Projection scheme on an odd-distance rotated surface code.}
    \label{fig:ExtendZZZ}
\end{figure}

With the extended $R_{ZZZ}(\varphi)$ gate construction, we can then build the projection scheme with an odd-distance rotation surface code shown in \autoref{fig:ExtendZZZ}(b). We can also use the $R_{ZZZ}(\varphi)$ gates as the basic unit in the projection scheme for a large-distance surface codes such as the ones with $d=12$ or $18$.

\section{Numerical simulation details} \label{sec:AppNum}

\subsection{[[4,1,1,2]] code simulation} \label{ssc:App4112Num}

In \autoref{fig:Num4112coherent} in \autoref{sec:RZZgate4112}, we have performed a fully-coherent circuit simulation of the preparation of the $\ket{r_\varphi}_L$ state on the $[[4,1,1,2]]$ code. The simulation is performed based on the circuit in \autoref{fig:4112SimCircuit} with the Qiskit simulator~\cite{javadiabhari2024qiskit}. We first prepare the logical $\ket{+}_L$ state 1-fault-tolerantly based on the circuit in \autoref{fig:4112QED}(a), then apply the $R_{ZZ}(\varphi)$ gate based on the ancilla-free dispersive Hamiltonian between the data qubit $D_0$ and $D_2$ or the ancilla-based approach in \autoref{fig:RZZgate_ancilla}. After that, we perform ZX-QED based on the circuit in \autoref{fig:4112QED}(b). The final 4-qubit state $\rho$, up to an ideal projector $\Pi_0$, should be close to the ideal ancillary state $\ket{r_\varphi}$. We then calculate the trace distance by
\begin{equation}
D_{\mr{tr}}(\rho_{\Pi_0}, \ket{r_\varphi}_L),
\end{equation}
where $\rho_{\Pi_0} = \Pi_0 \rho \Pi_0/ \tr(\rho \Pi_0)$.

\begin{figure}[htbp]
    \centering
    \includegraphics[width=0.6\textwidth]{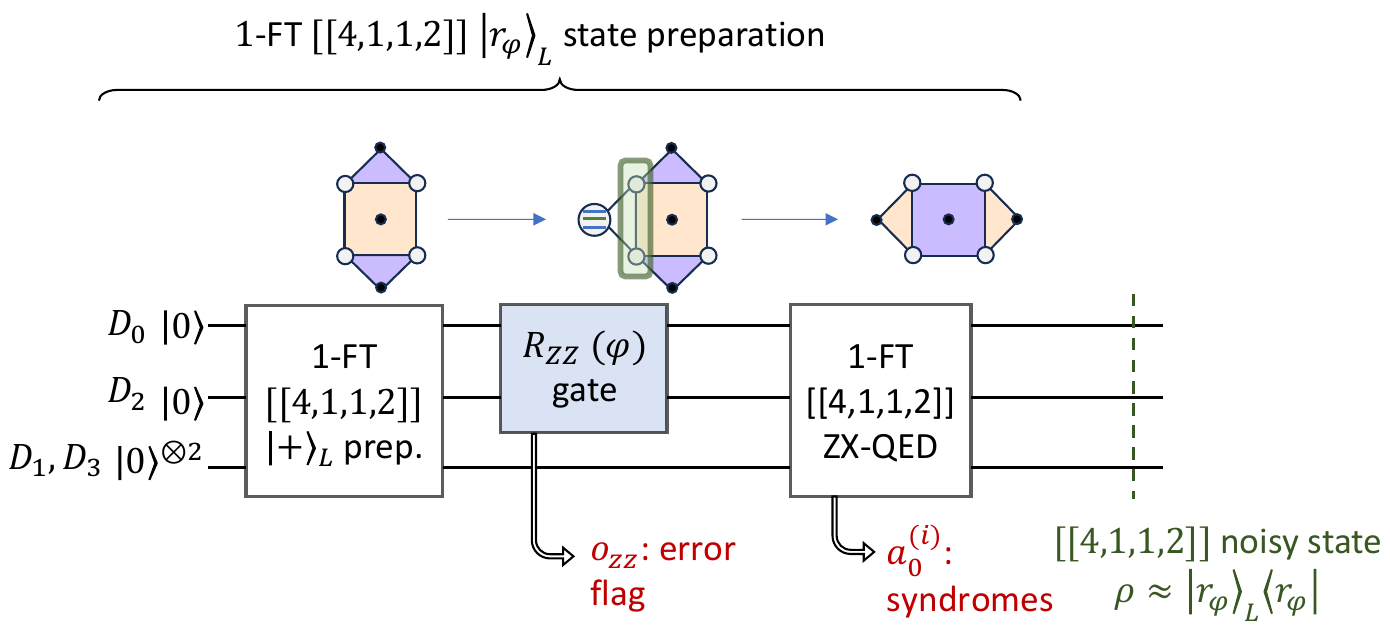}
    \caption{(a) The extension of $R_{ZZ}(\varphi)$ gate to a $R_{ZZZ}(\varphi)$ gate. (b) Projection scheme on an odd-distance rotated surface code.}
    \label{fig:4112SimCircuit}
\end{figure}

During the Qiskit simulation, after all the state preparation, idling gates, single-qubit gates and before all the single-qubit measurements, we append the following single-qubit depolarizing channel with the physical error rate $p$,
\begin{equation}
\mc{E}_{1}(\rho) = (1-p) \rho + \frac{p}{3} (X\rho X + Y\rho Y + Z\rho Z ).
\end{equation}
After all the $\mr{CNOT}$ gates, we append the following two-qubit depolarizing channel with the physical error rate $p$,
\begin{equation}
\mc{E}_{2}(\rho) = (1-p) \rho + \frac{p}{15} \sum_{E\in\{I,X,Y,Z\}^{\otimes 2}\backslash I } E \rho E.
\end{equation}
Finally, for the $R_{ZZ}(\varphi)$ gate, we first perform a Lindbladian-based simulation of the gate on QuTip~\cite{johansson2012qutip} and then export the two-qubit Kraus operators from QuTip to the Qiskit circuit simulation. 

For the ancilla-free dispersive coupling Hamiltonian gate on the data qubits $D_0$ and $D_2$, the Lindbladian $\mc{L}_{ZZ}(\bullet)$ is
\begin{equation}
\mc{L}_{ZZ}(\bullet) = -i[ H_{ZZ}, \bullet ] + \left( \sum_{j=0,2} D[\sqrt{\gamma_{e\to g}} J_{e\to g}^{(j)} ] + D[\sqrt{\gamma_{\phi}} J_{\phi}^{(j)} ] \right) \bullet,
\end{equation}
where
\begin{equation} \label{eq:HZZ_Jeg_Jphi}
\begin{aligned}
H_{ZZ} &= -\chi Z_0 Z_2, \\
J_{e\to g}^{(j)} &= \ket{g}_j\bra{e}, \\
J_{\phi}^{(j)} &= \ket{e}_j\bra{e}.
\end{aligned}
\end{equation}

For the ancilla-based approach in \autoref{fig:RZZgate_ancilla}, we perform a 5-stage Lindbladian simulation: the first ancilla $R_Y(-\pi/4)$-drive, the first group of dispersive $ZZ$ gates, ancilla $R_X(\varphi)$-drive, the second group of dispersive $ZZ$ gates and the second ancilla $R_Y(-\pi/4)$-drive. During the dispersive $ZZ$ gates, we assume the following dispersive-coupling Hamiltonian,
\begin{equation}
H = H_{fe}^{(0)} + H_{fe}^{(2)} = \chi \ket{f}_A \bra{f} \otimes \ket{e}_{D_0}\bra{e} + \chi \ket{f}_A \bra{f} \otimes \ket{e}_{D_2}\bra{e}.
\end{equation}
During the $X$ drive on the ancillary $g$-$f$ qubit, we assume the following driving Hamiltonian 
\begin{equation}
H_{gf} = \Omega_{gf} (\ket{g}\bra{f} + \ket{f}\bra{g}).
\end{equation}
We assume that the two data qubits will experience the dissipative error of $J_{e\to g}$ and $J_{\phi}$ in \autoref{eq:HZZ_Jeg_Jphi} during the dispersive $ZZ$ gates, and the ancilla $R_X(\varphi)$-drive. The $g$-$f$ qubits will experience $J_{e\to g}$, $J_{\phi} = \ket{e}\bra{e} + 2\ket{f}\bra{f}$ and also the $f\to e$ relaxation $J_{f\to e} = \ket{e}_j\bra{f}$ with the dissipation rate $\gamma_{f\to e}$ during the whole 5-stage Lindbladian simulation. 

In the QuTip simulation, we have assume that the dispersive coupling $\chi= 2\pi\cdot 5$ MHz and the $g$-$f$ drive strength $\Omega = 2\pi\cdot 20$ MHz. This is based on the parameters for the $g$-$f$ qubits in Kubica~et~al.~\cite{kubica2023erasure}. We remark that, when we set $\eta = 2\pi\cdot 250\,\mr{MHz}, \Delta = 2\pi\cdot 500\,\mr{MHz}, g_c = 2\pi\cdot 30\,\mr{MHz}, \epsilon_d = 2\pi \cdot 30\,\mr{MHz}$ in Appendix~J in Ref.~\cite{kubica2023erasure}, we have that the $f$-$f$ coupling $\chi= 2\pi\cdot 4.8$ MHz and the drive $\Omega_{gf} = 2\pi\cdot 20.36$ MHz.

During the QuTip simulation, we set the dissipation rate $\gamma_{f\to e} = \gamma_{e\to g} = \gamma_{\phi}$. To align the noise during the QuTip simulation with the depolarization noise during the Qiskit simulation, we first consider the parameters in Ref.~\cite{reinholdErrorcorrectedGatesEncoded2020a} with $\gamma=2\cdot 10^4\, \mr{s}^{-1}$ and the depolarization noise rate $p=1\cdot 10^{-3}$ as the current achievable noise parameters, then fix their ratio $\gamma/p=2\cdot 10^{7}\, \mr{s^{-1}}$ to homogeneously adjust all the error parameters.

\subsection{Expansion scheme simulation} \label{ssc:AppExpandNum}

In \autoref{fig:expansion_num} in \autoref{sec:expansion}, we have performed a circuit-level simulation of the expansion scheme to estimate the average infidelity of the quantum channel to prepare $\ket{r_\varphi}$ in \autoref{fig:expansion}. The simulation is performed based on the Stim package~\cite{gidney2021stim}. 

Unlike the Qiskit-based simulation, Stim is a Clifford circuit simulation platform where only Pauli-type noise channel is allowed. In this case, we set the rotation angle $\varphi=0$ or $\pi/4$ to ensure the Clifford property of the whole circuit. Moreover, for the two-qubit noise channel introduced by the $R_{ZZ}(\varphi)$, which is still obtained by QuTip similar to the ones introduced in Appendix~\label{ssc:App4112Num}, we only keep the Pauli part of the noise and load the two-qubit Pauli noise channel into the Stim simulation. This is reasonable since the physical coherent noise will be suppressed by the later syndrome prjection: in Ref.~\cite{beale2018qec}, the authors have shown that as long as the physical noise is local, the off-diagonal physical coherent noise will be exponentially suppressed by the distance $d$ of the QEC code. Consequently, the off-diagonal terms will not affect the overall performance of the expansion scheme when we finally expand the code to the one with a large distance.

Recall that we only estimate the noise introduced by the QED + post-selection stage in \autoref{fig:expansion}. During the whole circuit, we first prepare the $\ket{r_\varphi}$ state on the $[[4,1,1,2]]$ code and then expand it to a $d=3$ rotated surface code and perform three rounds of surface code syndrome measurement for the further post-selection. We then estimate the logical error rate of the prepared surface code state.

During the simulation of the expansion circuit, to accurately estimate the average channel infidelity, we evaluate the logical error rate for input states $\ket{0}_L$, $\ket{+}_L$, and $\ket{+i}_L$, respectively. For $\varphi = 0$, we measure the final surface code state in the logical $Z$, $X$, and $Y$ bases; for $\varphi = \pi/4$, we measure in the logical $Z$, $Y$, and $-X$ bases. This approach enables us to estimate the Pauli infidelity of the noise introduced by the whole expansion circuit corresponding to the $Z$, $X$, and $Y$ bases, respectively. The logical $Z$ and $X$ measurement can be performed fault-tolerantly by a direct transversal physical $Z$ and $X$ measurement, respectively. The logical $Y$ basis, however, cannot be fault-tolerantly measured. For the error estimation in the $Y$ basis, we add an extra round of noiseless $d=3$ surface code QED, then measure the $Y$-basis logical error rate non-fault-tolerantly. 

Here, we focus on the estimation of the average channel infidelity. For a fair comparison, we may want to estimate the trace distance of the final prepared state $\rho$ to the ideal state $\ket{r_\varphi}_L$. Based on the results in Ref.~\cite{beale2018qec}, the logical noise channel will be close to a Pauli channel where the off-diagonal coherent terms are suppressed by repetitive syndrome measurements. For an $n$-qubit Pauli channel $\mc{N}_n$, its diamond norm distance $\epsilon_\diamond(\mc{N}_n)$ to an identity channel and its average infidelity $r(\mc{N}_n)$ is related by~\cite{beale2018qec}
\begin{equation}
\epsilon_\diamond(\mc{E}_n) = (1 + \frac{1}{2^n}) r(\mc{E}_n).
\end{equation} 
For a single-qubit channel $\mc{N}$, we can then estimate the trace distance of $D_{\mr{tr}}(\rho, \ket{r_\varphi}_L)$ by $1.5\cdot r(\mc{N})$.

\subsection{Projection scheme simulation} \label{ssc:AppProjNum}

In \autoref{fig:projection_num}, \autoref{fig:projection_num} and \autoref{fig:SuccProb} in \autoref{sec:projection}, we perform the circuit-level simulation of the projection scheme on a $d\times (d+1)$ surface code where $d=m\cdot k$ using the Stim package~\cite{gidney2021stim}. Here, $k$ is the number of physical rotation gates, while $m$ is the number of qubits involved during the implementation of the rotation gates. We have $m=2$ for the ordinary $R_{ZZ}(\varphi)$ gate, and $m=3$ for the $R_{ZZZ}(\varphi)$ gate with the expansion introduced in Appendix~\ref{sec:expansion}.

We focus on the successful probability and error estimation of the preparation of $\ket{r_\varphi}_L$ in the projection scheme. During the scheme, we need to perform many rotation gates $R_{ZZ}(\theta)$ with the angle $\theta\approx \varphi^{1/k}$. Since the rotation gates are non-Clifford gates, we cannot simulate the whole circuit efficiently. However, following the approach in Ref.~\cite{toshio2024practical}, we can still estimate the successful probability and the trace distance efficiently based on the Monte-Carlo sampling of a set of Clifford circuits. 

Recall from Appendix~\ref{ssc:AppProofProjectionFT} that the post-selected noisy state in the projection scheme can be written as
\begin{equation}
\rho_{\mathrm{pass}} \approx P_{\mathrm{id|pass}}, \rho_{\mathrm{id,pass}} + P_{\mathrm{ud(2)|pass}}, \rho_{\mathrm{ud(2),pass}} + O(|\varphi|^2 p^4),
\end{equation}
where $\rho_{\mathrm{id,pass}} = \ket{r_\varphi}\bra{r_\varphi}$ is the ideal rotation state and $\rho_{\mathrm{ud(2),pass}} = \ket{r_{\varphi_1}}\bra{r_{\varphi_1}}$ is the leading-order wrong rotation state. $P_{\mathrm{id|pass}}$ and $P_{\mathrm{ud(2)|pass}}$ are the Bayesian probability of no error happens and undetectable error happens conditioning on the passing of the QED stage. Since the value of $\varphi$ and $\varphi_1$ are know in advance, the major idea of our numerical estimation is to estimate the Bayesian probabilities $P_{\mathrm{id|pass}}$ and $P_{\mathrm{ud(2)|pass}}$ and then calculate the successful probability and the trace distance based on the analytical formulas.

Recall that we can decompose the effect of multi-rotation gate as
\begin{equation}
\prod_{i=1}^k R_{ZZ,i}(\theta) = \sum_{b=0}^{2^k} u_{|b|} (Z_{(2)})^b
\end{equation}
where $b$ is a $k$-bit string, $|b|$ denotes the weight of $b$, $u_w:= i^w \sin^w \theta \cos^{k-w}\theta$ and
\begin{equation}
(Z_{(2)})^b := \prod_{i: b_i=1} Z_{2i} Z_{2i+1}.
\end{equation}
Applying the multi-rotation gate on $\ket{+}_L$, we have
\begin{equation} \label{eq:MultiRot_Plus}
\begin{aligned}
\prod_{i=1}^k R_{ZZ,i}(\theta) \ket{+}_L &= \sum_{b=0}^{2^k} u_{|b|} (Z_{(2)})^b \ket{+}_L \\
&= \sum_{b=0}^{2^{k-1}} \left( u_{|b|} (Z_{(2)})^b + u_{|\bar{b}|} (Z_{(2)})^{\bar{b}} \right) \ket{+}_L \\
&=: \sum_{b=0}^{2^{k-1}} \ket{\psi_b}.
\end{aligned}
\end{equation}
Here, $\bar{b}$ is the bitwise complement of $b$. In the second line of \autoref{eq:MultiRot_Plus}, we pair the terms with complement $b$ values. We remark that, since $(Z_{(2)})^{\bar{b}}(Z_{(2)})^b = Z_L$, the summand in \autoref{eq:MultiRot_Plus} specified by $b$ is in the same syndrome subspace with the one specified by $\bar{b}$. Specifically, $\ket{\psi_{00...0}}\propto \ket{r_\varphi}$ is the ideal rotation state. In the noiseless quantum circuit, when performing the syndrome projection, we will obtain the state $\ket{\psi_b}$ with the probability,
\begin{equation} \label{eq:Prb}
\Pr(b) = |u_{|\mr{b}|}|^2 + |u_{|\bar{\mr{b}}|}|^2.
\end{equation}

Now, suppose a Pauli error $E$ occurs during the state preparation, $R_{ZZ}(\theta)$ gate implementation or the syndrome measurement procedure. The erroneous state becomes
\begin{equation}
E \prod_{i=1}^k R_{ZZ,i}(\theta) \ket{+}_L = \sum_{b=0}^{2^{k-1}} E \ket{\psi_b}.
\end{equation}
The new state $E\ket{\psi_b}$ will then belongs to a different syndrome subspace of the surface code as long as the weight of $E$ is less than $d$. We remark that, the set of quantum state $\{E\ket{\psi_b}\}$ preserves the orthogonality of $\{\ket{\psi_b}\}$ and the sampling probability $\Pr(b) = |u_{|\mr{b}|}|^2 + |u_{|\bar{\mr{b}}|}|^2$. The only effect is that the syndrome value have changed from the ones related to the string of $(Z_{(2)})^b$ to the ones related to $E (Z_{(2)})^b$. As a result, the error sampling of $E$ and the sampling of bit-string $b$ by the syndrome measurement can be regarded as independent procedure.

Thanks to the independence of the sampling of $E$ and the bit-string $b$, we can estimate the Bayesian probabilities of $P_{\mathrm{id|pass}}$ and $P_{\mathrm{ud(2)|pass}}$ by the following procedures,
\begin{enumerate}
\item Prepare the $\ket{+}_L$ state by a single round of surface code syndrome measurement on the state $\ket{+}^{n}$;
\item Instead of implementing $k$ rotation gates $R_{ZZ}(\theta)$, sample a single bit-string $b\in \{0,1\}^{2^{d-1}}$ with probability $\Pr(b)$ in \autoref{eq:Prb} then apply $(Z_{(2)})^b$ on the qubits;
\item Apply the error channels introduced by the $k$ rotation gates $R_{ZZ}(\theta)$, which is obtained from the QuTip simulation of the lindbladian procedure;
\item Perform three rounds of surface code syndrome measurement. Determine whether the circuit passes the post-selection or not.
\end{enumerate}
Here, instead of implementing $k$ rotation gates $R_{ZZ}(\theta)$ directly, we sample the bit string $b$ induced by these gates based on \autoref{eq:MultiRot_Plus}. This is reasonable because the Pauli errors $E$ generated during the whole circuit and the string $(Z_{(2)})^b$ have independent effect onto the syndrome measurement results.

Suppose we have run $N$ samples and $N_{\mr{pass}}$ rounds pass the post-selection. We can then estimate $P_{\mr{pass}}$, $P_{\mathrm{id|pass}}$ and $P_{\mathrm{ud(2)|pass}}$ by
\begin{equation}
\begin{aligned}
P_{\mr{pass}} &\approx \frac{ N_{\mr{pass}} }{N}, \\ 
P_{\mr{id|pass}} &\approx \frac{ N_{\mr{wt(b)=0, pass}} }{ N_{\mr{pass}} },\\
P_{\mr{ud(2)|pass}} &\approx \frac{ N_{\mr{wt(b)=1, pass}} }{ N_{\mr{pass}} },\\
\end{aligned}
\end{equation}
where $N_{\mr{wt(b)=0, pass}}$ and $N_{\mr{wt(b)=1, pass}}$ indicates the number of samples with $\mr{wt}(b)=0$ and $\mr{wt}(b)=1$ among $N_{\mr{pass}}$ rounds which passes the post-selection. $\mr{wt}(b)$ is the weight of the $k$-bit string $b$. The trace distance can then be estimated by \autoref{eq:Dtr_rho_pass_m2} with the Bayesian probabilities.

\section{Details for the spacetime cost estimation of different magic-state preparation methods} \label{sec:AppCostEst}

We introduce the details of the spacetime cost estimation for the non-Clifford gates used in the second-order Trotter-based Hamiltonian simulation of the Heisenberg model in \autoref{eq:Heisenberg}.

\subsection{Count the number of T gates in the Hamiltonian simulation} \label{ssc:AppCountT}

For the $T$-state based methods, including the magic state distillation and cultivation, we first estimate the minimum number of Trotter segments $\nu$ required to implement the second-order Trotter formula. Based on the results of Proposition~M.1 and Fig.~3 in Childs~et~al.~\cite{childs2021theory}, when the commutator error analysis is taken into account, the minimum number of Trotter segment $\nu$ when the accuracy requirement is $1\cdot 10^{-3}$ can be numerically fitted by
\begin{equation}
\nu = e^{1.85} \cdot n^{0.27} t^{1.25},
\end{equation}
Now, we decompose the Heisenberg Hamiltonian to the terms with even and odd indicies,
\begin{equation}
\begin{aligned}
H &= A + B, \\
A &= \sum_{j:\mr{odd}} (X_j X_{j+1} + Y_j Y_{j+1} + Z_j Z_{j+1}) + \sum_{j:\mr{odd}} h_j Z_j, \\
B &= \sum_{j:\mr{even}} (X_j X_{j+1} + Y_j Y_{j+1} + Z_j Z_{j+1}) + \sum_{j:\mr{even}} h_j Z_j, \\
\end{aligned}
\end{equation}
Based on the $4$th-order Trotter formula, we need to implement $6$ rotation gates for each Pauli terms in $A$ and $5$ rotation gates for each Pauli terms in $B$. The total number of rotation gates in a Trotter segment of the Heisenberg Hamiltonian can be estimated by 
\begin{equation}
N_{R_Z(\varphi)} = 5.5\cdot 4 \cdot N = 22N.
\end{equation}
Here, $N$ is the number of spins. 

For each rotation gate $R_{Z_L}(\varphi)$, we need to compile it to a sequence of $H$, $S$ and $T$ gate. Following currently the most resource-efficient $T$-gate compiling approach in Ref.~\cite{bocharov2015efficient}, we need
$$ c_T = 1.149 \log_2(1/\epsilon_c) + 9.2 $$
$T$ gates to achieve a compiling accuracy of $\epsilon_c$. To ensure the accuracy of the final simulation results, we now set $\epsilon_c = 1\times 10^{-11}$. As a result, $c_T \simeq 51.186$.

To summarize, consider a digital Hamiltonian simulation of the Heisenberg model with $N$ spins,  evolution time $T$ and accuracy requirement of $\epsilon = 1\cdot 10^{-3}$, the number of $T$ gates can be estimated by
\begin{equation}
N_{T} = \nu\cdot N_{R_Z(\varphi)}\cdot c_T.
\end{equation}

\subsection{Count the spacetime cost of magic state distillation and cultivation}


For the magic state distillation protocol, we consider the usage of the concatenated $(15\text{-to-}1)^4_{13,5,5}\times (20\text{-to-}4)_{27,13,15}$ protocol designed in Ref.~\cite{Litinski2019magic} which can generate a $T\ket{+}_L$ state with an error of $2.6 \times 10^{-11}$ when the gate error is $1\cdot 10^{-3}$. Here, the subscript $\{13,5,5\}$ indicates the $X$-distance $d_X$, $Z$-distance $d_Z$ of the level-1 surface code and the measurement distance $d_m$ during the distillation, respectively. The superscript indicates the number of level-$1$ distillation blocks used in the protocol.

Based on the estimation summarized in Table~1 in Ref.~\cite{Litinski2019magic}, the spacetime cost characterized by the qubitcycles of the $(15\text{-to-}1)^6_{17,7,7}\times (20\text{-to-}4)_{23,11,13}$ is $1840000$.

For the magic state cultivation protocol~\cite{gidney2024cultivation}, we consider the estimation in Fig.~1 of \cite{gidney2024cultivation}, where we can prepare a $T$-state with an end-to-end error rate of $2\cdot 10^{-9}$ and a discard rate of $99\%$. The spacetime cost of the cultivation protocol is about $60000$ from Fig.~1. 
To make a fair comparison between the capacity of the magic state cultivation and our protocol, we assume that the error of $2\cdot 10^{-9}$ on the $T$-state will become a logical Pauli noise on the $T$ gate after injection and hence can be mitigated by the Pauli error cancellation techniques~\cite{temme2017error,endo2018practical}. As a result, we can accurately get the outcome of the entire circuit when the number of $T$ gate is below $1/(2\cdot 10^{-9}) = 5\cdot 10^8$.

\subsection{Count the spacetime cost of the projection scheme}

We estimate the spacetime cost of the projection scheme based on the $(18,3,6)$-scheme, i.e., on the surface code with a distance of $18$, we implement 6 multi-rotation gates with weight-$3$ $ZZZ$-rotation gate $R_{ZZZ}(\theta)$ realized by the extension approach in Appendix~\ref{sec:App_extend_rotated}. Recall from \autoref{fig:ProjectionScheme}(b) that during the post-selection stage of the scheme, we perform 4 rounds of syndrome measurements and post-selection based on the rightmost three columns of syndromes in \autoref{fig:ProjectionScheme}(a). If the post-selection stage passes, we then perform $18-4=14$ rounds of extra syndrome measurements to corrent the remaining error. 

We remark that, after the first round of syndrome measurement, we will apply the $R_{ZZ}(\theta)$ gates on the support of the logical $Z$ operator. If implemented by the direct dispersive coupling, this gate is usually quick: for a typical $\theta$ value of $0.1$ rad and dispersive coupling strength of $\chi = 2\pi\cdot 5$ MHz, a $R_{ZZ}(\theta)$ gate can be done within $31.8$ ns. As a result, we will ignore the time cost of the $R_{ZZ}(\theta)$ gates.

Suppose we perform a repeat-until-success preparation of the resource state: if it pass the QED stage, we keep perform the remaining syndrome measurements to complete the state preparation; otherwise we restart the state preparation procedure. When the successful probability of the QED stage is $p_{\mr{suc}}$, the spacetime cost, i.e., the qubitcycle $Q$ can be estimated by the following formula,
\begin{equation} \label{eq:Q}
\begin{aligned}
&Q= p_{\mr{suc}} (18\cdot 18\cdot 18) + (1- p_{\mr{suc}}) (18\cdot 18\cdot 4 + Q), \\
\Rightarrow\; & Q = 4536 + \frac{1296}{p_{\mr{suc}}}
\end{aligned}
\end{equation}
When the physical error rate is $1\cdot 10^{-3}$, from \autoref{fig:SuccProb} we have that the successful probability of $(18,3,6)$-scheme is $5.4\%$ for $\varphi = 1\times 10^{-3}$. In this case, we have $Q = 28536$.

Recall that the rotation-state-injection is probabilistic: when the rotation injection fails, we need to perform the injection of $R_Z(2\varphi)$ gate, whose resource state preparation owns a smaller successful probability and larger spacetime cost $Q$. 
In practice, when the rotation-state-injection fails many times and the target angle is amplified to a value which is larger than $0.2$ rad, we can always introduce a wrapping procedure to reduce its value to the region of $[-\frac{\pi}{16}, \frac{\pi}{16})$ by applying a $T$ gate, which can be implemented by the magic state cultivation and injection.

Now, let's estimate the overall spacetime cost of the projection scheme when the target angle $\varphi$ is $1\cdot 10^{-3}$, when the RUS injection is also taken into account. The overall spacetime cost is given by
\begin{equation} \label{eq:Qtot}
Q_{\mr{tot}}(\varphi) = \frac{1}{2} Q_1(\varphi) + \frac{1}{4} Q_2(\varphi) + \frac{1}{8} Q_3(\varphi) +...,
\end{equation}
where $Q_k(\varphi)$ is the accumulated spacetime cost up to the $k$th trial with $Q_1(\varphi) = Q(\varphi)$ and
\begin{equation} \label{eq:Qkp1}
Q_{k+1}(\varphi) = Q_{k}(\varphi) + Q(\Lambda(2^{k-1}\varphi)).
\end{equation}
Here, $\Lambda(\varphi')$ is the wrapping function to wrap the angle value $\phi'$ into the region of $[-\frac{\pi}{16}, \frac{\pi}{16})$. From \autoref{eq:Q} we know that the single-round value of $Q(\varphi')$ heavily depends on the angle $\varphi'$. We remark that, we also need to take the spacetime cost of the magic-state-cultivation into account once the wrapping procedure is required when $|\varphi_k|>0.2$ rad.

\begin{figure}[htbp]
    \centering
    \includegraphics[width=0.5\textwidth]{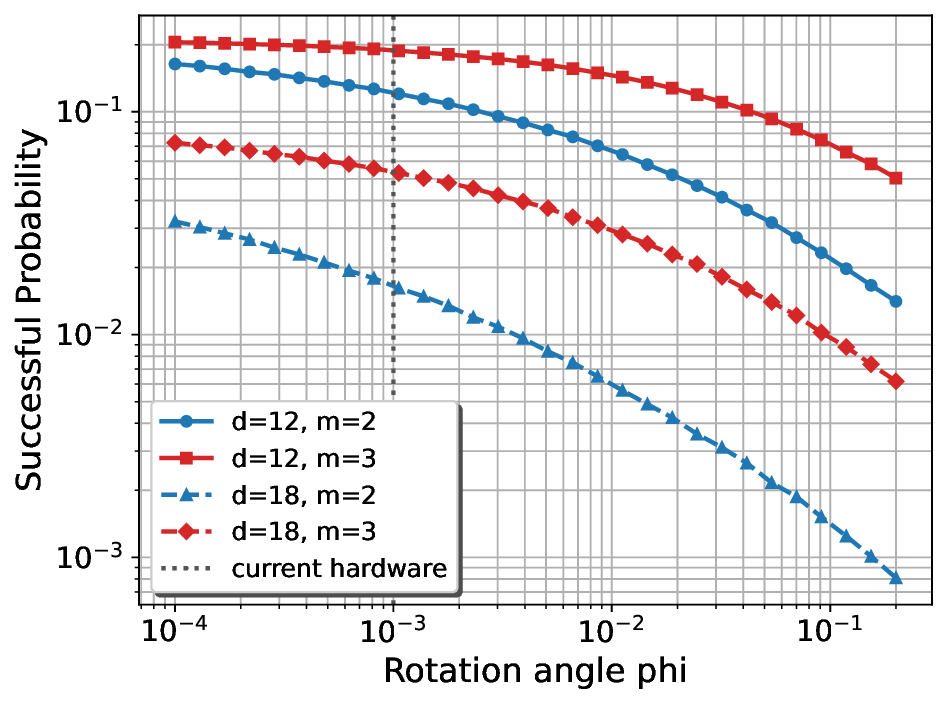}
    \caption{Successful probability of the $(m,k,d)$-projection scheme with different target angles $\varphi$.}
    \label{fig:SuccProbVSangle}
\end{figure}

For an accurate estimation of the spacetime cost, we now estimate the successful probability of the $(18,3,6)$-scheme with $p=1\cdot 10^{-3}$ and different target angle values $\varphi$, shown in \autoref{fig:SuccProbVSangle}. We can see that, when the rotation angle $\varphi$ increases from $1\cdot 10^{-4}$ to $2\cdot 10^{-1}$, the successful probability of the $(18,3,6)$-scheme decreases from $7.24\%$ to $0.612\%$. Based on the values of $p_{\mr{suc}}(\varphi)$ in \autoref{fig:SuccProbVSangle} and \autoref{eq:Qtot} and \autoref{eq:Qkp1}, we can numerically estimate $Q_{\mr{tot}}(\varphi)$ with $\varphi = 1\cdot 10^{-3}$. 
$$Q_{\mr{tot}}(\varphi=1\cdot 10^{-3})  = 70415. $$

We remark that, here we do not take the cost of coherent error cancellation introduced in \autoref{ssc:Prob_Coher_EC} into account, since the probability to implement the cancellation $P_L$ is very small compared to the other cost.

\end{appendix}

\end{document}